\def\Comments{0} %
\setlist[itemize]{leftmargin=*,label=$\triangleright$}
\def\plist@algorithm{Alg.\space}
\definecolor{Gred}{RGB}{219, 50, 54}
\definecolor{Ggreen}{RGB}{60, 186, 84}
\definecolor{Gblue}{RGB}{72, 133, 237}
\definecolor{Gyellow}{RGB}{247, 178, 16}
\definecolor{ToCgreen}{RGB}{0, 128, 0}
\definecolor{myGold}{RGB}{231,141,20}
\definecolor{myBlue}{rgb}{0.19,0.41,.65}
\definecolor{myPurple}{RGB}{175,0,124}
\providecommand{\Comments}{1}
\newcommand{\mytodo}[1]{\ifnum\Comments=1{#1}\fi}
\newcommand{\tableoftodos}{\ifnum\Comments=1 \listoftodos[Comments/To Do's] \fi}
\newtheorem{theorem}{Theorem}[section]
\newtheorem{lemma}[theorem]{Lemma}
\newtheorem{corollary}[theorem]{Corollary}
\theoremstyle{definition}
\newtheorem{definition}[theorem]{Definition}
\newtheorem{remark}[theorem]{Remark}
\newtheorem{assumption}[theorem]{Assumption}
\newcommand{\inbrace}[1]{\left \{ #1 \right \}}
\newcommand{\insquare}[1]{\left [ #1 \right ]}
\newcommand{\set}[1]{\inbrace{#1}}
\DeclareMathOperator*{\supp}{supp}
\newcommand{\eps}{\varepsilon}
\newcommand{\bbN}{{\mathbb N}}
\newcommand{\bbR}{{\mathbb R}}
\newcommand{\N}{\bbN}
\newcommand{\R}{\bbR}
\let\boldm\bm
\renewcommand{\bm}{{\boldm m}}
\newcommand{\calA}{\mathcal{A}}
\newcommand{\cA}{\calA}
\newcommand{\calD}{\mathcal{D}}
\newcommand{\calE}{\mathcal{E}}
\newcommand{\cE}{\calE}
\newcommand{\calM}{\mathcal{M}}
\newcommand{\cM}{\calM}
\newcommand{\hP}{\hat{P}}
\newcommand{\hQ}{\hat{Q}}
\newcommand{\of}{\overline{f}}
\newcommand{\oh}{\overline{h}}
\newcommand{\PLD}{\mathsf{PLD}}
\newcommand{\tPLD}{\widetilde{\PLD}}
\tikzstyle{dnode} = [rectangle, rounded corners, minimum width=3cm, minimum height=1cm,text centered, draw=black]
\tikzstyle{arrow} = [thick,->]
\begin{document}

\author[1]{Vadym Doroshenko}

\author[2]{Badih Ghazi}

\author[3]{Pritish Kamath}

\author[4]{Ravi Kumar}

\author[5]{Pasin Manurangsi}

\affil[1]{Google, E-mail: dvadym@google.com}

\affil[2]{Google Research, E-mail: badihghazi@gmail.com}

\affil[3]{Google Research, E-mail: pritish@alum.mit.edu}

\affil[4]{Google Research, E-mail: ravi.k53@gmail.com}

\affil[5]{Google Research, E-mail: pasin@google.com}

\title{\huge Connect the Dots: Tighter Discrete Approximations of Privacy Loss Distributions}

\runningtitle{Connect the Dots: Tighter Discrete Approximations of Privacy Loss Distributions}

\begin{abstract}
{The privacy loss distribution (PLD) provides a tight characterization of the privacy loss of a mechanism in the context of differential privacy (DP). Recent work~\cite{meiser2018tight,koskela2020computing,koskela2021tight,koskela2021computing} has shown that PLD-based accounting allows for tighter $(\eps, \delta)$-DP guarantees for many popular mechanisms compared to other known methods. A key question in PLD-based accounting is how to approximate any (potentially continuous) PLD with a PLD over any specified discrete support. 
\\
We present a novel approach to this problem. Our approach supports both {\em pessimistic} estimation, which overestimates the hockey-stick divergence (i.e., $\delta$) for any value of $\eps$, and {\em optimistic} estimation, which underestimates the hockey-stick divergence. Moreover, we show that our pessimistic estimate is the \emph{best} possible among all pessimistic estimates.
Experimental evaluation shows that our approach can work with much larger discretization intervals while keeping a similar error bound compared to previous approaches and yet give a better approximation than an existing method~\cite{meiser2018tight}.}
\end{abstract}
\keywords{privacy loss distribution, pessimistic approximation, optimistic approximation, privacy accounting, composition}

\journalname{Proceedings on Privacy Enhancing Technologies}
\DOI{10.2478/popets-2022-0095}
\startpage{1}
\received{2022-02-28}
\revised{2022-06-15}
\accepted{2022-06-16}

\journalyear{2022}
\journalvolume{}
\journalissue{4}

\maketitle

\section{Introduction}

Differential privacy (DP) \cite{dwork06calibrating,dwork2006our} has become widely adopted as a notion of privacy in analytics and machine learning applications, leading to numerous practical deployments including in industry \cite{erlingsson2014rappor,CNET2014Google, greenberg2016apple,dp2017learning, ding2017collecting} and government agencies \cite{abowd2018us}. The DP guarantee of a (randomized) algorithm is parameterized by two real numbers $\eps > 0$ and $\delta \in [0,1]$; the smaller these values, the more private the algorithm. 

 The appeal of DP stems from the strong privacy that it guarantees (which holds even if the adversary controls the inputs of all other users in the database), and from its nice mathematical properties. These include \emph{composition}, whose \emph{basic} form \cite{dwork2006our} says that executing  an $(\eps_1, \delta_1)$-DP algorithm and an $(\eps_2, \delta_2)$-DP algorithm and returning their results gives an algorithm that is $(\eps_1+\eps_2, \delta_1+\delta_2)$-DP. While  basic composition can be used to bound the DP properties of $k$ algorithms, it is known to not be tight, in particular for large values of $k$. In fact,  \emph{advanced} composition~\cite{dwork2010boosting} yields a general improvement, often translating to  $\approx \sqrt{k}$ reduction in the $\eps$ privacy bound of the composition of $k$ mechanisms each of which being $(\eps_0, \delta_0)$-DP.  Such a reduction can be sizeable in practical deployments, and therefore much research has been focusing on obtaining tighter composition bounds in various settings.
 
In the aforementioned setting where each mechanism has the same DP parameters, Kairouz et al.~\cite{kairouz2015composition} derived the optimal composition bound.
For the more general case of composing $k$ mechanisms whose privacy parameters are possibly different, i.e., the $i$th mechanism is guaranteed to be $(\eps_i, \delta_i)$-DP for some parameters $\eps_i, \delta_i$, computing the (exact) DP parameters of the composed mechanism is known to be \#P-complete \cite{murtagh2016complexity}.

While the results of~\cite{kairouz2015composition,murtagh2016complexity} provide a complete picture of privacy accounting when we assume only that the $i$th mechanism is $(\eps_i, \delta_i)$-DP, we can often arrive at tighter bounds when taking into account some additional information about the privacy loss of the mechanisms. For example, the Moments Accountant~\cite{abadi2016deep} and R\'{e}nyi DP~\cite{mironov2017renyi} methods keep track of (upper bounds on) the Renyi divergences of the output distributions on two adjacent databases; this allows one to compute upper bounds on the privacy parameters. These tools were originally introduced in the context of deep learning with DP (where the composition is over multiple iterations of the learning algorithm) in which they provide significant improvements over simply using the DP parameters of each mechanism. Other known tools that can also be used to upper-bound the privacy parameters of composed mechanisms include concentrated DP~\cite{dwork2016concentrated, bun2016concentrated} and its truncated variant~\cite{bun2018composable}. These methods are however all known not to be tight, and do not allow a high-accuracy estimation of the privacy parameters.

A numerical method for estimating the privacy parameters of a DP mechanism to an arbitrary accuracy, which has been the subject of several recent works starting with \cite{meiser2018tight, sommer2019privacy}, relies on the \emph{privacy loss distribution} (PLD). This is the probability mass function of the so-called privacy loss random variable in the case of discrete mechanisms, and its probability density function in the case of continuous mechanisms. From the PLD of a mechanism, one can easily obtain its (tight) privacy parameters. Moreover, a crucial property is that the PLD of a composition of multiple mechanisms is the convolution of their individual PLDs. Thus, \cite{koskela2020computing} used the Fast Fourier Transform (FFT) in order to speed up the computation of the PLD of the composition. Furthermore, explicit bounds on the approximation error for the resulting algorithm were derived in \cite{koskela2020computing,koskela2021tight,koskela2021computing,gopi2021numerical}. The PLD has been the basis of multiple open-source implementations from both industry and academia including \cite{DPBayes, GoogleDP, MicrosoftDP}. We note that the PLD can be applied to mechanisms whose privacy loss random variables do not have bounded moments, and thus for which composition cannot be analyzed using the Moments Accountant or R\'{e}nyi DP methods. An example such mechanism is DP-SGD-JL from \cite{bu2021fast}.

A crucial step in previous papers that use PLDs is in approximating the distribution so that it has finite support; this is especially needed in the case where the PLD is continuous or has a support of a very large size, as otherwise the FFT cannot be performed efficiently. With the exception of~\cite{gopi2021numerical}\footnote{Gopi et al.~\cite{gopi2021numerical} uses an estimator that is neither pessimistic nor optimistic, and instead derive their final values of $\delta$ using concentration bound-based error estimates.}, previous PLD-based accounting approaches~\cite{meiser2018tight,koskela2020computing,koskela2021tight,koskela2021computing} employ \emph{pessimistic estimators} and \emph{optimistic estimators} of PLDs. Roughly speaking, the former overestimate (i.e., give upper bounds on) $\delta$, whereas the latter underestimate $\delta$. For efficiency reasons, we would like the support of the approximate PLDs to be as small as possible, while retaining the accuracy of the estimates.

\subsubsection*{Our Contributions}

Our main contributions are the following:
\begin{itemize}
\item We obtain a new pessimistic estimator for a PLD and a given desired support set. Our pessimistic estimator is simple to construct and is based on the idea of ``connecting the dots'' of the hockey-stick curve at the discretization intervals.  Interestingly, we show that this is the best possible pessimistic estimator (and therefore is at least as good as previous estimators).  

\item We complement the above result by obtaining a new optimistic estimator that underestimates the PLD.  This estimator is based on the combination of a greedy algorithm and a convex hull computation.  In contrast to the pessimistic case, we prove that there is no ``best'' possible optimistic estimator.

\item We conduct an experimental evaluation showing that our estimators can work with much larger discretization intervals while keeping a similar error bound compared to previous approaches and yet give a better approximation than existing methods.

\end{itemize}

\section{Preliminaries}

For $k \in \N$, we use $[k]$ to denote $\{1, \dots, k\}$. For a set $S \subseteq \R \cup \{-\infty, +\infty\}$, we write $\exp(S)$ to denote $\{e^a \mid a \in S\}$. Similarly, for $S \subseteq \R_{\geq 0} \cup \{+ \infty\}$, we use $\log(S)$ to denote $\{\log(a) \mid a \in S\}$. Here we use the (standard) convention that $e^{+\infty} = +\infty$ and $e^{-\infty} = 0$; we also use the convention that $(+\infty) \cdot 0 = (-\infty) \cdot 0 = 0$. Moreover, we use $[x]_+$ as a shorthand for $\max\{x, 0\}$.

We use $\supp(P)$ to denote the support of a probability distribution $P$. For two distributions $P, Q$, we use $P \otimes Q$ to denote the product distribution of the two. Furthermore, when $P, Q$ are over an additive group, we use $P \ast Q$ to denote the convolution of the two distributions.

\subsection{Hockey-Stick Divergence and Curve} 

Let $\alpha \geq 0$.  
The {\em $\alpha$-hockey-stick divergence} between two probability distributions $P$ and $Q$ over a domain $\Omega$ is given as
\begin{equation}
D_{\alpha}(P || Q) := \sup_{S} \insquare{P(S) - \alpha \cdot Q(S)}_+, \label{eq:hockey-stick}
\end{equation}
where $\sup_S$ is over all measurable sets $S \subseteq \Omega$. 

For any pair $(A, B)$ of distributions, let $h_{(A, B)} : \bbR_{\ge 0} \cup \set{+ \infty} \to [0, 1]$ be its {\em hockey-stick curve}, given as $h_{(A, B)}(\alpha) := D_{\alpha}(A || B) = \sup_S [A(S) - \alpha \cdot B(S)]_+$.

The following characterization of hockey-stick curves, due to~\cite{zhu21optimal}, is helpful:

\begin{lemma}[\cite{zhu21optimal}] \label{lem:hockey-stick-characterization}
A function $h: \bbR_{\ge 0} \cup \set{+ \infty} \to [0, 1]$ is a hockey-stick curve for some pair of distributions if and only if the following three conditions hold:
\begin{enumerate}[(i)]
	\item $h$ is convex and non-increasing,
	\item $h(0) = 1$,
	\item $h(\alpha) \geq [1 - \alpha]_+$ for all $\alpha \in \bbR_{\ge 0} \cup \set{+ \infty}$.
\end{enumerate} 
\end{lemma}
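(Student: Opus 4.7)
The forward direction (necessity) should be a quick verification. For any pair $(A, B)$, the curve $h_{(A, B)}(\alpha) = \sup_S [A(S) - \alpha B(S)]_+$ is a pointwise supremum of functions $\alpha \mapsto [A(S) - \alpha B(S)]_+$, each of which is the maximum of two affine functions of $\alpha$ (hence convex) and non-increasing in $\alpha$ since $B(S) \ge 0$; both properties pass to the supremum, giving (i). Evaluating at $S = \Omega$ gives $[1 - \alpha]_+$, which yields (iii) immediately and also gives $h_{(A,B)}(0) \ge 1$; combined with the obvious bound $h_{(A,B)} \le 1$, this gives (ii).

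The interesting direction is sufficiency: given $h$ satisfying (i)--(iii), I would construct a concrete pair $(A, B)$ with $h_{(A, B)} = h$. The plan is a standard PLD-matching construction on a one-dimensional sample space. Set $p := \lim_{\alpha \to \infty} h(\alpha) \in [0, 1]$ and let $\psi(\alpha) := -h'_+(\alpha)$ denote the right derivative of $-h$, which exists on $[0, \infty)$ by convexity of $h$ and is non-increasing. Three properties are needed: $\psi \ge 0$ (from $h$ non-increasing), $\psi(0^+) \le 1$ (from (iii) together with convexity, since $h$ lies above $[1-\alpha]_+$ and matches it at $\alpha = 0$ by (ii)), and $\int_0^\infty \psi(t)\,dt = h(0) - h(+\infty) = 1 - p$. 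Now take $B$ to be the uniform distribution on $[0, 1]$ and define $f : [0, 1] \to [0, \infty)$ as the generalized inverse $f(u) := \inf\{\alpha \ge 0 : \psi(\alpha) \le u\}$, which is arranged so that $\Pr_B[f > \alpha] = \psi(\alpha)$ for every $\alpha \ge 0$. Finally, set $A := f \cdot B + p \cdot \delta_{x^*}$ for a fresh point $x^* \notin [0, 1]$; since $\int f\,dB = \int_0^\infty \psi(t)\,dt = 1 - p$, the measure $A$ is indeed a probability distribution.

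The last step is to verify that $h_{(A, B)} = h$. Any optimal $S$ will contain $x^*$ (this adds $p$ to $A(S)$ with no change to $B(S)$), so we reduce to $h_{(A, B)}(\alpha) = p + \sup_{S \subseteq [0,1]} [\int_S f\,dB - \alpha B(S)]_+$; a Neyman--Pearson-style argument pins the inner supremum at $S = \{f > \alpha\}$, giving $\int_{\alpha}^{\infty} \psi(t)\,dt = h(\alpha) - p$, and adding $p$ returns $h(\alpha)$. The main obstacle I anticipate is technical rather than conceptual: correctly handling the singular mass when $p > 0$, coping with discontinuities of $\psi$ at kinks of $h$ (which is exactly where the generalized-inverse definition earns its keep), and verifying the edge case $\alpha = +\infty$ under the stated conventions, where $h(+\infty) = p$ should coincide with $D_{+\infty}(A \| B) = \sup_{S : B(S) = 0} A(S) = A(\{x^*\}) = p$.
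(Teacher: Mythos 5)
The paper does not prove this lemma; it is imported from Zhu et al.~\cite{zhu21optimal} as a black box, so there is no in-paper argument to compare against. Your proof is correct and is essentially the canonical construction behind that result: necessity by observing that $h_{(A,B)}$ is a supremum of functions $\alpha\mapsto[A(S)-\alpha B(S)]_+$ (convex, non-increasing, and equal to $[1-\alpha]_+$ at $S=\Omega$), and sufficiency by realizing $\psi=-h'_+$ as the survival function $\Pr_B[f>\alpha]$ of the likelihood ratio $f=dA/dB$, with the residual mass $p=\lim_{\alpha\to\infty}h(\alpha)$ placed on a $B$-null atom; the identity $\int_0^1[f(u)-\alpha]_+\,du=\int_\alpha^\infty\psi(t)\,dt=h(\alpha)-p$ then closes the loop via Neyman--Pearson. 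All the quantitative ingredients you list check out: $0\le\psi\le\psi(0)\le 1$ follows from (ii)--(iii), $\int_0^\infty\psi=1-p$ from convexity, and $\{f>\alpha\}=\{u<\psi(\alpha)\}$ from the generalized inverse. The only point worth flagging is at $\alpha=+\infty$: your construction yields $h_{(A,B)}(+\infty)=p=\lim_{\alpha\to\infty}h(\alpha)$, so the sufficiency direction implicitly requires that condition (i) be read as forcing $h(+\infty)$ to equal this limit (a literal reading of (i)--(iii) would tolerate $h(+\infty)<\lim_{\alpha\to\infty}h(\alpha)$, which no hockey-stick curve achieves, as your own necessity argument under the convention $(+\infty)\cdot 0=0$ shows). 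That is a quirk of the statement rather than a gap in your argument.
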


\subsection{Differential Privacy}

The definition of differential privacy (DP) \cite{dwork06calibrating,dwork2006our}\footnote{For more background on differential privacy, we refer the reader to the monograph \cite{DworkR14}.} can be stated in terms of the hockey-stick divergence as follows.
\begin{definition}
For a notion of {\em adjacent datasets}, a mechanism $\calM$ is said to satisfy \emph{$(\eps, \delta)$-differential privacy} (denoted, $(\eps, \delta)$-DP) if for all adjacent datasets $S \sim S'$, it holds that $D_{e^{\eps}}(\calM(S) || \calM(S')) \le \delta$. 
\end{definition}

We point out that the techniques developed in this paper are general and do not depend on the specific adjacency relation. For the rest of the paper, for convenience, we always use $\alpha$ to denote $e^\eps$.

In most situations however, mechanisms satisfy $(\eps,\delta)$-DP for multiple values of $\eps$ and $\delta$. This is captured by the {\em privacy loss profile} $\delta_{\calM} : \bbR \to \bbR$ of a mechanism $\calM$ given as $\delta_{\calM}(\eps) := \sup_{S \sim S'} \calD_{e^\eps}(\calM(S) || \calM(S'))$.
It will be more convenient to consider the hockey-stick curve instead of the privacy profile. The only difference is that the hockey-stick curve takes $\alpha = e^\eps$ as parameter instead of $\eps$ as in the privacy profile.  %

\subsection{Dominating Pairs}

A central notion in our work is that of a {\em dominating pair} for a mechanism, defined by Zhu et al. \cite{zhu21optimal}.

\begin{definition}[Dominating Pairs~\cite{zhu21optimal}]
A pair $(P, Q)$ of distributions {\em dominates} a pair $(A, B)$ of distributions if it holds that
\[
\forall \alpha \ge 0 \ : \ D_{\alpha}(A || B) ~\le~ D_{\alpha}(P || Q);
\]
we denote this as $(A, B) \preceq (P, Q)$.  

A pair $(P, Q)$ of distributions is a {\em dominating pair for a mechanism} $\calM$ if for all adjacent datasets $S \sim S'$, it holds that $(\calM(S), \calM(S')) \preceq (P, Q)$; we denote this as $\calM \preceq (P, Q)$.

A pair $(P, Q)$ of distributions is a {\em tightly dominating pair} for $\calM$ if for every $\alpha \ge 0$, it holds that $D_{\alpha}(P || Q) = \sup_{S \sim S'}D_{\alpha}(\calM(S) || \calM(S'))$.
\end{definition}

Note that, by definition, $(P, Q) \succeq (A, B)$ if and only if $h_{(P, Q)}$ is no smaller than $h_{(A, B)}$ pointwise, i.e., $h_{(P, Q)}(\alpha) \geq h_{(A, B)}(\alpha)$ for all $\alpha \in \bbR_{\ge 0} \cup \set{+ \infty}$.

The following result highlights the importance of {\em dominating pairs}.
\begin{theorem}[\cite{zhu21optimal}] \label{thm:dom-pair}
If $\calM \preceq (P, Q)$ and $\calM' \preceq (P', Q')$, then $\calM \circ \calM' \preceq (P \otimes P', Q \otimes Q')$, where $\calM \circ \calM'$ is the composition of $\calM$ and $\calM'$. Furthermore, this holds even for {\em adaptive} composition.%
\footnote{In adaptive composition of $\calM' \circ \calM$, $\calM'$ can also take the output of $\cM$ as an auxiliary input. Here the $\cM' \preceq (P', Q')$ has to hold for all possible auxiliary input.}
\end{theorem}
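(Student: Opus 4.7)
The plan is to fix arbitrary adjacent datasets $S \sim S'$ and an arbitrary $\alpha \geq 0$, and show directly that $D_\alpha(\calM \circ \calM'(S) \,\|\, \calM \circ \calM'(S')) \leq h_{(P \otimes P', Q \otimes Q')}(\alpha)$, which by definition of dominating pairs gives the theorem. The whole argument rests on a disintegration identity for the hockey-stick curve of a product. For any four distributions $X, Y, X', Y'$ with densities $x, y, x', y'$ (with respect to some common reference measure), writing $D_\alpha(X \otimes X' \| Y \otimes Y') = \iint [x(u) x'(v) - \alpha y(u) y'(v)]_+ \, du \, dv$ and integrating over $v$ first gives
\[
h_{(X \otimes X', Y \otimes Y')}(\alpha) = \int x(u) \cdot h_{(X', Y')}\!\left(\alpha \cdot \tfrac{y(u)}{x(u)}\right) du,
\]
with the term treated as $0$ wherever $x(u) = 0$, and an analogous formula obtained by conditioning on the other coordinate.

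In the non-adaptive case the identity is applied twice. Starting from $(\calM(S) \otimes \calM'(S), \calM(S') \otimes \calM'(S'))$ and conditioning on the first coordinate, the hypothesis $(\calM'(S), \calM'(S')) \preceq (P', Q')$ applied pointwise under the integral upper-bounds the curve by $h_{(\calM(S) \otimes P', \calM(S') \otimes Q')}(\alpha)$. A second application of the identity, this time conditioning on the second coordinate, together with $(\calM(S), \calM(S')) \preceq (P, Q)$, peels off the outer factor and finishes the bound at $h_{(P \otimes P', Q \otimes Q')}(\alpha)$.

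For adaptive composition, the joint output density on input $S$ factors as $r(z) u_z(w)$, where $z$ is the output of $\calM(S)$ and $u_z$ is the density of $\calM'(S, z)$ (similarly $s(z) v_z(w)$ for $S'$). The same disintegration, now conditioning on $z$, yields
\[
h_{\text{composed}}(\alpha) = \int r(z) \cdot h_{(\calM'(S, z), \calM'(S', z))}\!\left(\alpha \cdot \tfrac{s(z)}{r(z)}\right) dz.
\]
Since the assumption $\calM' \preceq (P', Q')$ is required to hold for \emph{every} auxiliary input $z$, the inner curve is pointwise at most $h_{(P', Q')}$ evaluated at the same scaled argument, reducing the situation to exactly the second half of the non-adaptive argument, which closes the bound.

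The main obstacle is the measure-theoretic bookkeeping around the disintegration identity, specifically handling the points where $x(u) = 0$ but $y(u) > 0$: these contribute $0$ to the original positive-part integrand and must be verified to contribute $0$ to the right-hand side as well, using the paper's conventions $e^{+\infty} = +\infty$ and $(+\infty) \cdot 0 = 0$ together with the fact that $h$ is bounded by $1$. Once this is in place, everything follows from Fubini--Tonelli and the pointwise comparison of hockey-stick curves guaranteed by the definition of $\preceq$.
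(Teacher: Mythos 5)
The paper does not prove this statement at all: it is imported verbatim from Zhu et al.~\cite{zhu21optimal}, so there is no in-paper argument to compare against. Judged on its own merits, your proof is correct and is essentially the standard argument for this result. The disintegration identity $h_{(X\otimes X',Y\otimes Y')}(\alpha)=\int x(u)\,h_{(X',Y')}\bigl(\alpha\,y(u)/x(u)\bigr)\,du$ is valid (the inner integral for fixed $u$ with $x(u)>0$ is exactly $x(u)\cdot D_{\alpha y(u)/x(u)}(X'\|Y')$, and the $x(u)=0$ fibers contribute $0$ on both sides since $h$ is bounded by $1$), the two pointwise applications of $\preceq$ under the integral sign are legitimate because domination is exactly a pointwise inequality of hockey-stick curves, and the adaptive case is handled correctly by noting that the conditional pair $(\calM'(S,z),\calM'(S',z))$ is dominated by $(P',Q')$ uniformly in the auxiliary input $z$, which is precisely the hypothesis in the footnote. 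This matches the proof in the cited reference, so there is nothing to flag beyond the observation that you proved something the paper only cites.
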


Thus, in order to upper bound the privacy loss profile $\delta_{\calM}(\eps) := \sup_{S \sim S'} D_{e^{\eps}}(\calM(S) || \calM(S'))$, it suffices to compute $D_{e^{\eps}}(P || Q)$ for a dominating $(P, Q)$ pair for $\calM$. %

\subsection{Privacy Loss Distribution}

Privacy Loss Distribution (PLD)~\cite{dwork2016concentrated,sommer2019privacy} is yet another way to represent the privacy loss. For simplicity, we give a definition below specific to discrete distributions $P, Q$; it can be extended, e.g., to continuous distributions by replacing the probability masses $P(o), Q(o)$ with probability densities of $P, Q$ at $o$.

\begin{definition}[\cite{dwork2016concentrated}]
The \emph{privacy loss distribution} (PLD) of a pair  $(P, Q)$ of discrete distributions, denoted by $\PLD_{(P, Q)}$, is the distribution of the \emph{privacy loss random variable} $L$ generated by drawing $o \sim P$ and let $L = P(o)/Q(o)$.
\end{definition} 

As alluded to earlier, PLD can be used to compute the hockey-stick divergence~\cite{meiser2018tight,sommer2019privacy} (proof provided in \Cref{apx:proofs} for completeness):
\begin{restatable}{lemma}{lemPldHockeyStick} \label{lem:pld-to-hockey-stick}
For any pair $(P, Q)$ of discrete distributions and $\eps \in \R \cup \{-\infty, +\infty\}$, we have
\begin{align*}
	D_{e^\eps}(P||Q) := \sum_{\eps' \in \supp(\PLD_{(P, Q)})} [1 - e^{\eps - \eps'}]_+ \cdot \PLD_{(P, Q)}(\eps').
\end{align*}
\end{restatable}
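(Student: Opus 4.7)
The plan is to expand the supremum in the definition of $D_{e^\eps}(P||Q)$ outcome-by-outcome, factor out $P(o)$ from each summand, rewrite the remaining factor in terms of the privacy loss $\log(P(o)/Q(o))$, and then group the sum by the level sets of the privacy loss random variable.

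First I would observe that, because $P,Q$ are discrete, the supremum in \eqref{eq:hockey-stick} is attained by the set $S^\star = \{o \in \Omega : P(o) > e^\eps Q(o)\}$, so
\[
D_{e^\eps}(P||Q) \;=\; \sum_{o \in \supp(P)} [P(o) - e^\eps Q(o)]_+.
\]
Outcomes with $P(o) = 0$ may be dropped since they contribute $[-e^\eps Q(o)]_+ = 0$. Next I would factor $P(o)$ out of each term, which is legitimate on $\supp(P)$, to get
\[
D_{e^\eps}(P||Q) \;=\; \sum_{o \in \supp(P)} P(o) \cdot \bigl[1 - e^\eps \cdot Q(o)/P(o)\bigr]_+.
\]

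Now I would introduce the privacy loss $L(o) := \log(P(o)/Q(o))$ for $o \in \supp(P)$, using the convention $L(o) = +\infty$ when $Q(o) = 0$, which is consistent with $e^{\eps - L(o)} = 0$ (and hence $[1 - e^{\eps - L(o)}]_+ = 1$) in that case. This gives
\[
D_{e^\eps}(P||Q) \;=\; \sum_{o \in \supp(P)} P(o) \cdot \bigl[1 - e^{\eps - L(o)}\bigr]_+.
\]
Finally, I would partition $\supp(P)$ according to the level sets $\{o : L(o) = \eps'\}$ for $\eps' \in \supp(\PLD_{(P,Q)})$. Since $[1 - e^{\eps - \eps'}]_+$ depends only on $\eps'$, it can be pulled outside the inner sum, and by the definition of PLD we have $\sum_{o : L(o) = \eps'} P(o) = \PLD_{(P,Q)}(\eps')$. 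Combining these yields the claimed identity.

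There is no real obstacle; the only subtleties are bookkeeping conventions: (i) handling $Q(o) = 0$ via $L(o) = +\infty$ and the stated convention $(+\infty)\cdot 0 = 0$; (ii) correctly restricting the sum to $\supp(P)$ (since $\PLD_{(P,Q)}$ is defined by sampling $o \sim P$); and (iii) verifying that the extreme atom $\eps' = +\infty$, contributing mass $\Pr_{o \sim P}[Q(o)=0]$, is correctly accounted for in both expressions.
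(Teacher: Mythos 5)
Your proposal is correct and follows essentially the same route as the paper's proof: expand the supremum outcome-by-outcome into $\sum_\omega [P(\omega) - e^\eps Q(\omega)]_+$, factor out $P(\omega)$ to get $[1 - e^{\eps - \log(P(\omega)/Q(\omega))}]_+ \cdot P(\omega)$, and group by the level sets of the privacy loss. Your extra care with the $P(o)=0$ and $Q(o)=0$ edge cases is consistent with the paper's stated conventions and only makes the argument more explicit.
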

Note that the RHS term above depends only on $\PLD_{(P, Q)}$ and not directly on $P, Q$ themselves. For convenience, we will abbreviate the RHS term as $D_{e^\eps}(\PLD_{(P, Q)})$.

The main advantage in dealing with PLDs is that composition simply corresponds to convolution of PLDs~\cite{meiser2018tight,sommer2019privacy}:
\begin{lemma} \label{lem:pld-composition}
Let $P, Q, P', Q'$ be discrete distributions. Then we have
\begin{align*}
	\PLD_{(P \otimes P', Q \otimes Q')} = \PLD_{(P, Q)} \ast \PLD_{(P', Q')}. 
\end{align*}
\end{lemma}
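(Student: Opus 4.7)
The plan is to unfold both sides directly from the definitions and exploit independence. First I would fix the sampling procedure on the left-hand side: a draw from $P \otimes P'$ can be realized as independent draws $o \sim P$ and $o' \sim P'$. The associated privacy loss random variable is
\[
L_{(P \otimes P', Q \otimes Q')} \;=\; \log\frac{(P \otimes P')(o, o')}{(Q \otimes Q')(o, o')} \;=\; \log\frac{P(o)\,P'(o')}{Q(o)\,Q'(o')}.
\]
Using $\log$ of a product equals sum of logs, this decomposes as $\log(P(o)/Q(o)) + \log(P'(o')/Q'(o'))$, i.e.\ $L_{(P,Q)} + L_{(P',Q')}$, where $L_{(P,Q)}$ and $L_{(P',Q')}$ are the privacy loss random variables underlying $\PLD_{(P,Q)}$ and $\PLD_{(P',Q')}$ respectively.

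Next I would invoke independence: since the sampling factorizes as $o \sim P$ independently of $o' \sim P'$, the two summands $L_{(P,Q)}$ and $L_{(P',Q')}$ are independent. The distribution of the sum of two independent real-valued random variables is, by definition, the convolution of their distributions, so
\[
\PLD_{(P \otimes P', Q \otimes Q')} \;=\; \PLD_{(P,Q)} \ast \PLD_{(P',Q')}.
\]

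The only real subtlety is bookkeeping with the extended-real value $+\infty$ (which can arise when $Q(o)=0$ but $P(o) > 0$), and ensuring that the conventions $e^{+\infty} = +\infty$, $(+\infty) \cdot 0 = 0$ used in the preliminaries make the identity $\log(ab) = \log a + \log b$ go through on $\R \cup \{-\infty, +\infty\}$ in all the cases that can be sampled with positive probability under $P \otimes P'$. I would dispose of this by a short case check: if either $P(o) = 0$ or $P'(o') = 0$ then $(o,o')$ is sampled with probability zero and contributes nothing; otherwise the numerator is finite and positive, and the logarithm identity holds in the extended reals. No other obstacle arises, so the argument is essentially one line plus this sanity check.
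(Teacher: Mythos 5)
Your argument is correct and is exactly the standard proof: the paper itself does not include a proof of \Cref{lem:pld-composition} (it defers to the cited references \cite{meiser2018tight,sommer2019privacy}), and the argument there is the same one you give --- the privacy loss random variable of the product pair factors as an independent sum of the individual privacy loss random variables, and the distribution of an independent sum is the convolution. Your case check on the $+\infty$ atoms is the right sanity check (in particular, $-\infty$ only arises on outcomes of probability zero under $P \otimes P'$, so no $\infty - \infty$ ambiguity occurs); note also that you have implicitly corrected the paper's definition, which writes $L = P(o)/Q(o)$ where it means $L = \log(P(o)/Q(o))$, as the appendix proof of \Cref{lem:pld-to-hockey-stick} confirms.
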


\subsection{Accounting Framework via Dominating Pairs and PLDs}
\label{sec:accounting-framework}

Dominating pairs and PLDs form a powerful set of building blocks to perform privacy accounting. Recall that in privacy accounting, we typically have a mechanism $\cM = \cM_1 \circ \cdots \circ \cM_k$ where each $\cM_i$ is a ``simple'' mechanism (e.g., Laplace or Gaussian mechanisms) and we would like to understand the privacy profile of $\cM$.

The approach taken in previous works~\cite{meiser2018tight,koskela2020computing,koskela2021tight,koskela2021computing} can be summarized as follows.%
\footnote{Note that their results are not phrased in terms of dominating pairs, since the latter is only defined and studied in~\cite{zhu21optimal}. Nonetheless, these previous works use similar (but more restricted) notions for ``worst case'' distributions.}
\begin{enumerate}
\item Identify a dominating pair $(A_i, B_i)$ for each $\cM_i$.
\item Find a \emph{pessimistic estimate}\footnote{We remark that this is slightly inaccurate as the ``pessimistic estimate'' in previous works may actually not be a valid PLD; please see \Cref{sec:pb-pessimistic} for a more detailed explanation.} $(P^{\uparrow}_i, Q^{\uparrow}_i) \succeq (A_i, B_i)$ such that $\PLD_{(P^{\uparrow}_i, Q^{\uparrow}_i)}$ is supported on a certain set of prespecified values.
\item Compute $\PLD^{\uparrow} = \PLD_{(P^{\uparrow}_1, Q^{\uparrow}_1)} \ast \cdots \ast \PLD_{(P^{\uparrow}_k, Q^{\uparrow}_k)}$.
\item Compute $\delta^{\uparrow}(\eps)$ from $\PLD^{\uparrow}$ using the formula from \Cref{lem:pld-to-hockey-stick}.
\end{enumerate}

By \Cref{thm:dom-pair} and \Cref{lem:pld-composition,lem:pld-to-hockey-stick}, we can conclude that $\delta^{\uparrow}(\eps) \geq \delta_\cM(\eps)$; in other words, the mechanism $\cM$ is $(\eps, \delta^{\uparrow}(\eps))$-DP as desired.

Note that the reason that one needs $\PLD_{(P^{\uparrow}_i, Q^{\uparrow}_i)}$ to have finite support in the second step is so that it can be computed efficiently via the Fast Fourier Transform (FFT). Currently, there is only one approach used in previous works, called \emph{Privacy Buckets}~\cite{meiser2018tight}. Roughly speaking, this amounts simply to rounding the PLD up to the nearest point in the specified support set. (See \Cref{sec:pb-pessimistic} for a more formal description.)

While the above method gives us an upper bound $\delta^{\uparrow}(\eps)$ of $\delta_\cM(\eps)$, there are scenarios where we would like to find a \emph{lower bound} on $\delta_\cM(\eps)$; for example, this can be helpful in determining how tight our upper bound is. Computing such a lower bound is also possible under the similar framework, except that we need to know a list of tightly dominating pairs $(A^*_1, B^*_1), \dots, (A^*_k, B^*_k)$ such that there exists an adjacent datasets $S, S'$ for which $D_{e^\eps}(\cM(S) || \cM(S')) = D_{e^\eps}(A^*_1 \otimes \cdots \otimes A^*_k || B^*_1 \otimes \cdots \otimes B^*_k)$. If such tightly dominating pairs can be identified, then we can follow the same blueprint as above except we replace a pessimistic estimate with an \emph{optimistic estimate} $(P^{\downarrow}, Q^{\downarrow}) \preceq (A^*_i, B^*_i)$. This would indeed gives us a lower bound $\delta^{\downarrow}(\eps)$ of $\delta_\cM(\eps)$.

The described framework is illustrated in \Cref{fig:accounting-framework}.

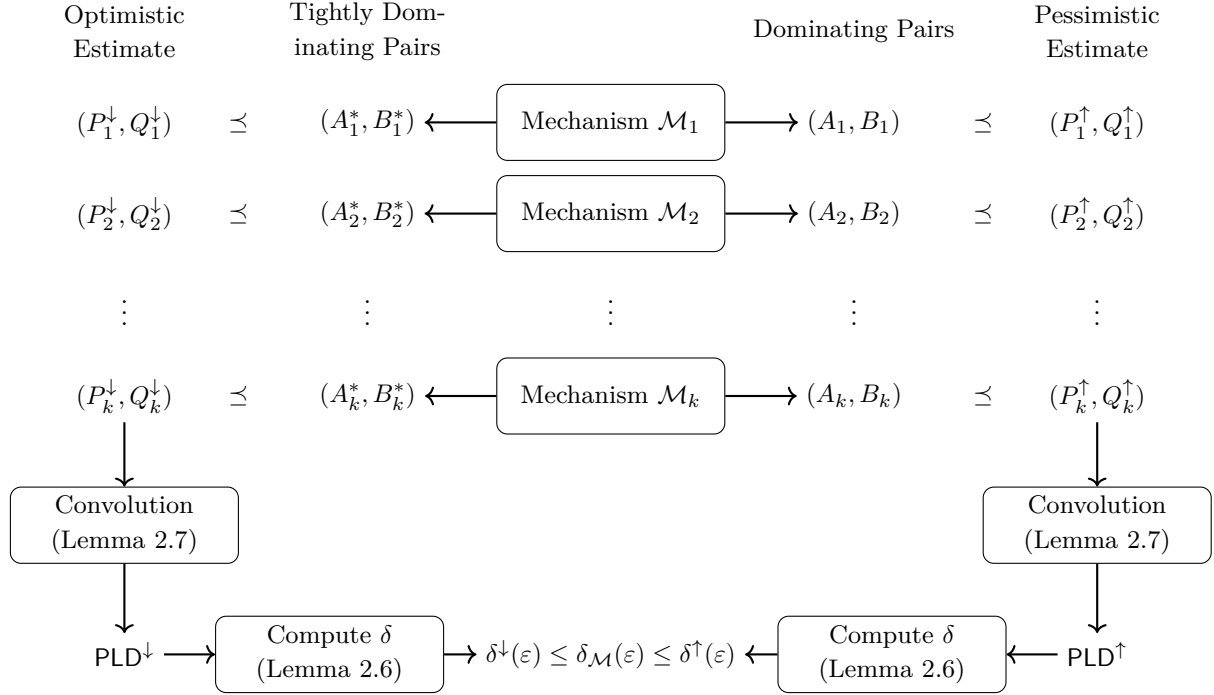
\begin{figure*}
\centering
\begin{tikzpicture}[node distance=1.2cm]
	\node (m1) [dnode] {Mechanism $\cM_1$};
	\node (m2) [dnode,below of=m1] {Mechanism $\cM_2$};
	\node (mdot) [below of=m2] {$\vdots$};
	\node (mk) [dnode,below of=mdot] {Mechanism $\cM_k$};
	\node(d1) [right of=m1,xshift=2cm] {$(A_1, B_1)$};
	\draw [arrow] (m1) -- (d1);
	\node(d2) [right of=m2,xshift=2cm] {$(A_2, B_2)$};
	\draw [arrow] (m2) -- (d2);
	\node(ddot) [right of=mdot,xshift=2cm] {$\vdots$};
	\node(dk) [right of=mk,xshift=2cm] {$(A_k, B_k)$};
	\draw [arrow] (mk) -- (dk);
	\node(dp) [above of=d1,text width=3cm,align=center] {Dominating Pairs};
	\node(p1) [right of=d1,xshift=0.5cm] {$\preceq$};
	\node(p2) [right of=d2,xshift=0.5cm] {$\preceq$};
	\node(pk) [right of=dk,xshift=0.5cm] {$\preceq$};
	\node(p1) [right of=d1,xshift=2cm] {$(P^{\uparrow}_1, Q^{\uparrow}_1)$};
	\node(p2) [right of=d2,xshift=2cm] {$(P^{\uparrow}_2, Q^{\uparrow}_2)$};
	\node(pdot) [right of=ddot,xshift=2cm] {$\vdots$};
	\node(pk) [right of=dk,xshift=2cm] {$(P^{\uparrow}_k, Q^{\uparrow}_k)$};
	\node(pconv) [dnode,below of=pk,text width=2cm,yshift=-0.5cm] {Convolution (\Cref{lem:pld-composition})};
	\draw [arrow] (pk) -- (pconv);
	\node (cpconv) [below of=pconv,yshift=-0.5cm] {$\PLD^{\uparrow}$};
	\draw [arrow] (pconv) -- (cpconv);
	\node (pcompdelta) [dnode,left of=cpconv,text width=2cm,xshift=-1.5cm] {Compute $\delta$ (\Cref{lem:pld-to-hockey-stick})};
	\draw [arrow] (cpconv) -- (pcompdelta);
	\node(pp) [above of=p1,text width=2cm,align=center] {Pessimistic Estimate};
	\node(td1) [left of=m1,xshift=-2cm] {$(A^*_1, B^*_1)$};
	\draw [arrow] (m1) -- (td1);
	\node(td2) [left of=m2,xshift=-2cm] {$(A^*_2, B^*_2)$};
	\draw [arrow] (m2) -- (td2);
	\node(tddot) [left of=mdot,xshift=-2cm] {$\vdots$};
	\node(tdk) [left of=mk,xshift=-2cm] {$(A^*_k, B^*_k)$};
	\draw [arrow] (mk) -- (tdk);
	\node(tdp) [above of=td1,text width=3cm,align=center] {Tightly Dominating Pairs};
	\node(o1) [left of=td1,xshift=-.5cm] {$\preceq$};
	\node(o2) [left of=td2,xshift=-.5cm] {$\preceq$};
	\node(ok) [left of=tdk,xshift=-.5cm] {$\preceq$};
	\node(o1) [left of=td1,xshift=-2cm] {$(P^{\downarrow}_1, Q^{\downarrow}_1)$};
	\node(o2) [left of=td2,xshift=-2cm] {$(P^{\downarrow}_2, Q^{\downarrow}_2)$};
	\node(odot) [left of=tddot,xshift=-2cm] {$\vdots$};
	\node(ok) [left of=tdk,xshift=-2cm] {$(P^{\downarrow}_k, Q^{\downarrow}_k)$};
	\node(oconv) [dnode,below of=ok,text width=2cm,yshift=-0.5cm] {Convolution (\Cref{lem:pld-composition})};
	\draw [arrow] (ok) -- (oconv);
	\node (coconv) [below of=oconv,yshift=-0.5cm] {$\PLD^{\downarrow}$};
	\draw [arrow] (oconv) -- (coconv);
	\node (ocompdelta) [dnode,right of=coconv,text width=2cm,xshift=1.5cm] {Compute $\delta$ (\Cref{lem:pld-to-hockey-stick})};
	\draw [arrow] (coconv) -- (ocompdelta);
	\node(op) [above of=o1,text width=2cm,align=center] {Optimistic Estimate};
	\node(delta) [below of=mk,yshift=-2.2cm] {$\delta^{\downarrow}(\eps) \leq \delta_{\cM}(\eps) \leq \delta^{\uparrow}(\eps)$};
	\draw [arrow] (pcompdelta) -- (delta);
	\draw [arrow] (ocompdelta) -- (delta);
\end{tikzpicture}
\caption{
	Illustration of the framework for privacy accounting using PLDs and the notion of (tightly) dominating pairs. 
	\label{fig:accounting-framework}%
}
\end{figure*}

\section{Finitely-Supported PLDs}

As alluded to in the previous section, to take full advantage of FFT, it is important that a PLD is discretized in  a way such that its support is finite. For exposition purposes, we will assume that the discretization points include $-\infty$ and $+\infty$.  We will use $\cE$ for discretization points for the PLD and $\cA$ for the corresponding discretization points for the hockey-stick curve:

\begin{assumption}
Let $\cA = \{\alpha_0, \dots, \alpha_k\}$ be any finite subset of $\bbR_{\ge 0} \cup \set{+ \infty}$ such that $0 = \alpha_0 < \alpha_1 < \cdots < \alpha_k = +\infty$, and let $\cE = \{\eps_0, \dots, \eps_k\}$ be such that $\eps_0 = -\infty, \eps_k = +\infty$ and $\eps_i = \log(\alpha_i)$ for all $i \in [k - 1]$.
\end{assumption}

For the remaining of this work, we will operate under the above assumption and we will not state this explicitly for brevity.

Using the characterization in \Cref{lem:hockey-stick-characterization}, we can also characterize the hockey-stick curve of PLDs whose support is on a prespecified finite set $\cE$, stated more precisely in the lemma below. Furthermore, the ``inverse'' part of the lemma yields an algorithm (\Cref{alg:discretize}) that can construct $A, B$ given $(h(\alpha_i))_{\alpha_i \in \cA}$ which we will use in the sequel.

\begin{lemma} \label{lem:hockey-stick-characterization-finite-support}
A function $h: \bbR_{\ge 0} \cup \set{+ \infty} \to [0, 1]$ is a hockey-stick curve for some pair $(P, Q)$ such that $\supp(\PLD_{(P, Q)}) \subseteq \cE$ if and only if the following conditions hold:
\begin{enumerate}[(i)]
	\item $h$ is convex and non-increasing,
	\item $h(0) = 1$,
	\item $h(\alpha_i) \geq [1 - \alpha_i]_+$ for all $\alpha_i \in \cA$,
	\item For all $i \in [k - 1]$, the curve $h$ restricted to $[\alpha_{i - 1}, \alpha_i]$ is linear: i.e., for all $\alpha \in [\alpha_{i - 1}, \alpha_i)$, we have $h(\alpha) = \frac{\alpha - \alpha_i}{\alpha_{i - 1} - \alpha_i} \cdot h(\alpha_{i - 1}) + \frac{\alpha_{i - 1} - \alpha}{\alpha_{i - 1} - \alpha_i} \cdot h(\alpha_i)$.
	\item For all $\alpha > \alpha_{k - 1}$, $h(\alpha) = h(+\infty)$.
\end{enumerate} 
\end{lemma}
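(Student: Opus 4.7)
The plan is to prove the two directions separately. The forward direction ($\Rightarrow$) follows by reading off conditions (iv)--(v) from the explicit formula for the hockey-stick curve given by \Cref{lem:pld-to-hockey-stick}, while the reverse direction ($\Leftarrow$) requires an explicit construction of $(P, Q)$ whose PLD atoms match the slope changes of $h$.

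For the forward direction, I would suppose $\supp(\PLD_{(P,Q)}) \subseteq \cE$ and set $p_i := \PLD_{(P,Q)}(\eps_i)$. Noting that the PLD can never put mass at $-\infty$ (since the privacy loss random variable is drawn from $P$, every realized $o$ satisfies $P(o) > 0$), rewriting \Cref{lem:pld-to-hockey-stick} with $\alpha = e^\eps$ and $\alpha_i = e^{\eps_i}$ yields
\[
h(\alpha) \;=\; \sum_{i=1}^{k} [1 - \alpha/\alpha_i]_+ \cdot p_i.
\]
Each summand is convex, non-increasing, and piecewise-linear with its sole breakpoint at $\alpha_i \in \cA$, so these properties transfer to $h$: on each $[\alpha_{j-1},\alpha_j]$ every hinge is in a linear (possibly zero) regime, giving (iv); for $\alpha > \alpha_{k-1}$ only the $i = k$ term is non-zero and contributes the constant $p_k = h(+\infty)$, giving (v). Conditions (i)--(iii) then follow from \Cref{lem:hockey-stick-characterization}.

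For the reverse direction, let $s_j$ denote the slope of $h$ on $[\alpha_{j-1},\alpha_j]$ for each $j \in [k-1]$ and set $s_k := 0$ (the slope of $h$ on $[\alpha_{k-1},+\infty]$ by (v)). Define
\[
p_j \,:=\, \alpha_j (s_{j+1}-s_j) \text{ for } j \in [k-1], \qquad p_k \,:=\, h(+\infty).
\]
Non-negativity $p_j \ge 0$ follows from convexity (i); a short telescoping computation using $h(0)=1$ gives $\sum_j p_j = 1$; and telescoping again yields $Z := \sum_{j=1}^{k-1} p_j/\alpha_j = -s_1$. Condition (iii) applied at $\alpha_1$, together with $h(0) = 1$ and the linearity of $h$ on $[0,\alpha_1]$, forces $s_1 \ge -1$, hence $Z \le 1$. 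I then construct $(P, Q)$ on the $(k+1)$-point sample space $\{\omega_1,\dots,\omega_{k-1},\omega_\infty,\omega_*\}$ by setting $P(\omega_i) = p_i$, $Q(\omega_i) = p_i/\alpha_i$ for $i \in [k-1]$, $P(\omega_\infty) = p_k$, $Q(\omega_\infty) = 0$, $P(\omega_*) = 0$, and $Q(\omega_*) = 1 - Z$. The bound $Z \le 1$ makes $Q$ a valid distribution and $\sum_i p_i = 1$ makes $P$ one. The privacy loss at $\omega_i$ is exactly $\eps_i$ and $P(\omega_*) = 0$, so $\PLD_{(P, Q)}$ is supported on $\cE$ with the prescribed atom masses; applying \Cref{lem:pld-to-hockey-stick} then recovers the ``sum of hinges'' formula, which matches $h$ at $+\infty$ and has the same slopes on every $[\alpha_{j-1},\alpha_j]$, so $h_{(P,Q)} \equiv h$.

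I expect the main obstacle to be verifying $Z \le 1$: conditions (i), (ii), (iv), (v) on their own do not preclude $|s_1| > 1$, so condition (iii) is precisely what makes the naive construction succeed by ensuring $Q(\omega_*) \ge 0$. The remaining steps --- non-negativity, normalization, slope-matching --- are routine bookkeeping from the definitions of $p_j$.
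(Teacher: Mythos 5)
Your proof is correct and follows essentially the same route as the paper: your atoms $p_j = \alpha_j(s_{j+1}-s_j)$ are exactly the masses produced by the paper's \textsc{DiscretizePLD} construction (which sets $Q(\alpha_j)$ to the second difference of $h$, i.e., $s_{j+1}-s_j$, and $P(\alpha_j)=\alpha_j Q(\alpha_j)$), and your check $Z \le -s_1 \le 1$ via condition (iii) is precisely the paper's verification that $Q(0) \ge 0$. The only cosmetic difference is that you verify $h_{(P,Q)}=h$ by matching slopes and the value at $+\infty$, whereas the paper evaluates $D_\alpha(P\|Q)$ directly on each interval; both are fine.
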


A consequence of \Cref{lem:hockey-stick-characterization-finite-support} is that $h$ is completely specified by $h(\cA)$. More formally, given $f: \cA \to [0, 1]$, the only possible extension of $f$ to a hockey-stick curve is its \emph{piecewise-linear} extension $\of$ defined by
\begin{align*}
&\of(\alpha) := \\
&\begin{cases}
	\frac{\alpha - \alpha_i}{\alpha_{i - 1} - \alpha_i} f(\alpha_{i - 1}) + \frac{\alpha_{i - 1} - \alpha}{\alpha_{i - 1} - \alpha_i} f(\alpha_i) &\text{if } \alpha \in [\alpha_{i - 1}, \alpha_i) \\
	f(+\infty) &\text{if } \alpha > \alpha_{k - 1},
\end{cases}
\end{align*}
for all $\alpha \in \R_{\geq 0} \cup \{+\infty\}$. Note that this $\of$ may still not be a hockey-stick curve, as it may not be convex.

\begin{proof}[Proof of \Cref{lem:hockey-stick-characterization-finite-support}]
$(\Leftarrow)$ We start with the converse direction by describing an algorithm that, given $h(\alpha_0), \dots, h(\alpha_k)$, can construct the desired $P, Q$. In fact, we will construct distributions $P$ and $Q$ with supports contained in $\cA$ satisfying the following:
\begin{enumerate}[($\Pi_1$),leftmargin=12mm]
	\item $P(\alpha) = \alpha \cdot Q(\alpha)$ for all $\alpha \in \calA \setminus \{+\infty\}$, \label{property:p1}
	\item $Q(\infty) = 0$, \label{property:p2}
	\item $D_{\alpha}(P || Q) = h(\alpha)$ for all $\alpha \in \calA$. \label{property:p3}
\end{enumerate}
The first two conditions imply that $\supp(\PLD_{(P,Q)}) \subseteq \cE$ and the last condition implies that $h_{(P, Q)} = h$ as desired.

The construction of $P, Q$ is described in \Cref{alg:discretize}. 

\begin{algorithm}
	\caption{PLD Discretization.} \label{alg:discretize}
	\begin{algorithmic}
		\Procedure{DiscretizePLD}{$h(\alpha_0), \dots, h(\alpha_k)$}
		\State $Q(\alpha_k) \gets 0$
		\Comment{$\alpha_k = +\infty$}
		\For{$i = k-1, \dots, 1$}
		\State $Q(\alpha_i) \gets \frac{h(\alpha_{i-1}) - h(\alpha_i)}{\alpha_{i} - \alpha_{i-1}} - \frac{h(\alpha_i) - h(\alpha_{i+1})}{\alpha_{i+1} - \alpha_i}$
		\EndFor
		\State $Q(\alpha_0) \leftarrow 1 - \sum_{j \in [k - 1]} Q(\alpha_j)$
		\Comment{$\alpha_0 = 0$}
		\State $P(\alpha_0) \leftarrow 0$ \Comment{$\alpha_0 = 0$}
		\For{$i = 1, \dots, k - 1$}
		\State $P(\alpha_i) \leftarrow \alpha_i \cdot Q(\alpha_i)$ 
		\EndFor
		\State $P(\alpha_k) \gets h(\alpha_k)$
		\Comment{$\alpha_k = +\infty$}
		
		\EndProcedure
	\end{algorithmic}
\end{algorithm}

Let us now verify that both $P$ and $Q$ are valid probability distributions. First, notice that $Q(\alpha_i) \geq 0$ due to the convexity of $h$. Furthermore, 
\begin{align*}
	Q(0) = 1 - \sum_{j \in [k - 1]} Q(\alpha_j) = 1 - \frac{1 - h(\alpha_1)}{\alpha_1} \geq 0,
\end{align*}
where the last inequality follows from (iii). Thus, $Q$ is indeed a probability distribution. As for $P$, notice that $P(\alpha_i) \geq 0$ for all $\alpha_i \in \cA$. Finally, we also have
\begin{align*}
	&\sum_{i \in \{0, \dots, k\}} P(\alpha_i) \\
	&= h(+\infty) + \sum_{i \in \{0, \dots, k - 1\}} \alpha_{i} \cdot Q(\alpha_{i}) \\
	&= h(+\infty) + \\ \quad &\sum_{i \in \{0, \dots, k - 1\}} \alpha_{i} \left(\frac{h(\alpha_{i-1}) - h(\alpha_i)}{\alpha_{i} - \alpha_{i-1}} - \frac{h(\alpha_i) - h(\alpha_{i+1})}{\alpha_{i+1} - \alpha_i}\right) \\
	&= h(+\infty) + \sum_{i \in [k - 1]} (\alpha_{i + 1} - \alpha_i) \cdot \frac{h(\alpha_i) - h(\alpha_{i+1})}{\alpha_{i+1} - \alpha_i} \\
	&= h(+\infty) + (h(0) - h(+\infty)) \\
	&= 1,
\end{align*}
meaning that $P$ is a probability distribution as desired.

Properties~\ref{property:p1} and~\ref{property:p2} are immediate from the construction. We will now check Property~\ref{property:p3}, based on two cases whether $\alpha > \alpha_{k - 1}$.
\begin{itemize}
	\item Case I: $\alpha \geq \alpha_{k - 1}$. In this case, we have
	$D_{\alpha}(P||Q) = P(+\infty) - e^\eps Q(+\infty) = h(+\infty)$.
	\item Case II: $\alpha < \alpha_{k - 1}$. Suppose that $\alpha \in [\alpha_{i - 1}, \alpha_i)$ for $i \in [k - 1]$. We have
	\begin{align*}
		D_{\alpha}(P||Q) &= P(\{\alpha_i, \dots, \alpha_k\}) - \alpha \cdot Q(\{\alpha_i, \dots, \alpha_k\}) \\
		&= \sum_{j=i}^{k} (\alpha_j - \alpha) \cdot Q(\alpha_j) \\
		&= \frac{\alpha - \alpha_i}{\alpha_{i - 1} - \alpha_i} \cdot \sum_{j=i}^{k} (\alpha_j - \alpha_{i - 1}) \cdot Q(\alpha_j) \\ &\quad+ \frac{\alpha_{i - 1} - \alpha}{\alpha_{i - 1} - \alpha_i} \cdot \sum_{j=i}^{k} (\alpha_j - \alpha_i) \cdot Q(\alpha_j).
	\end{align*}
	Furthermore, we have
	\begin{align*}
		&\sum_{j=i}^{k} (\alpha_j - \alpha_{i - 1}) \cdot Q(\alpha_j) \\
		&=\sum_{j=i}^{k} (\alpha_j - \alpha_{i - 1}) \\&\qquad \cdot\left(\frac{h(\alpha_{j-1}) - h(\alpha_j)}{\alpha_{j} - \alpha_{j-1}} - \frac{h(\alpha_j) - h(\alpha_{j+1})}{\alpha_{j+1} - \alpha_j}\right) \\
		&= h(\alpha_{i - 1}).
	\end{align*}
	Similarly, we also have $\sum_{j=i}^{k} (\alpha_j - \alpha_i) \cdot Q(\alpha_j) = h(\alpha_i)$.
	Combining the three equalities, we arrive at
	\begin{align*}
		D_{\alpha}(P||Q) &= \frac{\alpha - \alpha_i}{\alpha_{i - 1} - \alpha_i} \cdot h(\alpha_{i - 1}) + \frac{\alpha_{i - 1} - \alpha}{\alpha_{i - 1} - \alpha_i} \cdot h(\alpha_i),
	\end{align*}
	which is equal to $h(\alpha)$ due to assumption (iv).
\end{itemize}
As a result, $h_{(P, Q)} = h$ as desired.

$(\Rightarrow)$ We will now prove this direction. (i), (ii), and (iii) follow immediately from~\Cref{lem:hockey-stick-characterization}. As a result, it suffices to only prove (iv) and (v). Suppose that $h_{(P, Q)} = h$ for some pair $(P, Q)$ such that $\supp(\PLD_{(P,Q)}) \subseteq \cE$. Let $R$ be a shorthand for the distribution of $\exp(\PLD_{(P,Q)})$. To prove (iv), consider any $\alpha \in [\alpha_{i-1}, \alpha_i)$ for some $i \in [k - 1]$. We then have
\begin{align*}
	h(\alpha) &= D_{\alpha}(P||Q) \\
	&= \sum_{j=i}^{k} (1 - \alpha/\alpha_j) \cdot R(\alpha_j) \\
	&= \frac{\alpha - \alpha_i}{\alpha_{i - 1} - \alpha_i} \cdot \sum_{j=i}^{k} (1 - \alpha_{i-1}/\alpha_j) \cdot R(\alpha_j) \\
	&\qquad \frac{\alpha_{i - 1} - \alpha}{\alpha_{i - 1} - \alpha_i} \cdot \sum_{j=i}^{k} (1 - \alpha_i/\alpha_j) \cdot R(\alpha_j) \\
	&=  \frac{\alpha - \alpha_i}{\alpha_{i - 1} - \alpha_i} \cdot h(\alpha_{i - 1}) + \frac{\alpha_{i - 1} - \alpha}{\alpha_{i - 1} - \alpha_i} \cdot h(\alpha_i),
\end{align*}
which completes the proof of (iv).

Next, we prove (v). Consider any $\alpha \geq \alpha_{k - 1}$. We have
\begin{align*}
	h(\alpha) = D_{\alpha}(P||Q) = R(+\infty) = h(+\infty), \\
\end{align*}  
thereby completing our proof.
\end{proof}

\section{Pessimistic PLDs with Finite Support}
\label{sec:pessimistic}

\begin{figure*}
	\centering
	\begin{subfigure}[b]{0.48\textwidth}
		\centering
		\includegraphics[width=\textwidth]{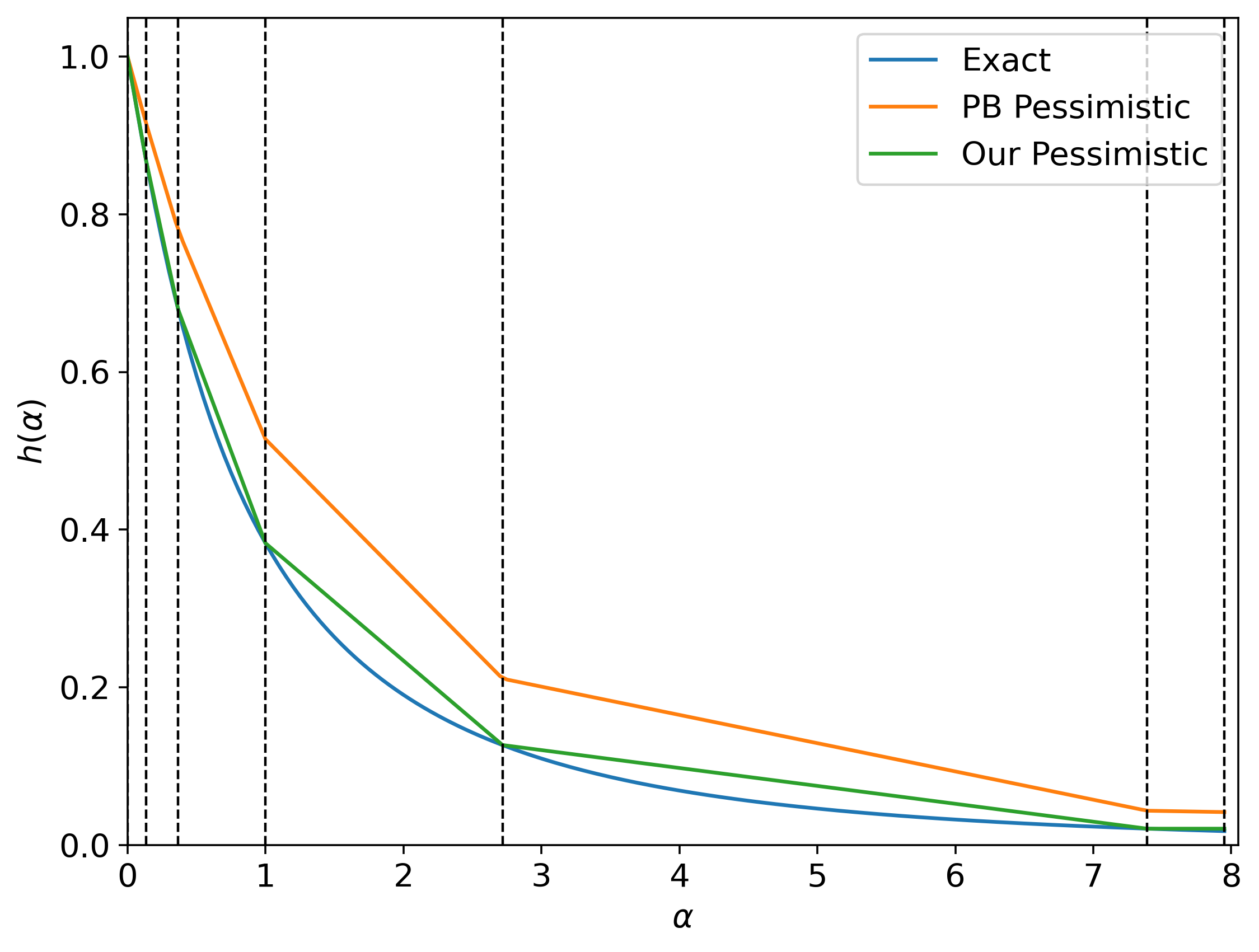}
		\caption{Gaussian Mechanism}
		\label{fig:pess-gaussian}
	\end{subfigure}
	\hfill
	\begin{subfigure}[b]{0.48\textwidth}
		\centering
		\includegraphics[width=\textwidth]{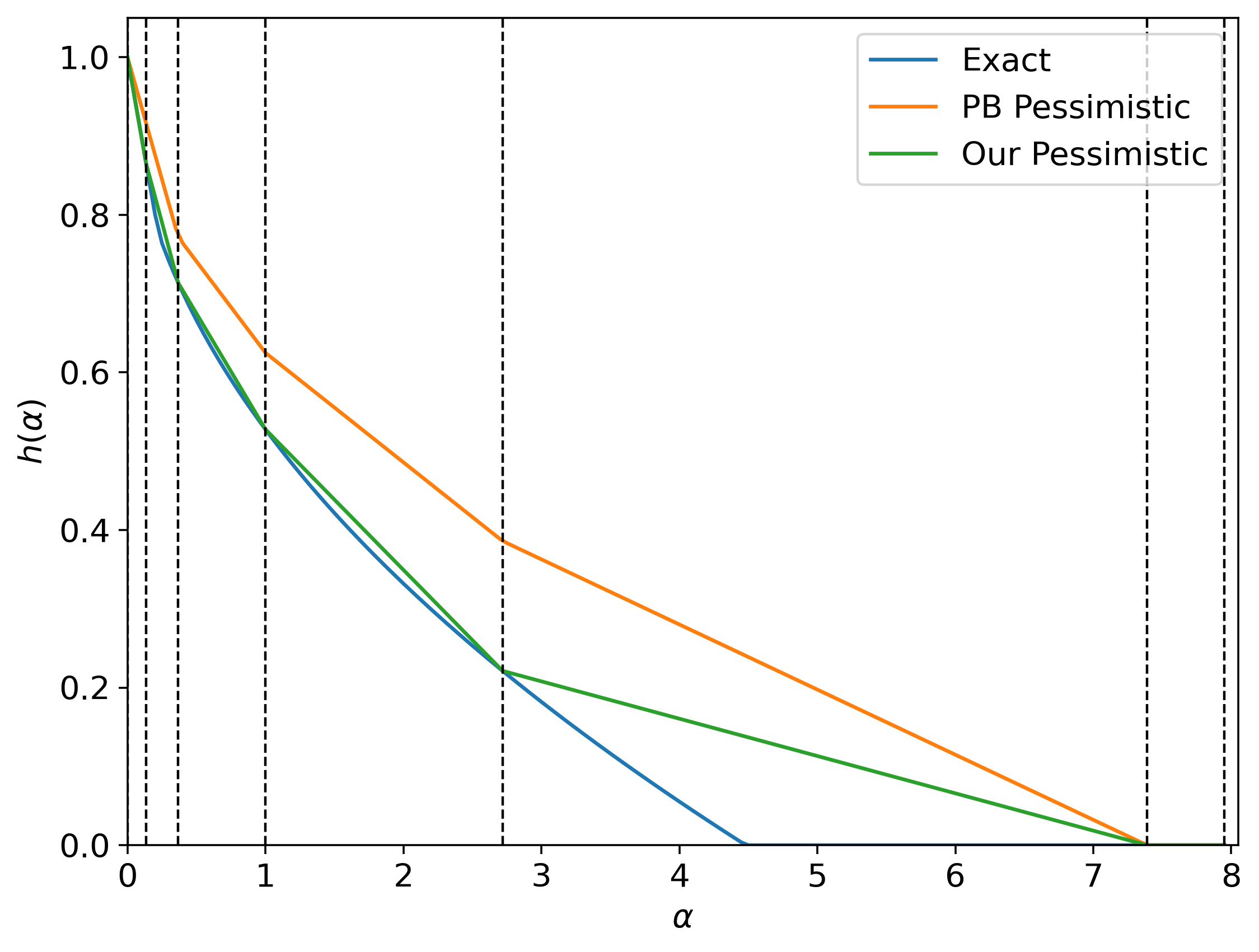}
		\caption{Laplace Mechanism}
		\label{fig:pess-laplace}
	\end{subfigure}
	\caption{Illustrations of the hockey-stick curves of the Gaussian and Laplace mechanisms (with noise multipliers equal to 1 and $2/3$ respectively), and their pessimistic estimates from our approach (labelled ``pessimistic'') and the Privacy Bucket (PB) approach (labelled ``PB pessimistic'') of~\cite{meiser2018tight}. The horizontal lines represent the discretization points in the set $\cA$. As corroborated by \Cref{lem:pess-better-than-pb}, our pessimistic estimates is closer to the true curves (labelled ``exact'') compared to the PB pessimistic estimates for all $\alpha$.}
	\label{fig:pess-simple}
\end{figure*}

As we have described in \Cref{fig:accounting-framework}, pessimistic estimates of PLDs with finite supports are crucial in the PLD-based privacy accounting framework.  The better these pessimistic estimates approximate the true PLD, the more accurate is the resulting upper bound $\delta^{\uparrow}(\eps)$. 

Equipped with tools developed in the previous section, we will now describe our finite-support pessimistic estimate of a PLD. Specifically, given a pair $(A, B)$ of distributions, we would like to compute $(P^{\uparrow}, Q^{\uparrow})$ such that $(P^{\uparrow}, Q^{\uparrow}) \succeq (A, B)$ with $\supp(\PLD_{(P^{\uparrow}, Q^{\uparrow})}) \subseteq \cE$. In fact, as we will show below (\Cref{lem:pessimistic-best}), our choice of $\PLD_{(P^{\uparrow}, Q^{\uparrow})}$ ``best approximates'' $\PLD_{(A, B)}$.

Our construction of the pair $(P^{\uparrow}, Q^{\uparrow})$ is  simple: run DiscretizePLD (\Cref{alg:discretize}) on the input $h_{(A, B)}(\alpha_0), \dots, h_{(A, B)}(\alpha_k)$. 

Recall from the proof of \Cref{lem:hockey-stick-characterization-finite-support} that this construction simply gives $h_{(P^{\uparrow}, Q^{\uparrow})}$, which is a piecewise-linear extension of $h_{(A, B)}(\alpha_0), \dots, h_{(A, B)}(\alpha_k)$. In other words, we simply ``connect the dots'' to construct the hockey-stick curve of our pessimistic estimate. Note that this, together with the convexity of $h_{(A, B)}$ (\Cref{lem:hockey-stick-characterization}), implies that $(P^{\uparrow}, Q^{\uparrow}) \succeq (A, B)$ as desired.

Additionally, it is not hard to observe that our choice of pessimistic PLD is the best possible, in sense that $(P^{\uparrow}, Q^{\uparrow})$ is the least element (under the domination partial order) among all pairs that dominate $(A, B)$:

\begin{lemma} \label{lem:pessimistic-best}
	Let $P, Q$ be any pair of distributions such that $\PLD_{(P, Q)}$ is supported on $\cA$ and $(P, Q) \succeq (A, B)$. Then, we must have $(P, Q) \succeq (P^{\uparrow}, Q^{\uparrow})$
\end{lemma}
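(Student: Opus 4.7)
The plan is to show $h_{(P,Q)}(\alpha) \geq h_{(P^{\uparrow}, Q^{\uparrow})}(\alpha)$ for every $\alpha \in \bbR_{\ge 0} \cup \set{+\infty}$, which by the remark right after the definition of dominating pairs is equivalent to the conclusion $(P, Q) \succeq (P^{\uparrow}, Q^{\uparrow})$. The proof breaks cleanly into two ingredients: matching values at the discretization points $\cA$, and then extending across each subinterval using the piecewise-linear characterization from \Cref{lem:hockey-stick-characterization-finite-support}.

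First, I would verify the inequality at every $\alpha_i \in \cA$. By construction, $(P^{\uparrow}, Q^{\uparrow})$ is the output of \textsc{DiscretizePLD} applied to $h_{(A,B)}(\alpha_0), \dots, h_{(A,B)}(\alpha_k)$, and Property~\ref{property:p3} established in the proof of \Cref{lem:hockey-stick-characterization-finite-support} gives $h_{(P^{\uparrow}, Q^{\uparrow})}(\alpha_i) = h_{(A,B)}(\alpha_i)$ for all $i$. On the other hand, $(P, Q) \succeq (A, B)$ yields $h_{(P,Q)}(\alpha_i) \ge h_{(A,B)}(\alpha_i)$. Chaining these gives $h_{(P,Q)}(\alpha_i) \ge h_{(P^{\uparrow}, Q^{\uparrow})}(\alpha_i)$ at every discretization point.

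Next, I would extend this pointwise inequality to all $\alpha$ using the assumption that $\supp(\PLD_{(P,Q)}) \subseteq \cE$. By \Cref{lem:hockey-stick-characterization-finite-support}(iv), $h_{(P,Q)}$ restricted to any interval $[\alpha_{i-1}, \alpha_i]$ coincides with its linear interpolation between the two endpoint values, and by (v) it is constant equal to $h_{(P,Q)}(+\infty)$ on $[\alpha_{k-1}, +\infty]$. The exact same two properties hold for $h_{(P^{\uparrow}, Q^{\uparrow})}$, since $\supp(\PLD_{(P^{\uparrow}, Q^{\uparrow})}) \subseteq \cE$ by construction. Thus on each subinterval $[\alpha_{i-1}, \alpha_i]$, both curves are affine (in $\alpha$) with endpoint values satisfying $h_{(P,Q)}(\alpha_{i-1}) \ge h_{(P^{\uparrow}, Q^{\uparrow})}(\alpha_{i-1})$ and $h_{(P,Q)}(\alpha_i) \ge h_{(P^{\uparrow}, Q^{\uparrow})}(\alpha_i)$, and linearity propagates the inequality to every interior $\alpha$; the tail case $\alpha \ge \alpha_{k-1}$ is handled by constancy together with $h_{(P,Q)}(+\infty) \ge h_{(P^{\uparrow}, Q^{\uparrow})}(+\infty)$.

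There is no real obstacle; the only subtlety to get right is invoking \Cref{lem:hockey-stick-characterization-finite-support}(iv),(v) for \emph{both} pairs, which requires the hypothesis $\supp(\PLD_{(P,Q)}) \subseteq \cE$ (the statement of the lemma writes $\cA$, but by the convention $\eps_i = \log \alpha_i$ this is the same assumption as support in $\cE$). Once that observation is made, the proof reduces to the two-line argument that the pointwise comparison at the finitely many breakpoints, together with shared piecewise-linear structure, forces the pointwise comparison everywhere.
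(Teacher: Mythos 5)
Your proposal is correct and follows essentially the same route as the paper's proof: reduce to comparing the two hockey-stick curves, use $(P,Q)\succeq(A,B)$ together with $h_{(P^{\uparrow},Q^{\uparrow})}(\alpha_i)=h_{(A,B)}(\alpha_i)$ at the breakpoints, and propagate via the piecewise-linearity and tail-constancy of \Cref{lem:hockey-stick-characterization-finite-support}. Your side remark that the hypothesis ``supported on $\cA$'' should read $\cE$ is also a correct reading of a typo in the statement.
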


\begin{proof}
	Recall that it suffices to prove that $h_{(P, Q)}(\alpha) \geq h_{(P^{\uparrow}, Q^{\uparrow})}(\alpha)$ for all $\alpha \in \bbR_{\ge 0} \cup \set{+ \infty}$. We will consider two cases based on the value of $\alpha$:
	\begin{itemize}
		\item Case I: $\alpha \geq \alpha_{k - 1}$. From \Cref{lem:hockey-stick-characterization-finite-support}, we simply have $h_{(P, Q)}(\alpha) = h_{(P, Q)}(+\infty) \geq h_{(A, B)}(+\infty) = h_{(P^{\uparrow}, Q^{\uparrow})}(\alpha)$.
		\item Case II: $\alpha_{k - 1} > \alpha \geq 0$. Suppose that $\alpha \in [\alpha_{i - 1}, \alpha_i)$. From \Cref{lem:hockey-stick-characterization-finite-support}(ii), we have 
		\begin{align*}
			&h_{(P, Q)}(\alpha) \\
			&= \frac{\alpha - \alpha_i}{\alpha_{i - 1} - \alpha_i} \cdot h_{(P, Q)}(\alpha_{i - 1}) + \frac{\alpha_{i - 1} - \alpha}{\alpha_{i - 1} - \alpha_i} \cdot h_{(P, Q)}(\alpha_i) \\
			&\geq \frac{\alpha - \alpha_i}{\alpha_{i - 1} - \alpha_i} \cdot h(\alpha_{i - 1}) + \frac{\alpha_{i - 1} - \alpha}{\alpha_{i - 1} - \alpha_i} \cdot h(\alpha_i) \\
			&=  h_{(P^{\uparrow}, Q^{\uparrow})}(\alpha),
		\end{align*}
		where the first inequality follows from $(P, Q) \succeq (A, B)$ and the last equality follows from our construction of $(P^{\uparrow}, Q^{\uparrow})$. \qedhere
	\end{itemize}
\end{proof}

We remark that $\PLD_{(P^{\uparrow}, Q^{\uparrow})}$ also has a simple form, due to the properties \ref{property:p1} and \ref{property:p2}:
\begin{align*}
	\PLD_{(P^{\uparrow}, Q^{\uparrow})}(\eps_i) := P^{\uparrow}(\alpha_i)
\end{align*}
for all $i \in [k]$.

\subsection{Comparison to Privacy Loss Buckets}
\label{sec:pb-pessimistic}

The primary previous work that also derived a pessimistic estimate of PLD is that of Meiser and Mohammadi~\cite{meiser2018tight}, which has also been used (implicitly) in later works~\cite{koskela2020computing,koskela2021tight,koskela2021computing}. In our terminology, the \emph{Privacy Buckets (PB)} algorithm of Meiser and Mohammadi~\cite{meiser2018tight}\footnote{This is referred to as \emph{grid approximation} in~\cite{koskela2020computing,koskela2021tight,koskela2021computing}.} can be restated as follows: let the pessimistic-PB estimate $\tPLD^{\uparrow}_{(A,B)}$ be the probability distribution where
\begin{align*}
	\tPLD^{\uparrow}_{(A,B)}(\eps_i) = \PLD_{(A,B)}((\eps_{i - 1}, \eps_i]),
\end{align*}
for all $i \in [k]$. In other words, $\tPLD^{\uparrow}_{(A,B)}$ is a probability distribution on $\cE$ that stochastically dominates $\PLD_{(A,B)}$; furthermore, $\tPLD^{\uparrow}_{(A,B)}$ is the least such distribution under stochastic dominant (partial) ordering. In previous works~\cite{meiser2018tight,koskela2020computing,koskela2021tight,koskela2021computing}, such an estimate is then used in place of the true (non-discretized) PLD for accounting and computing $\delta$'s (via \Cref{lem:pld-composition} and \Cref{lem:pld-to-hockey-stick}).

A priori, it is not clear whether $\tPLD^{\uparrow}_{(A,B)}$ is even a valid PLD (for some pair of distributions). However, it not hard to prove that this is indeed the case:
\begin{lemma}
	There exists a pair $(P_{\mathrm{PB}}^{\uparrow}, Q_{\mathrm{PB}}^{\uparrow})$ of distributions  such that $\tPLD^{\uparrow}_{(A,B)} = \PLD_{(P_{\mathrm{PB}}^{\uparrow}, Q_{\mathrm{PB}}^{\uparrow})}$.
\end{lemma}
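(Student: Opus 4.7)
The plan is to construct $(P_{\mathrm{PB}}^{\uparrow}, Q_{\mathrm{PB}}^{\uparrow})$ explicitly on the outcome space $\cA$ so that the induced privacy loss random variable takes each value $\eps_i$ with exactly the prescribed probability $R_i := \tPLD^{\uparrow}_{(A,B)}(\eps_i) = \PLD_{(A,B)}((\eps_{i-1}, \eps_i])$. First I would observe that, since $A(o) > 0$ whenever $o \in \supp(A)$, we have $\PLD_{(A,B)}(\{-\infty\}) = 0$; hence the half-open intervals $(\eps_{i-1}, \eps_i]$ for $i \in [k]$ partition the support of $\PLD_{(A,B)}$ and therefore $\sum_{i=1}^{k} R_i = 1$.

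The construction is essentially forced. To get $\log(P^{\uparrow}(\alpha_i)/Q^{\uparrow}(\alpha_i)) = \eps_i$ with mass $R_i$ for each $i \in [k-1]$, set $P_{\mathrm{PB}}^{\uparrow}(\alpha_i) = R_i$ and $Q_{\mathrm{PB}}^{\uparrow}(\alpha_i) = R_i/\alpha_i$; for $i = k$ set $P_{\mathrm{PB}}^{\uparrow}(\alpha_k) = R_k$ and $Q_{\mathrm{PB}}^{\uparrow}(\alpha_k) = 0$ (so that the ratio is $+\infty = \eps_k$); and place the remaining masses at $\alpha_0$:
\[
    P_{\mathrm{PB}}^{\uparrow}(\alpha_0) \;=\; 0, \qquad Q_{\mathrm{PB}}^{\uparrow}(\alpha_0) \;=\; 1 - \sum_{i=1}^{k-1} R_i/\alpha_i.
\]
Then $\sum_i P_{\mathrm{PB}}^{\uparrow}(\alpha_i) = \sum_{i=1}^{k} R_i = 1$, and under a draw $o \sim P_{\mathrm{PB}}^{\uparrow}$ the outcome $\alpha_0$ has zero mass while each $\alpha_i$, $i \in [k]$, contributes mass $R_i$ at privacy loss $\eps_i$. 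Hence $\PLD_{(P_{\mathrm{PB}}^{\uparrow}, Q_{\mathrm{PB}}^{\uparrow})}(\eps_i) = R_i = \tPLD^{\uparrow}_{(A,B)}(\eps_i)$ for every $i \in [k]$, as desired.

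The only nontrivial step — and the one I expect to be the main obstacle — is verifying that $Q_{\mathrm{PB}}^{\uparrow}$ is a bona fide distribution, i.e., $Q_{\mathrm{PB}}^{\uparrow}(\alpha_0) \ge 0$, which amounts to proving $\sum_{i=1}^{k-1} R_i/\alpha_i \le 1$. For this, I would invoke the standard identity
\[
    \Ex_{L \sim \PLD_{(A,B)}}[e^{-L}] \;=\; \sum_{o \in \supp(A)} A(o) \cdot \frac{B(o)}{A(o)} \;\le\; 1,
\]
together with the pointwise bound $e^{-\eps_i} \le e^{-\eps'}$ valid for every $\eps' \in (\eps_{i-1}, \eps_i]$. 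Summing these bounds over $i \in [k-1]$ gives
\[
    \sum_{i=1}^{k-1} \frac{R_i}{\alpha_i} \;\le\; \int_{(\eps_0,\,\eps_{k-1}]} e^{-\eps'}\, d\PLD_{(A,B)}(\eps') \;\le\; \Ex_{L \sim \PLD_{(A,B)}}[e^{-L}] \;\le\; 1,
\]
completing the verification. An alternative route would be to define $h(\alpha_i) := D_{\alpha_i}(\tPLD^{\uparrow}_{(A,B)})$, check the conditions of \Cref{lem:hockey-stick-characterization-finite-support}, and feed these values into \textsc{DiscretizePLD} (\Cref{alg:discretize}); the nontrivial condition (iii) ultimately reduces to the same inequality, so the direct construction above is preferable.
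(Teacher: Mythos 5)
Your proposal is correct and follows essentially the same route as the paper: define $P_{\mathrm{PB}}^{\uparrow}$ directly from the bucket masses, set $Q_{\mathrm{PB}}^{\uparrow}(\alpha_i) = P_{\mathrm{PB}}^{\uparrow}(\alpha_i)/\alpha_i$ with the leftover mass at $\alpha_0 = 0$, and reduce everything to checking $\sum_i P_{\mathrm{PB}}^{\uparrow}(\alpha_i)/\alpha_i \le 1$. Your verification via $\Ex_{L \sim \PLD_{(A,B)}}[e^{-L}] \le 1$ is just a repackaging of the paper's outcome-by-outcome bound $A(o)/\alpha_i \le B(o)$, so the two arguments coincide.
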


\begin{proof}
	Let $P_{\text{PB}}^{\uparrow}$ be defined by
	\begin{align*}
		P_{\text{PB}}^{\uparrow}(\alpha_i) = \tPLD^{\uparrow}_{(A,B)}(\eps_i)
	\end{align*}
	for all $i \in \{0, \dots, k\}$. It is clear that $P_{\text{PB}}^{\uparrow}$ is a valid distribution.
	
	Then, define $Q_{\text{PB}}^{\uparrow}$ by
	\begin{align*}
		Q_{\text{PB}}^{\uparrow}(\alpha) = P_{\text{PB}}^{\uparrow}(\alpha) / \alpha,
	\end{align*}
	for all $\alpha \in \cA \setminus \{0\}$ and let $Q_{\text{PB}}^{\uparrow}(0) = 1 - \sum_{\alpha \in \cA \setminus \{0\}} Q_{\text{PB}}^{\uparrow}(\alpha)$. To check that $Q_{\text{PB}}^{\uparrow}$ is a valid distribution, it suffices to show that $Q_{\text{PB}}^{\uparrow}(0) \geq 0$. This is true because
	\begin{align*}
		\sum_{\alpha \in \cA \setminus \{0\}} Q_{\text{PB}}^{\uparrow}(\alpha) 
		&~=~ \sum_{i \in [k]} P_{\text{PB}}^{\uparrow}(\alpha_i) / \alpha_i \\
		&~=~ \sum_{i \in [k]} \tPLD^{\uparrow}_{(A,B)}(\eps_i) / \alpha_i \\
		&~=~ \sum_{i \in [k]} \PLD_{(A,B)}((\eps_{i - 1}, \eps_i]) / \alpha_i \\
		&~=~ \sum_{i \in [k]} \sum_{\substack{o \in \supp(B) \\ A(o) / B(o) \in (\alpha_{i - 1}, \alpha_i]}} A(o) / \alpha_i \\
		&~\leq~ \sum_{i \in [k]} \sum_{\substack{o \in \supp(B) \\ A(o) / B(o) \in (\alpha_{i - 1}, \alpha_i]}} B(o) \\
		&~\leq~ 1.
	\end{align*}
	
	Finally, it is obvious from the definitions of $P_{\text{PB}}^{\uparrow}, Q_{\text{PB}}^{\uparrow}$ that $\PLD_{(P_{\text{PB}}^{\uparrow}, Q_{\text{PB}}^{\uparrow})} = \tPLD^{\uparrow}_{(A,B)}$.
\end{proof}

Combining the above lemma and the fact that $\tPLD^{\uparrow}_{(A,B)}$ dominates $\PLD_{(A, B)}$ with \Cref{lem:pessimistic-best}, we can conclude that our estimate is no worse than the PB estimate:
\begin{corollary} \label{lem:pess-better-than-pb}
	Let $(P^{\uparrow}, Q^{\uparrow})$ be as defined above. Then, for all $\alpha \geq 0$, we have $h_{(P^{\uparrow}, Q^{\uparrow})}(\alpha) \leq D_\alpha(\tPLD^{\uparrow}_{(A,B)})$.
\end{corollary}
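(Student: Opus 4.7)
The plan is to chain together the three ingredients just established: (a) the preceding lemma, which exhibits a pair $(P_{\text{PB}}^{\uparrow}, Q_{\text{PB}}^{\uparrow})$ whose PLD equals $\tPLD^{\uparrow}_{(A,B)}$ and is supported on $\cE$; (b) the fact that $\tPLD^{\uparrow}_{(A,B)}$ stochastically dominates $\PLD_{(A,B)}$ by its very construction (it moves mass rightward within each bucket $(\eps_{i-1},\eps_i]$); and (c) the optimality statement \Cref{lem:pessimistic-best}, which says $(P^{\uparrow}, Q^{\uparrow})$ is the $\preceq$-least dominating pair whose PLD is supported on $\cE$.

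First I would rewrite the right-hand side of the target inequality via \Cref{lem:pld-to-hockey-stick} and the previous lemma as
\begin{align*}
D_\alpha(\tPLD^{\uparrow}_{(A,B)}) \;=\; D_\alpha\!\left(\PLD_{(P_{\text{PB}}^{\uparrow}, Q_{\text{PB}}^{\uparrow})}\right) \;=\; h_{(P_{\text{PB}}^{\uparrow}, Q_{\text{PB}}^{\uparrow})}(\alpha).
\end{align*}
So it suffices to prove that $(P_{\text{PB}}^{\uparrow}, Q_{\text{PB}}^{\uparrow}) \succeq (P^{\uparrow}, Q^{\uparrow})$, and by \Cref{lem:pessimistic-best} (together with the fact that $\PLD_{(P_{\text{PB}}^{\uparrow}, Q_{\text{PB}}^{\uparrow})}$ is supported on $\cE$), this reduces to verifying that $(P_{\text{PB}}^{\uparrow}, Q_{\text{PB}}^{\uparrow}) \succeq (A, B)$.

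To get the latter, I would use \Cref{lem:pld-to-hockey-stick} once more, observing that for any fixed $\eps \in \R$ the function $\eps' \mapsto [1 - e^{\eps - \eps'}]_+$ is non-decreasing in $\eps'$. Since by construction $\tPLD^{\uparrow}_{(A,B)}$ assigns to $\eps_i$ the whole mass that $\PLD_{(A,B)}$ places on $(\eps_{i-1}, \eps_i]$, it stochastically dominates $\PLD_{(A,B)}$; hence the expectation of any non-decreasing function is no smaller under $\tPLD^{\uparrow}_{(A,B)}$. Applying this with the function $[1 - e^{\eps-\eps'}]_+$ yields
\begin{align*}
h_{(P_{\text{PB}}^{\uparrow}, Q_{\text{PB}}^{\uparrow})}(e^\eps) \;=\; D_{e^\eps}(\tPLD^{\uparrow}_{(A,B)}) \;\geq\; D_{e^\eps}(\PLD_{(A,B)}) \;=\; h_{(A,B)}(e^\eps),
\end{align*}
for every $\eps$, which is precisely the statement $(P_{\text{PB}}^{\uparrow}, Q_{\text{PB}}^{\uparrow}) \succeq (A, B)$.

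The only step that needs a bit of care is the stochastic-dominance argument, since the excerpt does not explicitly state that expectations of non-decreasing functions respect stochastic dominance; I would therefore prove it inline by a one-line summation-by-parts (or by noting directly that $\tPLD^{\uparrow}_{(A,B)}$ is obtained from $\PLD_{(A,B)}$ by transporting each mass rightward to the next point of $\cE$, so the inequality $\sum_{\eps'} f(\eps') \tPLD^{\uparrow}_{(A,B)}(\eps') \geq \sum_{\eps'} f(\eps') \PLD_{(A,B)}(\eps')$ holds termwise under the transport for any non-decreasing $f$). Everything else is a direct substitution into the two lemmas already proved, so I anticipate no further obstacles.
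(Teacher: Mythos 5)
Your proof is correct and follows essentially the same route as the paper: realize $\tPLD^{\uparrow}_{(A,B)}$ as $\PLD_{(P_{\mathrm{PB}}^{\uparrow}, Q_{\mathrm{PB}}^{\uparrow})}$ via the preceding lemma, note that stochastic dominance of $\tPLD^{\uparrow}_{(A,B)}$ over $\PLD_{(A,B)}$ gives $(P_{\mathrm{PB}}^{\uparrow}, Q_{\mathrm{PB}}^{\uparrow}) \succeq (A,B)$, and then invoke \Cref{lem:pessimistic-best}. You merely spell out the stochastic-dominance step (monotonicity of $\eps' \mapsto [1 - e^{\eps - \eps'}]_+$) that the paper leaves implicit, which is a fine addition.
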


Illustrations of the exact hockey-stick divergence and its pessimistic estimates from our approach and PB approach can be found in \Cref{fig:pess-simple}. A more detailed evaluation of the error from the two approaches (after compositions) can be found in \Cref{sec:exp}.

\section{Optimistic PLDs with Finite Support}
\label{sec:optimistic}

We next consider \emph{optimistic} PLDs, i.e., $\PLD_{(P,Q)}$ dominated by a given $\
\PLD_{(A,B)}$. We start by showing that, unlike pessimistic PLDs for which there is the ``best'' possible choice (\Cref{lem:pessimistic-best}), there is no such a choice for optimistic PLDs:

\begin{lemma} \label{lem:opt-not-unique}
	There exists a pair $(A, B)$ of distributions and a finite set $\cA$ such that, for any pair $(P, Q) \preceq (A, B)$ such that $\supp(\PLD_{(P, Q)}) \subseteq \cE$, there exists a pair $(P', Q') \preceq (A, B)$ such that $\PLD_{(P', Q')}$ is supported on $\cE$ and $(P', Q') \npreceq (P, Q)$.
\end{lemma}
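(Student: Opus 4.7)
The plan is to prove non-uniqueness by exhibiting an explicit pair $(A, B)$ and discretization $\cA$ for which the feasible set of optimistic estimates admits two \emph{incomparable maximal} elements $(P_1, Q_1)$ and $(P_2, Q_2)$. This suffices: for any optimistic $(P, Q)$, if both $(P_1, Q_1) \preceq (P, Q)$ and $(P_2, Q_2) \preceq (P, Q)$ held, then maximality of each would force $h_{(P, Q)} = h_{(P_i, Q_i)}$ for $i = 1, 2$, making the two maximal elements equivalent and contradicting their incomparability; hence at least one of them satisfies $(P_i, Q_i) \npreceq (P, Q)$ and serves as the required $(P', Q')$.

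For the construction, take $h_{(A, B)}$ to be piecewise linear with a single convex breakpoint at some $\beta > 0$: $h_{(A, B)}(\alpha) = 1 - s\alpha$ on $[0, \beta]$ and $h_{(A, B)}(\alpha) = 1 - s\beta$ for $\alpha \geq \beta$, with $s \in (0, 1/\beta)$. A realizing 3-atom pair $(A, B)$ is produced by running \Cref{alg:discretize} on the auxiliary grid $\{0, \beta, +\infty\}$, giving $A(o_0) = 0$, $A(o_\beta) = s\beta$, $A(o_\infty) = 1 - s\beta$ and $B(o_0) = 1 - s$, $B(o_\beta) = s$, $B(o_\infty) = 0$. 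Choose $\cA = \{0, \alpha_{-}, \alpha_{+}, +\infty\}$ with $1 \leq \alpha_{-} < \beta < \alpha_{+}$, picked so that the chord of $h_{(A, B)}$ from $(\alpha_{-}, h_{(A, B)}(\alpha_{-}))$ to $(\alpha_{+}, h_{(A, B)}(\alpha_{+}))$ passes strictly above $(\beta, h_{(A, B)}(\beta))$. By \Cref{lem:hockey-stick-characterization-finite-support}(iv), an optimistic estimate's hockey-stick curve is linear on $[\alpha_{-}, \alpha_{+}]$, so the condition $h_{(P,Q)} \le h_{(A,B)}$ on this interval reduces to the pointwise bounds at $\alpha_{-}, \alpha_{+}$ together with the single linear inequality $\tfrac{\alpha_{+} - \beta}{\alpha_{+} - \alpha_{-}} \cdot h(\alpha_{-}) + \tfrac{\beta - \alpha_{-}}{\alpha_{+} - \alpha_{-}} \cdot h(\alpha_{+}) \leq h_{(A, B)}(\beta)$ that the chord pass below the breakpoint.

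By construction this last constraint is active, and its intersection with the pointwise bounds defines a one-parameter line segment of Pareto-optimal $(h(\alpha_{-}), h(\alpha_{+}))$ pairs. Pick the two endpoints of this segment: $(P_1, Q_1)$ with $h(\alpha_{-}) = h_{(A, B)}(\alpha_{-})$ (which forces $h(\alpha_{+})$ down), and $(P_2, Q_2)$ with $h(\alpha_{+}) = h_{(A, B)}(+\infty)$ (which forces $h(\alpha_{-})$ down). Each is realized as a distribution pair via \Cref{alg:discretize}, and verifying the hypotheses of \Cref{lem:hockey-stick-characterization-finite-support}—convexity (a slope comparison), monotonicity, the $[1 - \alpha]_+$ lower bound (vacuous because $\alpha_{-} \geq 1$), and $h_{(P_i, Q_i)} \leq h_{(A, B)}$ (holding by construction on the binding chord)—reduces to elementary arithmetic for any concrete numerical choice (e.g., $s = 0.1$, $\beta = 5$, $\alpha_{-} = 2$, $\alpha_{+} = 10$). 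Incomparability is then immediate, since $h_{(P_1, Q_1)}(\alpha_{-}) > h_{(P_2, Q_2)}(\alpha_{-})$ while $h_{(P_1, Q_1)}(\alpha_{+}) < h_{(P_2, Q_2)}(\alpha_{+})$. The main subtlety in the plan is ensuring that the chord-under-breakpoint inequality is the binding constraint (so the Pareto frontier is a line segment rather than a single corner), which is exactly what the geometric condition $1 \leq \alpha_{-} < \beta < \alpha_{+}$ together with the strict-convexity of $h_{(A, B)}$ at $\beta$ arranges.
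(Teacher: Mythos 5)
Your proposal is correct and takes essentially the same approach as the paper: both place a strict kink of a piecewise-linear $h_{(A,B)}$ in the interior of a discretization interval $[\alpha_{-},\alpha_{+}]$ (the paper uses binary randomized response with $\cA$ straddling $e^{-\eps}$), exhibit two incomparable optimistic estimates each matching $h_{(A,B)}$ at one endpoint, and invoke the forced linearity of any finitely-supported estimate on that interval. The only substantive difference is your final step via maximality of the two extremal estimates, which adds a claim you only sketch; the paper argues more directly that if $(P,Q)$ dominated both candidates then $h_{(P,Q)}$ would match $h_{(A,B)}$ at both $\alpha_{-}$ and $\alpha_{+}$, so linear interpolation would push it strictly above $h_{(A,B)}$ at the kink, contradicting $(P,Q)\preceq(A,B)$.
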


\begin{proof}
	Let $(A, B)$ be the result of the $\eps$-DP binary randomized response, i.e.,
	\begin{align*}
		A(0) = B(1) = \frac{e^\eps}{e^\eps + 1}, \\
		A(1) = B(0) = \frac{1}{e^\eps + 1}.
	\end{align*}
	It is simple to verify that
	\begin{align*}
		h_{(A, B)}(\alpha) =
		\begin{cases}
			1 - \alpha & \text{ if } \alpha \leq e^{-\eps}, \\
			\frac{e^\eps}{e^\eps + 1} - \frac{\alpha}{e^\eps + 1} & \text{ if } e^{-\eps} < \alpha < e^{\eps}, \\
			0 & \text{ if } \alpha \geq \eps.
		\end{cases}
	\end{align*}
	Let $\cA$ be $\{0, \alpha_1, \alpha_2 ,+\infty\}$ where $\alpha_1 = e^{-\eps} - \gamma, \alpha_2 = e^{-\eps} + \gamma$ for any $\gamma < \min\{e^{\eps} - e^{-\eps}, e^{-\eps}\}$. 
	
	Let $h_1: \cA \cup \{+ \infty\} \to [0, 1]$ be defined as
	\begin{align*}
		h_1(0) &= 1, \\
		h_1(\alpha_1) &= h_{(A, B)}(\alpha_1), \\
		h_1(\alpha_2) &= 1 - \alpha_2,\\
		h_1(+\infty) &= 0,
	\end{align*}
	and let $\oh_1$ be its piecewise-linear extension. It is again simple to verify that $\oh_1$ satisfies the conditions in \Cref{lem:hockey-stick-characterization-finite-support} and therefore $\oh_1 = h_{(P_1, Q_1)}$ for some $P_1, Q_1$ such that $\PLD_{(P_1, Q_1)}$ is supported on $\cE$. Furthermore, it can be checked from our definition that $(A, B) \succeq (P_1, Q_1)$.
	
	Similarly, let $h_2: \cA \cup \{+ \infty\} \to [0, 1]$ be defined as
	\begin{align*}
		h_2(0) &= 1, \\
		h_2(\alpha_1) &= 1 - e^{-\eps} + \frac{\gamma}{e^{\eps} + 1}, \\
		h_2(\alpha_2) &= h_{(A, B)}(\alpha_2), \\
		h_2(+\infty) &= 0,
	\end{align*}
	and let $\oh_2$ be its piecewise-linear extension. Again, $\oh_2 = h_{(P_2, Q_2)}$ for some $P_2, Q_2$ such that $\PLD_{(P_2, Q_2)}$ is supported on $\cE$ and $(A, B) \succeq (P_2, Q_2)$.
	
	Now, consider any $(P, Q) \preceq (A, B)$ such that $\PLD_{(P, Q)}$ is supported on $\cE$. We claim that $(P, Q) \nsucceq (\hP_1, \hQ_1)$ or $(P, Q) \nsucceq (P_2, Q_2)$. To prove this, assume for the sake of contradiction that $(P, Q) \nsucceq (P_1, Q_1)$ and $(P, Q) \nsucceq (P_2, Q_2)$. This means that
	\begin{align*}
		h_{(P, Q)}(\alpha_1) &\geq h_1(\alpha_1) = h_{(A, B)}(\alpha_1), \\
		h_{(P, Q)}(\alpha_2) &\geq h_2(\alpha_2)= h_{(A, B)}(\alpha_2). 
	\end{align*}
	From piecewise-linearity of $h_{(P, Q)}$ restricted to $[\alpha_1, \alpha_2]$ (\Cref{lem:hockey-stick-characterization-finite-support}), we then have $h_{(P, Q)}(e^{\eps}) = \frac{1}{2} \left(h_{(P, Q)}(\alpha_1) + h_{(P, Q)}(\alpha_2)\right) > h_{(A, B)}(e^{\eps})$, a contradiction to the assumption that $(P, Q) \preceq (A, B)$.
\end{proof}

\subsection{A Greedy and Convex Hull Construction}

The previous lemma shows that, unfortunately, there is no canonical choice for an optimistic estimate for a given PLD. Due to this, we propose a simple greedy algorithm to construct an optimistic estimate for the PLD of a given pair $(A, B)$. Similar to before, it will be more convenient to deal directly with the hockey-stick curve. Here we would like to construct $f: \cA \to [0, 1]$ such that its piecewise-linear extension $\of$ point-wise lower bounds $h_{(A, B)}$. The distribution $P, Q$ (and $\PLD_{(P, Q)}$) can then be computed using \Cref{alg:discretize}.

Our algorithm will assume that we can compute the derivative of $h$ at any given point $\alpha \in \R_{\geq 0}$ (denoted by $h'(\alpha)$). We remark that, for many widely used mechanisms including Laplace and Gaussian mechanisms, the closed-formed formula for $h'(\alpha)$ can be easily computed. \newline

\noindent
\textbf{First Greedy Attempt.}
Before describing our algorithm, let us describe an approach that does \emph{not} work; this will demonstrate the hurdles we have to overcome. Consider the following simple greedy algorithm: start with $f(\alpha_0) = 0$ and, if we are currently at  $f(\alpha_i)$, then find the largest possible $f(\alpha_{i + 1})$ such that the line $f(\alpha_i), f(\alpha_{i + 1})$ is below $h_{(A, B)}$. (In other words, the line $f(\alpha_i), f(\alpha_{i + 1})$ is tangent to $h_{(A, B)}$.) 

While this is a natural approach, there are two issues with this algorithm:
\begin{itemize}
	\item First and more importantly, it is possible that at some discretization point $f(\alpha_{i + 1})$ becomes negative!  Obviously this invalidates the construction as $f$ will not correspond to a hockey-stick curve.
	\item Secondly, each computation of tangent line requires several (and sequential) computations of the derivative $h'$---rendering the algorithm inefficient---and is also subject
	to possible numerical instability.
\end{itemize}

We remark that if we instead start from right (i.e., $f(\alpha_k)$) and proceed greedily to the left (in decreasing order of $i$), then the first issue will become that $f(\alpha_i)$ can be smaller than $[1 - \alpha_i]_+$, which also makes it an invalid hockey-stick curve due to \Cref{lem:hockey-stick-characterization}(iii). As will be explained below, we will combine these two directions of greedy together with a convex hull algorithm in our revised approach. \newline

\noindent
\textbf{An Additional Assumption.} For our algorithm, we will also need a couple of assumptions. The first one is that 1 belongs to the discretization set:
\begin{assumption} \label{as:one}
	$1 \in \cA$ (or equivalently $0 \in \cE$).
\end{assumption}

For the remainder of this section, we will use $i^* \in [k]$ to denote the index for which $\alpha_{i^*} = 1$ (i.e., $\eps_{i^*} = 0$).

We show below that this assumption is necessary. When it does not hold, then it may simply be impossible to find an optimistic estimate at all:

\begin{lemma}
	There exists a pair $(A, B)$ of distributions such that any $(P, Q) \preceq (A, B)$ satisfies $0 \in \supp(\PLD_{(P, Q)})$. 
\end{lemma}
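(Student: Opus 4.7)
The plan is to exhibit the simplest possible witness, namely $A = B$ (any distribution will do). First I would compute the hockey-stick curve in this degenerate case: since $h_{(A,A)}(\alpha) = \sup_S [A(S)(1-\alpha)]_+$, taking $S = \supp(A)$ when $\alpha \le 1$ and observing that the bracket is nonpositive when $\alpha \ge 1$ yields $h_{(A,A)}(\alpha) = [1-\alpha]_+$ on $\bbR_{\ge 0} \cup \set{+\infty}$. In other words, $h_{(A,A)}$ already saturates the universal lower bound from \Cref{lem:hockey-stick-characterization}(iii).

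Next I would exploit this saturation to pin down every $(P, Q) \preceq (A, B)$. By definition $h_{(P, Q)}(\alpha) \le h_{(A,B)}(\alpha) = [1-\alpha]_+$, while \Cref{lem:hockey-stick-characterization}(iii) gives the opposite inequality $h_{(P, Q)}(\alpha) \ge [1-\alpha]_+$. Hence $h_{(P,Q)}(\alpha) = [1-\alpha]_+$ everywhere. Evaluating at $\alpha = 1$ gives $\sup_S [P(S) - Q(S)]_+ = 0$, i.e., $P(S) \le Q(S)$ for all measurable $S$; applying the same inequality to $S^c$ yields $Q(S) \le P(S)$, so $P = Q$.

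Finally, because $P = Q$, every $o \in \supp(P)$ has $P(o) = Q(o) > 0$, so the privacy loss random variable $L = \log(P(o)/Q(o))$ equals $0$ with probability $1$ when $o \sim P$. Thus $\PLD_{(P,Q)} = \delta_0$, and in particular $0 \in \supp(\PLD_{(P,Q)})$, as required. There is no real obstacle here; the only point to watch is that ``$P = Q$'' has to be deduced rigorously from $h_{(P,Q)}(1) = 0$, which is handled cleanly by the complement trick above. The proof also illustrates \emph{why} \Cref{as:one} is needed: when $1 \notin \cA$ (equivalently $0 \notin \cE$), then for any $A = B$ there is no valid optimistic estimate whose PLD is supported on $\cE$.
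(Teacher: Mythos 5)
Your proof is correct and follows essentially the same route as the paper: take $A = B$, note that $h_{(A,B)} = [1-\alpha]_+$ saturates the lower bound of \Cref{lem:hockey-stick-characterization}(iii), so domination forces $h_{(P,Q)} = [1-\alpha]_+$ and hence $\PLD_{(P,Q)}$ is the point mass at $0$. Your extra step deducing $P = Q$ via the complement trick at $\alpha = 1$ is a clean way to fill in the detail the paper leaves implicit.
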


\begin{proof}
	Let $A = B$. We simply have $h_{(A, B)}(\alpha) = [1 - \alpha]_+$. Due to \Cref{lem:hockey-stick-characterization}(iii), we must have $h_{(P, Q)}(\alpha) = [1 - \alpha]_+$. This simply implies $\PLD_{(P, Q)}(0) = 1$ as desired.
\end{proof}

\noindent
\textbf{Our Greedy + Convex Hull Algorithm.}
We are now ready to describe our final algorithm. The main idea is to not attempt to create the curve in one left-to-right or right-to-left sweep, but rather to simply generate a ``candidate set'' $F_i$ for each $f(\alpha_i)$. (Such a set will in fact be a singleton for all $i$ except $i = i^*$, for which $|F_i| \leq 2$, but we will refer to $F_i$'s as sets here for simplicity of notation.) We then compute the convex hull of these points $\{(\alpha_i, f_i)\}_{i \in [k - 1], f_i \in F_i}$ and take it (or more precisely its lower curve) as our optimistic estimate. This last step immediately ensures the convexity of our curve, which is required for it to be a valid hockey-stick curve (\Cref{lem:hockey-stick-characterization}(i)).

To construct the candidate set $F_i$, we combine the left-to-right and right-to-left greedy approaches. Specifically, for $i = \{0, \dots, i^* - 1\}$, we draw the tangent line of the true curve $h$ at $h(\alpha_i)$ and let its intersection with the vertical line $\alpha = \alpha_{i + 1}$ be $(\alpha_{i + 1}, f^{\rightarrow}_{i + 1})$; then we add $f^{\rightarrow}_{i + 1}$ into $F_{i + 1}$. Similarly, for $i = \{k - 1, \dots, i^* + 1\}$, we draw the tangent line of the true curve $h$ at $h(\alpha_i)$ and let its intersection with the vertical line $\alpha = \alpha_{i - 1}$ be $(\alpha_{i - 1}, f^{\leftarrow}_{i - 1})$; then we add $f^{\leftarrow}_{i - 1}$ into $F_{i - 1}$. At the very end points $i = 0$ and $i = k - 1$, we also add $1$ and $0$ respectively to $F_i$.

Notice that this algorithm, unlike the previous (failed) greedy approach, only requires a calculation of $h'$ at each $\alpha_i \in \cA \setminus \{1, +\infty\}$, which can be done in parallel. Furthermore, efficient algorithms for convex hull are well known in the literature and can be used directly.

The complete and more precise description of our algorithm is given in \Cref{alg:optimistic}. We also note here that $|F_i| = 1$ for all $i \ne i^*$ and $|F_{i^*}| \leq 2$ (as the point constructed from the left may be different from the point from the right). Nonetheless, we write $F_i$'s as sets for simplicity of notation. An illustration of the algorithm can be found in \Cref{fig:opt-step-by-step}.

\newcommand{\nxt}{\mathrm{next}}
\newcommand{\tmp}{\mathrm{temp}}
\begin{algorithm}
	\caption{Optimistic PLD Construction.} \label{alg:optimistic}
	\begin{algorithmic}
		\Procedure{OptimisticPLD}{$h, \cA$}
		\For{$i = 0, \ldots i^* - 1$}
		\State $f^{\rightarrow}_{i + 1} = h(\alpha_i) + (\alpha_{i + 1} - \alpha_i) \cdot h'(\alpha_i)$
		\EndFor
		\State $f^{\rightarrow}_{0} \gets 1$ \Comment{$h(\alpha_0) = 1$}
		\For{$i = k - 1, \ldots i^* + 1$}
		\State $f^{\leftarrow}_{i - 1} = h(\alpha_i) - (\alpha_i - \alpha_{i - 1}) \cdot h'(\alpha_i)$
		\EndFor
		\State $f^{\leftarrow}_{k - 1} \gets 0$
		\State $H \gets $ ConvexHull($\{(\alpha_i, f^{\rightarrow}_i)\}_{i \in \{0, \dots, i^*\}} \cup \{(\alpha_i, f^{\leftarrow}_i)\}_{i \in \{i^*, \dots, k - 1\}}$)
		\For{$i = 0, \ldots, k - 1$}
		\State $(\alpha_i, f(\alpha_i)) \gets$ lowest intersection point between $H$ and the vertical line $\alpha = \alpha_i$ 
		\EndFor
		\State $f(\alpha_k) \gets 0$ \Comment{$\alpha_k = +\infty$}
		\Return{DiscretizePLD($f(\alpha_0), \dots, f(\alpha_k)$)}
		\EndProcedure
	\end{algorithmic}
\end{algorithm}

Having described our algorithm, we will now proof its correctness, i.e., that it outputs a pair of distributions dominated by the input pair.

\begin{theorem} \label{thm:opt-correctness}
	Let $(P^{\downarrow}, Q^{\downarrow})$ denote the output of $\textsc{\upshape OptimisticPLD}(h, \cA)$ where $h = h_{(A, B)}$. Then, under \Cref{as:one}, we have $(P^{\downarrow}, Q^{\downarrow}) \preceq (A, B)$.
\end{theorem}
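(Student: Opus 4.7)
The plan is to argue in two steps: first, that \textsc{DiscretizePLD} produces a valid pair $(P^{\downarrow}, Q^{\downarrow})$ whose hockey-stick curve equals the piecewise-linear extension $\bar{f}$ of $(f(\alpha_i))_i$; and second, that $\bar{f}(\alpha) \le h(\alpha)$ pointwise, where $h := h_{(A,B)}$.

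For the first step, I would invoke the $(\Leftarrow)$ direction of \Cref{lem:hockey-stick-characterization-finite-support} and verify conditions (i)--(v) for $\bar{f}$. Properties (iv) and (v) are automatic for a piecewise-linear extension. Property (ii), $\bar{f}(\alpha_0) = 1$, holds because $(\alpha_0, 1) \in S$ is the leftmost point contributing to the convex hull. For (iii), $\bar{f}(\alpha_i) \ge [1-\alpha_i]_+$, I show that \emph{every} point in $S$ lies in the convex epigraph of $[1-\alpha]_+$: concretely, $f^{\rightarrow}_i \ge 1 - \alpha_i$ for $\alpha_i \le 1$ (using $h \ge [1-\alpha]_+$ together with the fact that the hockey-stick property forces $h'(\alpha) \ge -1$ on $[0,1]$), and $f^{\leftarrow}_i \ge 0$ (using $h \ge 0$ and non-increasing). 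Since the epigraph of $[1-\alpha]_+$ is convex, the lower boundary of $\mathrm{conv}(S)$ lies in it. Condition (i) is then immediate: convexity comes from the convex-hull construction, and non-increasing follows from convexity plus $\bar{f}(+\infty) = 0$.

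For the second step, the case $\alpha > \alpha_{k-1}$ is trivial since $\bar{f}(\alpha) = 0 \le h(\alpha)$. For $\alpha \in [0, \alpha_{k-1}]$, the key move is a \emph{splitting argument}. Let $S_L = \{(\alpha_i, f^{\rightarrow}_i)\}_{i=0}^{i^*}$ and $S_R = \{(\alpha_i, f^{\leftarrow}_i)\}_{i=i^*}^{k-1}$, and let $\bar{f}^{\rightarrow}, \bar{f}^{\leftarrow}$ denote the lower boundaries of $\mathrm{conv}(S_L)$ and $\mathrm{conv}(S_R)$ respectively. Since $\mathrm{conv}(S_L) \subseteq \mathrm{conv}(S)$, adding points can only push the lower boundary down, so $\bar{f}(\alpha) \le \bar{f}^{\rightarrow}(\alpha)$ on $[\alpha_0, \alpha_{i^*}]$, and symmetrically $\bar{f}(\alpha) \le \bar{f}^{\leftarrow}(\alpha)$ on $[\alpha_{i^*}, \alpha_{k-1}]$. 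Thus it suffices to establish $\bar{f}^{\rightarrow} \le h$ on $[0, \alpha_{i^*}]$ and $\bar{f}^{\leftarrow} \le h$ on $[\alpha_{i^*}, \alpha_{k-1}]$; this avoids the need to analyze edges of the full hull that mix left- and right-sweep points.

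To finish, I verify the left-sweep claim edge-by-edge (right-sweep is symmetric). Consider a hull edge $e$ of $\mathrm{conv}(S_L)$ connecting $p^{\rightarrow}_r$ and $p^{\rightarrow}_t$ with $r<t\le i^*$. Since $p^{\rightarrow}_t = (\alpha_t, L_{t-1}(\alpha_t))$ lies on the tangent $L_{t-1}$ to $h$ at $\alpha_{t-1}$, the edge $e$ and $L_{t-1}$ agree at $\alpha_t$; I then only need the slope $s$ of $e$ to satisfy $s \ge h'(\alpha_{t-1}) = \mathrm{slope}(L_{t-1})$, from which $e \le L_{t-1} \le h$ on $[\alpha_r, \alpha_t]$ follows. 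This bound comes from two facts. First, the consecutive-chord slope from $p^{\rightarrow}_{t-1}$ to $p^{\rightarrow}_t$ computes to $s_{t-1} = h'(\alpha_{t-1}) + \frac{h(\alpha_{t-1}) - L_{t-2}(\alpha_{t-1})}{\alpha_t - \alpha_{t-1}} \ge h'(\alpha_{t-1})$ because $L_{t-2} \le h$. Second, if the hull edge skips $p^{\rightarrow}_{t-1}$ then this skipped point must lie above $e$, and elementary geometry (chord to a higher intermediate point has larger slope than the direct chord) gives $s \ge s_{t-1}$. Combining, $s \ge s_{t-1} \ge h'(\alpha_{t-1})$. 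The right-sweep case is entirely symmetric, using the tangent $L_{r+1}$ at the \emph{left} endpoint of the edge, which sidesteps the complication that $p^{\leftarrow}_{k-1} = (\alpha_{k-1}, 0)$ need not lie on any tangent line of $h$. I expect the slope-comparison step and correctly handling the endpoint $p_0 = (0, h(0))$ (which happens to lie on $L_0$) to be the main subtleties.
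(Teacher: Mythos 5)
Your proposal is correct, and its first half (validity of the input to \textsc{DiscretizePLD} via the candidate-point bounds $f^{\rightarrow}_i \ge 1-\alpha_i$, $f^{\leftarrow}_i \ge 0$, and $f^{\rightarrow}_i, f^{\leftarrow}_i \le h(\alpha_i)$) matches the paper's \Cref{lem:opt-candidate-points-lower-bound} and its use almost exactly. Where you diverge is the key step $\of \le h$: you split the hull into $\mathrm{conv}(S_L)$ and $\mathrm{conv}(S_R)$, use monotonicity of lower boundaries under adding points, and then run an edge-by-edge slope comparison (edge slope $\ge$ consecutive-chord slope $\ge h'(\alpha_{t-1})$, anchoring each edge to the tangent at its right endpoint for the left sweep and at its left endpoint for the right sweep). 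The paper avoids identifying hull edges altogether: for $\alpha \in [\alpha_i,\alpha_{i+1})$ it notes that the chord joining the two \emph{consecutive} candidate points $(\alpha_i, f_i)$ and $(\alpha_{i+1}, f_{i+1})$ lies inside the full hull $H$, so the lower boundary $\of$ is below that chord; the chord in turn is below the tangent line at the appropriate endpoint (using $f_i \le h(\alpha_i)$), which is below $h$ by convexity. This one-interval-at-a-time chord argument makes both your splitting step and your slope analysis unnecessary, and it sidesteps the subtleties you flag about mixed edges and about $p^{\leftarrow}_{k-1} = (\alpha_{k-1},0)$ not lying on a tangent. Your route is sound but does more work; its one advantage is that the edge-level reasoning makes explicit \emph{why} skipping candidate points during hull formation can only help, which the paper leaves implicit in the containment $L_1 \subseteq H$.
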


To prove \Cref{thm:opt-correctness}, it will be crucial to have the following lower bounds on the candidate points.

\begin{lemma} \label{lem:opt-candidate-points-lower-bound}
	\begin{enumerate}[(i)]
		\item For all $i \in \{0, \dots, i^*\}$, $f^{\rightarrow}(\alpha_i) \geq 1 - \alpha_i$.
		\item For all $i \in \{i^*, \dots, k - 1\}$, $f^{\leftarrow}(\alpha_i) \geq 0$.
		\item For all $i \in \{0, \dots, i^*\}$, $f^{\rightarrow}(\alpha_i) \leq h(\alpha_i)$.
		\item For all $i \in \{i^*, \dots, k - 1\}$, $f^{\leftarrow}(\alpha_i) \leq h(\alpha_i)$.
	\end{enumerate}
\end{lemma}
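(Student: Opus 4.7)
The plan is to prove the four parts separately, using the basic properties of hockey-stick curves from \Cref{lem:hockey-stick-characterization}: $h$ is convex and non-increasing, $h(0) = 1$, $h$ takes values in $[0,1]$, and $h(\alpha) \geq [1-\alpha]_+$ for all $\alpha$.

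Parts (iii) and (iv) should be the most direct: both follow from the defining property of convex functions that every tangent line lies pointwise below the graph. Concretely, for (iii) and $i \in \{1, \ldots, i^*\}$, convexity of $h$ at $\alpha_{i-1}$ yields
\[
f^{\rightarrow}_i = h(\alpha_{i-1}) + h'(\alpha_{i-1})(\alpha_i - \alpha_{i-1}) \leq h(\alpha_i),
\]
while the boundary case $i=0$ is handled by $f^{\rightarrow}_0 = 1 = h(\alpha_0)$. Part (iv) is symmetric, applying convexity at $\alpha_{i+1}$ and evaluating the resulting tangent at $\alpha_i$ (with a re-indexing compared to the algorithm), and its boundary case $f^{\leftarrow}_{k-1} = 0 \leq h(\alpha_{k-1})$ uses only $h \geq 0$.

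Part (ii) is also easy: since $h$ is non-increasing we have $h'(\alpha_{i+1}) \leq 0$, and since $h$ is non-negative,
\[
f^{\leftarrow}_i = h(\alpha_{i+1}) - (\alpha_{i+1} - \alpha_i)\, h'(\alpha_{i+1}) \geq h(\alpha_{i+1}) \geq 0,
\]
with $f^{\leftarrow}_{k-1} = 0$ trivial.

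The main obstacle is part (i), which requires the additional global slope bound $h'(\alpha) \geq -1$ for every $\alpha > 0$. I would derive this by applying convexity between $0$ and $\alpha$: since $h(0) = 1$,
\[
1 = h(0) \geq h(\alpha) + h'(\alpha)(0 - \alpha) = h(\alpha) - \alpha\, h'(\alpha),
\]
so $\alpha\, h'(\alpha) \geq h(\alpha) - 1 \geq -\alpha$ using $h(\alpha) \geq 1 - \alpha$, which yields $h'(\alpha) \geq -1$. Combining this with $h(\alpha_{i-1}) \geq 1 - \alpha_{i-1}$ (nontrivial precisely because $\alpha_{i-1} \leq 1$ when $i \leq i^*$), I obtain for $i \in \{1, \ldots, i^*\}$
\[
f^{\rightarrow}_i \geq h(\alpha_{i-1}) - (\alpha_i - \alpha_{i-1}) \geq (1 - \alpha_{i-1}) - (\alpha_i - \alpha_{i-1}) = 1 - \alpha_i,
\]
and the $i=0$ case is the tautology $f^{\rightarrow}_0 = 1 = 1 - \alpha_0$. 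If $h$ is not differentiable at some $\alpha_{i-1}$, the argument goes through verbatim with $h'(\alpha_{i-1})$ replaced by any subgradient (e.g., the right derivative), which exists for every convex function and satisfies the same one-sided forms of the convexity inequalities used above.
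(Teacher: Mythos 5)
Your proof is correct and follows essentially the same route as the paper: parts (ii)--(iv) via monotonicity and the tangent-below-graph property of convex functions, and part (i) via the global slope bound $h'(\alpha) \ge -1$ combined with $h(\alpha_{i-1}) \ge 1 - \alpha_{i-1}$. The only cosmetic difference is that you derive the slope bound directly from the convexity inequality at $\alpha$ against $h(0)=1$ and $h(\alpha)\ge 1-\alpha$, whereas the paper gets $h'(0)\ge -1$ first and then uses monotonicity of the derivative; your remark about replacing $h'$ by a subgradient is a nice touch the paper omits.
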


\begin{proof}
	\begin{enumerate}[(i)]
		\item The statement obviously holds for $i = 0$. Next, consider $i \in [i^*]$. From (ii) and (iii) of \Cref{lem:hockey-stick-characterization}, we have $h'(0) \geq -1$. Furthermore, the convexity of $h$ (\Cref{lem:hockey-stick-characterization}(i)) implies that $h'(\alpha_{k - 1}) \geq h'(0) \geq -1$. Therefore, we have
		\begin{align*}
			f^{\rightarrow}(\alpha_i) &= h(\alpha_{i - 1}) + (\alpha_i - \alpha_{i - 1})\cdot h'(\alpha_{i - 1}) \\
			&\geq h(\alpha_{i - 1}) - (\alpha_i - \alpha_{i - 1}) \\
			&\geq (1 - \alpha_{i - 1}) - (\alpha_i - \alpha_{i - 1}) \\
			&= 1 - \alpha_i,
		\end{align*}
		where the second inequality is due to  \Cref{lem:hockey-stick-characterization}(iii).
		\item The statement obviously holds for $i = k - 1$. Next, consider $i \in \{i^*, \dots, k - 2\}$. Since $h$ is non-increasing (from~\Cref{lem:hockey-stick-characterization}(i)), we have $h'(\alpha_{i + 1}) \leq 0$. Therefore, $f^{\leftarrow}(\alpha_i) \geq h(\alpha_{i + 1}) \geq 0$.
		\item The statement obviously holds for $i = 0$. For $i \in [i^*]$, the convexity of $h$ (\Cref{lem:hockey-stick-characterization}(i)) immediately implies that
		\begin{align*}
			f^{\rightarrow}(\alpha_i) = h(\alpha_{i - 1}) + (\alpha_i - \alpha_{i - 1})\cdot h'(\alpha_{i - 1}) \leq h(\alpha_i).
		\end{align*}
		\item The statement obviously holds for $i = k - 1$. For $i \in \{i^*, \dots, k - 2\}$, the convexity of $h$ (\Cref{lem:hockey-stick-characterization}(i)) immediately implies that
		\begin{align*}
			f^{\rightarrow}(\alpha_i) & = h(\alpha_{i + 1}) + (\alpha_{i + 1} - \alpha_i)\cdot h'(\alpha_{i + 1}) \\
			& \leq h(\alpha_i).
			\qedhere
		\end{align*}
	\end{enumerate}
\end{proof}

We are now ready to prove \Cref{thm:opt-correctness}.

\begin{proof}[Proof of \Cref{thm:opt-correctness}]
	We start by observing that the vertices of the convex hull $H$ consists of the points $(\alpha_{\ell_0}, f^{\rightarrow}(\alpha_{\ell_0})), \dots, (\alpha_{\ell_q}, f^{\rightarrow}(\alpha_{\ell_q})), (\alpha_{\ell_{q + 1}}, f^{\leftarrow}(\alpha_{\ell_{q + 1}})),$ $\dots,  (\alpha_{\ell_m}, f^{\leftarrow}(\alpha_{\ell_m}))$, where $0 = \ell_0 < \cdots < \ell_m = k - 1$ and $\ell_q \leq i^* \leq \ell_{q + 1}$.
	
	First, we have to show that $f(\alpha_0), \dots, f(\alpha_k)$ constitute a valid input to the \textsc{DiscretizePLD} algorithm. Per~\Cref{lem:hockey-stick-characterization-finite-support}, we only need to show that (i) $\of$ is non-increasing, (ii) $\of$ is convex, and (iii) $\of(\alpha) \geq [1 - \alpha]_+$ for all $\alpha \in \R_{\geq 0}$. Notice also that $\of$ is simply the piecewise-linear curve connecting $(\alpha_{\ell_0}, f^{\rightarrow}(\alpha_{\ell_0})), \dots, (\alpha_{\ell_q}, f^{\rightarrow}(\alpha_{\ell_q})), (\alpha_{\ell_{q + 1}}, f^{\leftarrow}(\alpha_{\ell_{q + 1}})),$ $\dots,  (\alpha_{\ell_m}, f^{\leftarrow}(\alpha_{\ell_m}))$.

	To see that (i) holds, observe that the second-rightmost point in the convex hull must be $(\alpha_j, f_j)$ for some $j \in \{0, \dots, k - 2\}$ where $f_j = f^{\rightarrow}_j$ or $f_j = f^{\leftarrow}_j$. In either case, \Cref{lem:opt-candidate-points-lower-bound} implies that $f_j \geq 0 = f^{\leftarrow}(\alpha_{k - 1})$. Since $\of$ is the lower curve of the convex hull $H$ and $(\alpha_j, f_j), (\alpha_{k - 1}, 0)$ is its rightmost segment, we can conclude that $\of$ is non-increasing in the range $[\alpha_0, \alpha_{k - 1}]$.
	Finally, since we simply have $\of(\alpha) = 0$ for all $\alpha > \alpha_{k - 1}$, it is also non-increasing in the range $[\alpha_{k - 1}, +\infty)$, thereby proving (i).
	
	As for (ii), since $\of|_{[\alpha_0, \alpha_{k - 1}]}$ forms the lower boundary of the convex hull $H$, $\of$ is convex in the range $[\alpha_0, \alpha_{k - 1}]$. Again, since we simply have $\of(\alpha) = 0$ for all $\alpha > \alpha_{k - 1}$, we can conclude that it is convex for the entire range $[0, +\infty)$.
	
	Finally, for (iii),~\Cref{lem:opt-candidate-points-lower-bound} states that all the points $\{(\alpha_i, f^{\rightarrow}_i)\}_{i \in \{0, \dots, i^*\}} \cup \{(\alpha_i, f^{\leftarrow}_i)\}_{i \in \{i^*, \dots, k - 1\}}$ is above the curve $\alpha \mapsto [1 - \alpha]_+$. Since $\of$ is in the convex hull $H$, we can conclude that $\of$ also lies above this curve, as desired.
	
	Now that we have proved that $f(\alpha_0), \dots, f(\alpha_k)$ is a valid input to the \textsc{DiscretizePLD} algorithm (and therefore the output $(P^{\uparrow}, Q^{\uparrow})$ is a pair of valid probability distributions), we will next show that $(P^{\uparrow}, Q^{\uparrow}) \preceq (A, B)$. This is equivalent to showing that $\of(\alpha) \leq h(\alpha)$ for all $\alpha \geq \R_{\geq 0} \cup \{+ \infty\}$.
	To prove this, we consider three cases based on the value of $\alpha$. For brevity, we say that a curve $C_1$ is \emph{below} a curve $C_2$ when they share the same domain $\Omega$ and $C_1(o) \leq C_2(o)$ for all $o \in \Omega$.
	\begin{itemize}
		\item Case I: $\alpha \geq \alpha_{k - 1}$. In this case, $\of(\alpha) = 0 \leq h(\alpha)$.
		\item Case II: $\alpha \in [1, \alpha_{k - 1})$. Suppose that $\alpha \in [\alpha_i, \alpha_{i + 1})$. Let $L_1$ denote the line segment from $(\alpha_i, f^{\leftarrow}(\alpha_i))$ to $(\alpha_{i + 1}, f^{\leftarrow}(\alpha_{i + 1}))$, and $L_2$ denote the line segment from $(\alpha_i, f^{\leftarrow}(\alpha_i))$ to $(\alpha_{i + 1}, h(\alpha_{i + 1}))$. From \Cref{lem:opt-candidate-points-lower-bound}(iv), $L_1$ is below $L_2$. Furthermore, since $\of|_{[\alpha_i, \alpha_{i + 1})}$ is a lower boundary of the convex hull $H$ containing $L_1$, it must also be below $L_1$. Therefore, we have
		\begin{align*}
			\of(\alpha) \leq L_2(\alpha) &\leq L_1(\alpha) \\
			&= h(\alpha_{i + 1}) - (\alpha_{i + 1} - \alpha) h'(\alpha_{i + 1}) \\
			&\leq h(\alpha),
		\end{align*}
		where the last inequality follows from convexity of $h$.
		\item Case III: $\alpha \in [0, 1)$. Suppose that $\alpha \in [\alpha_i, \alpha_{i + 1})$. Similarly to the previous case, let $L_3$ denote the line segment from $(\alpha_i, f^{\rightarrow}(\alpha_i))$ to $(\alpha_{i + 1}, f^{\rightarrow}(\alpha_{i + 1}))$, and $L_4$ denote the line segment from $(\alpha_i, h(\alpha_i))$ to $(\alpha_{i + 1}, f^{\rightarrow}(\alpha_{i + 1}))$. From \Cref{lem:opt-candidate-points-lower-bound}(iii), $L_3$ is below $L_4$. Furthermore, since $\of|_{[\alpha_i, \alpha_{i + 1})}$ is a lower boundary of the convex hull $H$ containing $L_3$, it must also be below $L_3$. Therefore, we have
		\begin{align*}
			\of(\alpha) \leq L_3(\alpha) &\leq L_4(\alpha) \\
			&= h(\alpha_i) + (\alpha_{i + 1} - \alpha) h'(\alpha_i) \\
			&\leq h(\alpha),
		\end{align*}
		where the last inequality follows from convexity of $h$.
	\end{itemize}
	As a result, we can conclude that $(P^{\uparrow}, Q^{\uparrow}) \preceq (A, B)$, completing our proof.
\end{proof}

\subsection{Comparison to Privacy Loss Buckets}
\label{sec:pb-optimistic}

Similar to~\Cref{sec:pb-pessimistic}, PB~\cite{meiser2018tight} can also be applied for optimistic-estimate: let $\tPLD^{\downarrow}_{(A,B)}$ be the probability distribution where
\begin{align*}
	\tPLD^{\downarrow}_{(A,B)}(\eps_{i - 1}) = \PLD_{(A,B)}([\eps_{i - 1}, \eps_i)),
\end{align*}
for all $i \in [k]$. That is, $\tPLD^{\downarrow}_{(A,B)}$ is a probability distribution on $\cE$ that is stochastically dominated by $\PLD_{(A,B)}$; furthermore, $\tPLD^{\downarrow}_{(A,B)}$ is the greatest such distribution under stochastic dominant (partial) ordering. It is important to note that, unlike the pessimistic-PB estimate, the optimistic-PB estimate $\tPLD^{\downarrow}_{(A,B)}$ is \emph{not} necessarily a valid PLD for some pair of distributions. This can easily be seen by, e.g., taking a PLD of any $\eps$-DP mechanism for finite $\eps$ and let $\cE = \{-\infty, +\infty\}$; the optimistic-PB estimate puts all of its mass at $0$, which is clearly not a valid PLD.

We present illustrations of our optimistic PLD and optimistic-PB estimates in \Cref{fig:opt-simple}. Recall that in the pessimistic case, we can show that the pessimistic-PB estimate is no better than our approach (\Cref{lem:pess-better-than-pb}). Although we observe similar behaviours in the optimistic case  in simple examples (e.g. \Cref{fig:opt-simple}) and also in our experiments in \Cref{sec:exp}, this unfortunately does not hold in general. Indeed, if $\PLD_{(A, B)}$ has a non-zero mass at $+\infty$ (or equivalently $h(+\infty) \ne 0$), then the optimistic-PB estimate still keeps this mass while our does not. The latter is because we set $f(+\infty) = 0$ in \Cref{alg:optimistic}.  Note here that we cannot set $f(+\infty) = h(+\infty)$ here because the monotonicity may not hold anymore; it is possible that $f^{\rightarrow}(\alpha_i) < h(+\infty)$ for some $i \in [i^*]$. Such examples highlight the challenge in finding a good optimistic estimate (especially in light of the non-existence of the best one, i.e., \Cref{lem:opt-not-unique}), and we provide further discussion regarding this in \Cref{sec:open}.

\begin{figure*}
	\centering
	\begin{subfigure}[b]{0.48\textwidth}
		\centering
		\includegraphics[width=\textwidth]{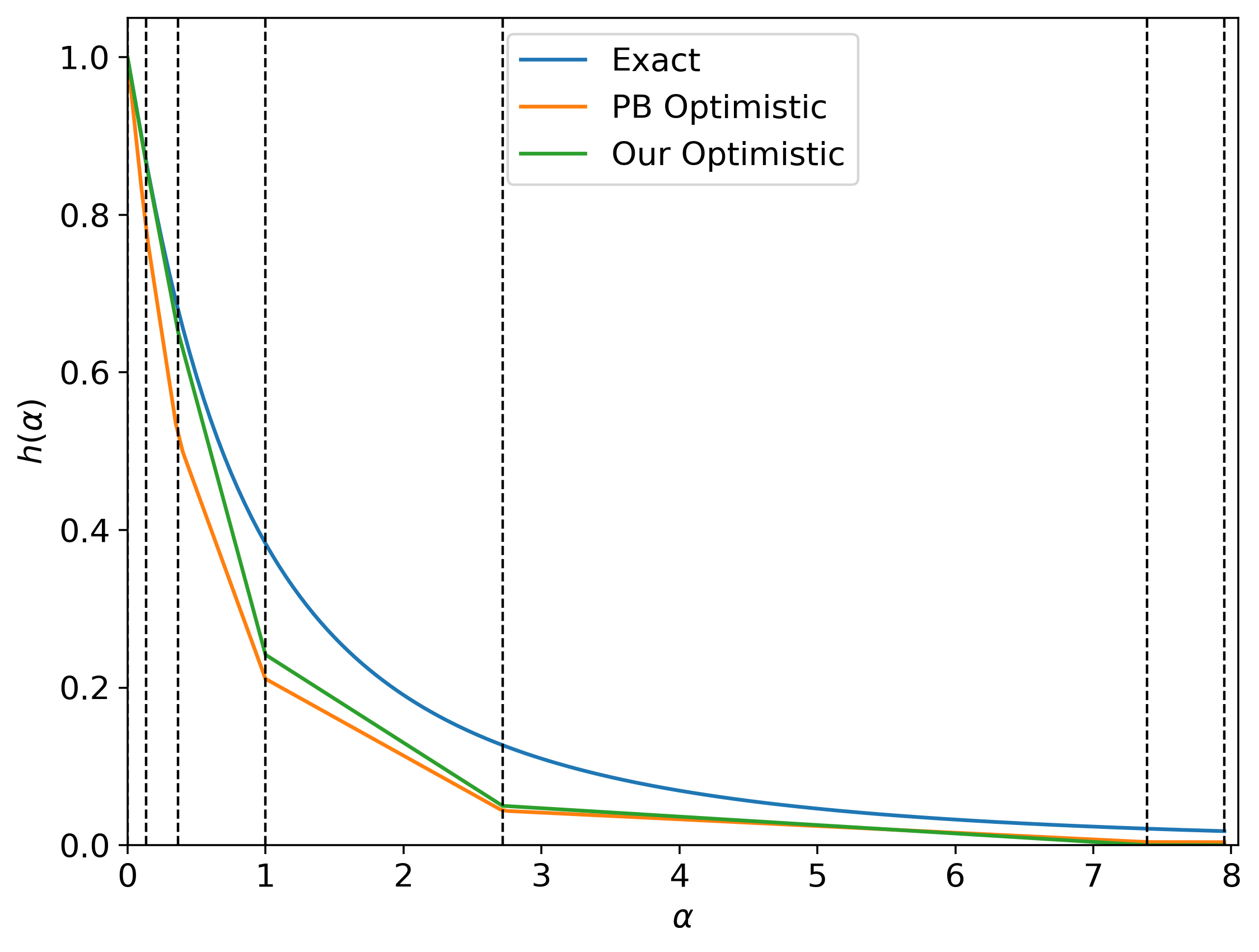}
		\caption{Gaussian Mechanism}
		\label{fig:opt-gaussian}
	\end{subfigure}
	\hfill
	\begin{subfigure}[b]{0.48\textwidth}
		\centering
		\includegraphics[width=\textwidth]{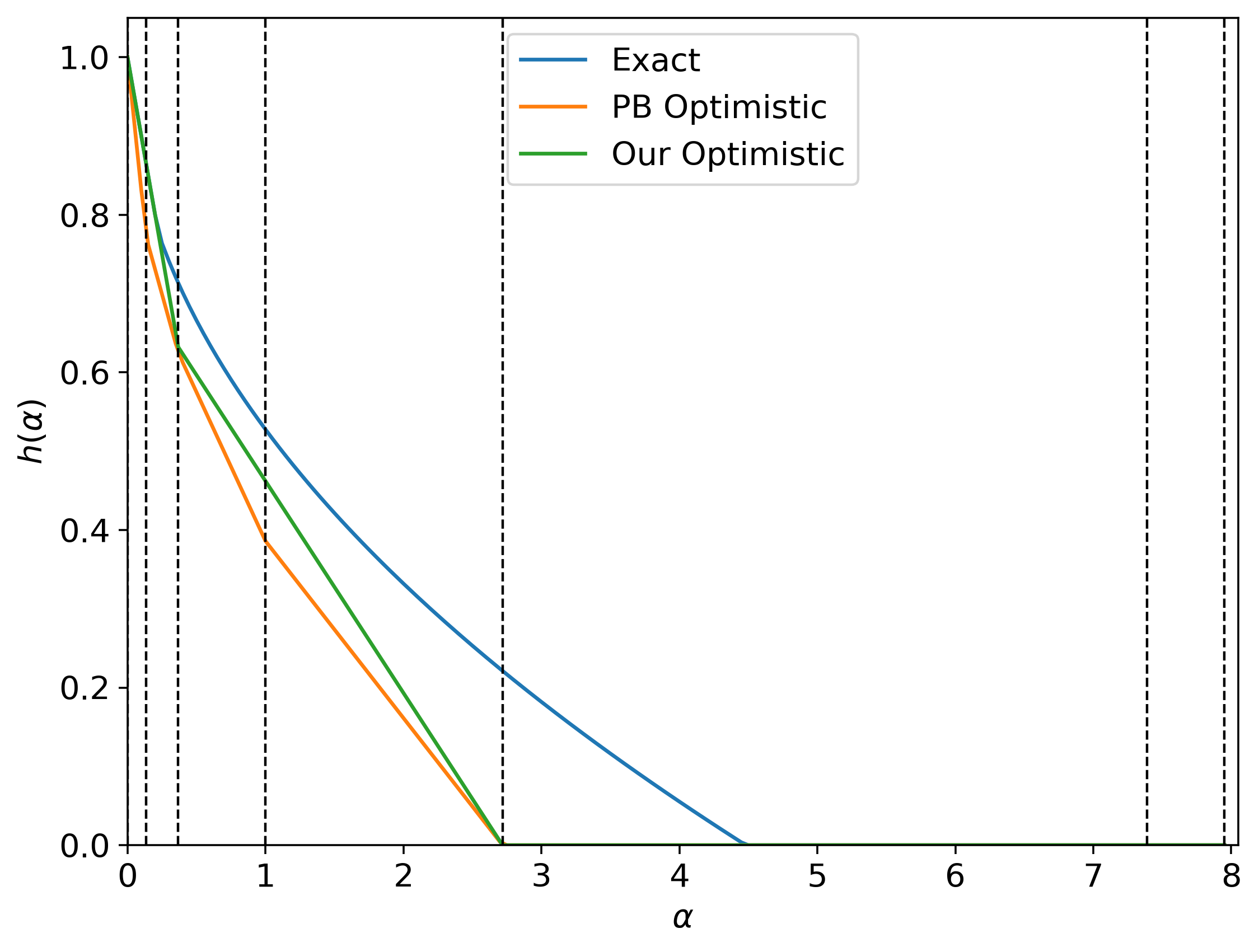}
		\caption{Laplace Mechanism}
		\label{fig:opt-laplace}
	\end{subfigure}
	\caption{Illustrations of the hockey-stick curves of the Gaussian and Laplace mechanisms , and their optimistic estimates from our approach  and the Privacy Bucket (PB) approach of~\cite{meiser2018tight}. The setting of parameters and labels are similar to \Cref{fig:pess-simple}.}
	\label{fig:opt-simple}
\end{figure*}

\begin{figure*}
	\centering
	\begin{subfigure}[b]{0.32\textwidth}
		\centering
		\includegraphics[width=\textwidth]{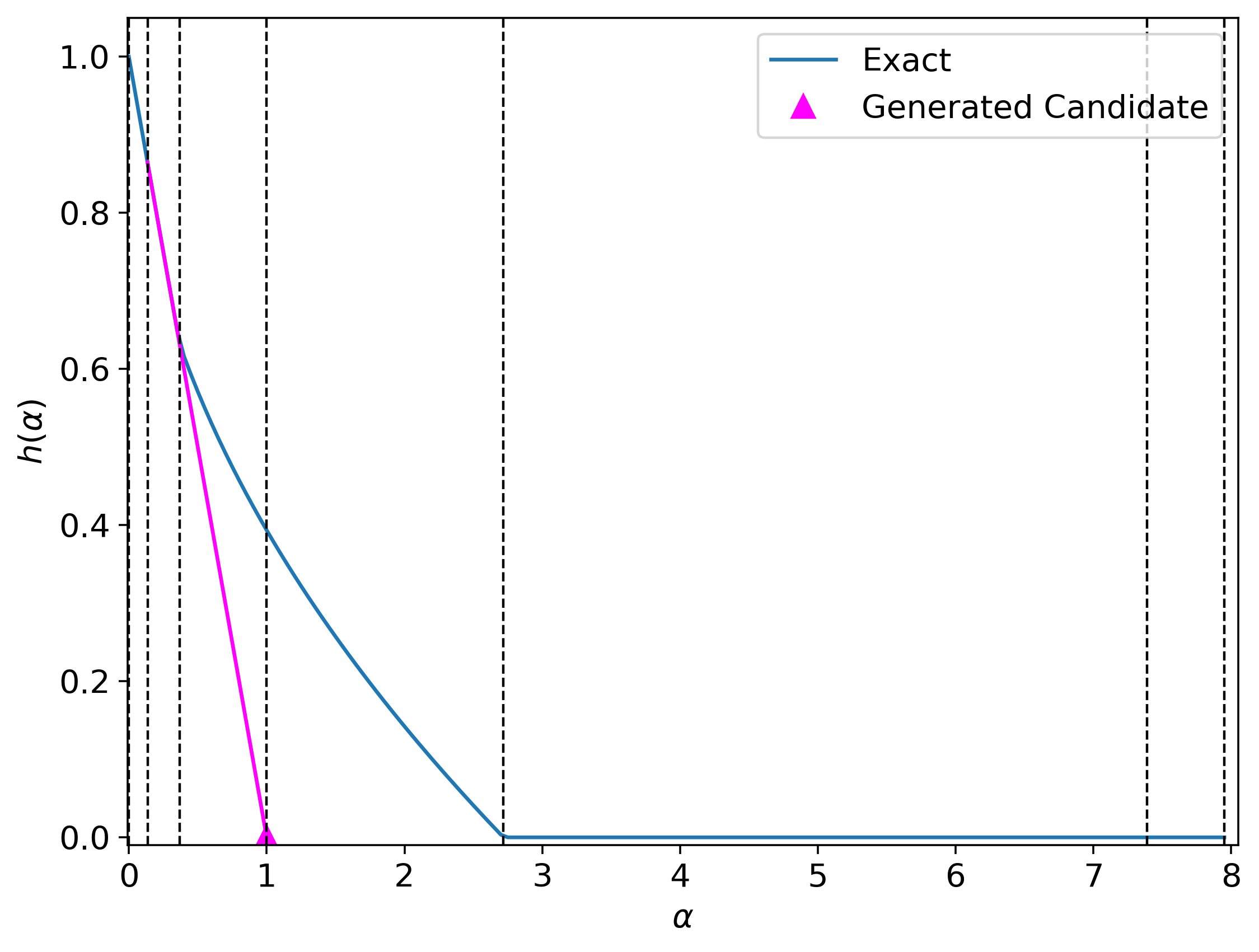}
		\caption{One Step of Candidate Generation for $i < i^*$}
		\label{fig:opt-candidate-forward}
	\end{subfigure}
	\hfill
	\begin{subfigure}[b]{0.32\textwidth}
		\centering
		\includegraphics[width=\textwidth]{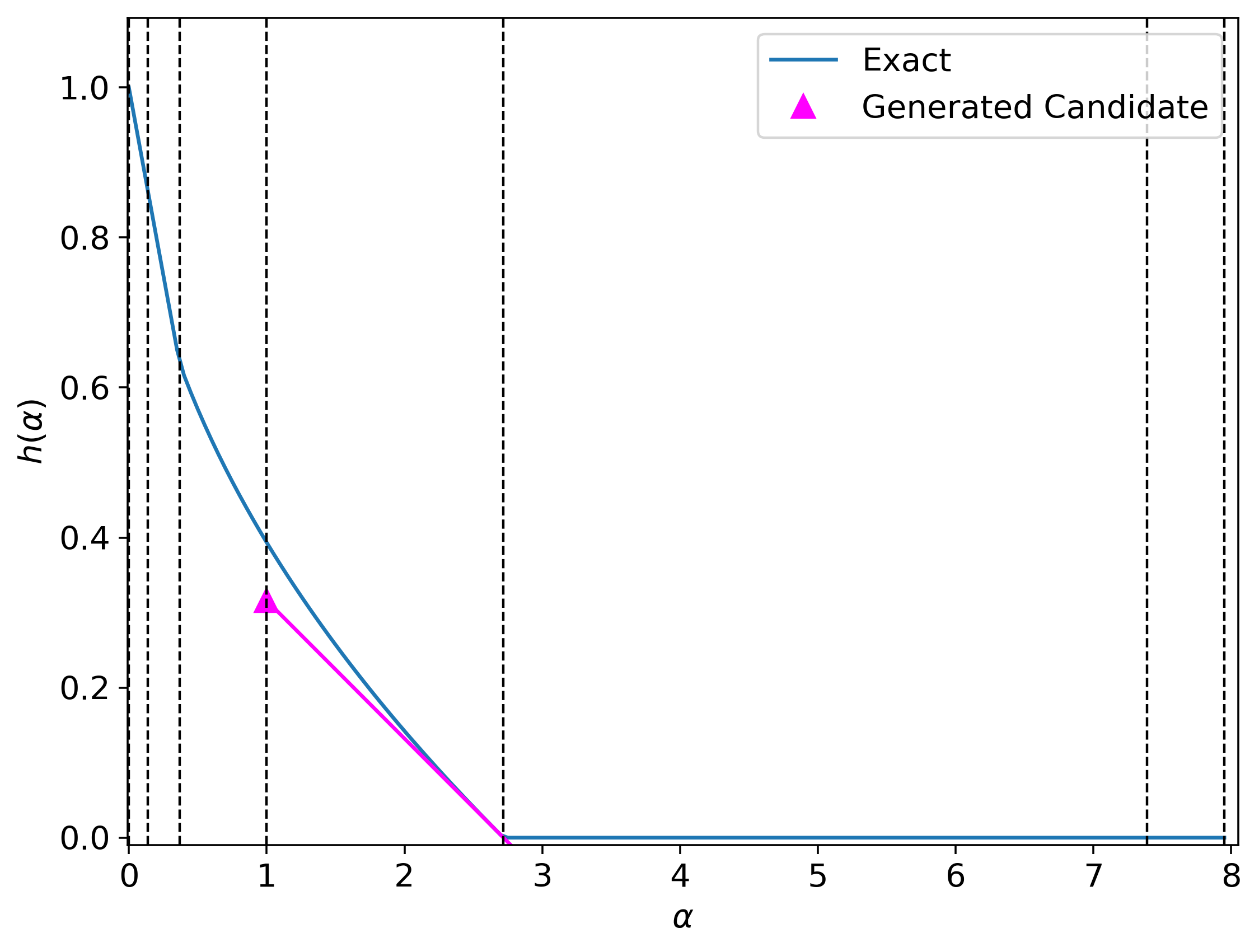}
		\caption{One Step of Candidate Generation for $i > i^*$}
		\label{fig:opt-candidate-backward}
	\end{subfigure}
	\hfill
	\begin{subfigure}[b]{0.32\textwidth}
		\centering
		\includegraphics[width=\textwidth]{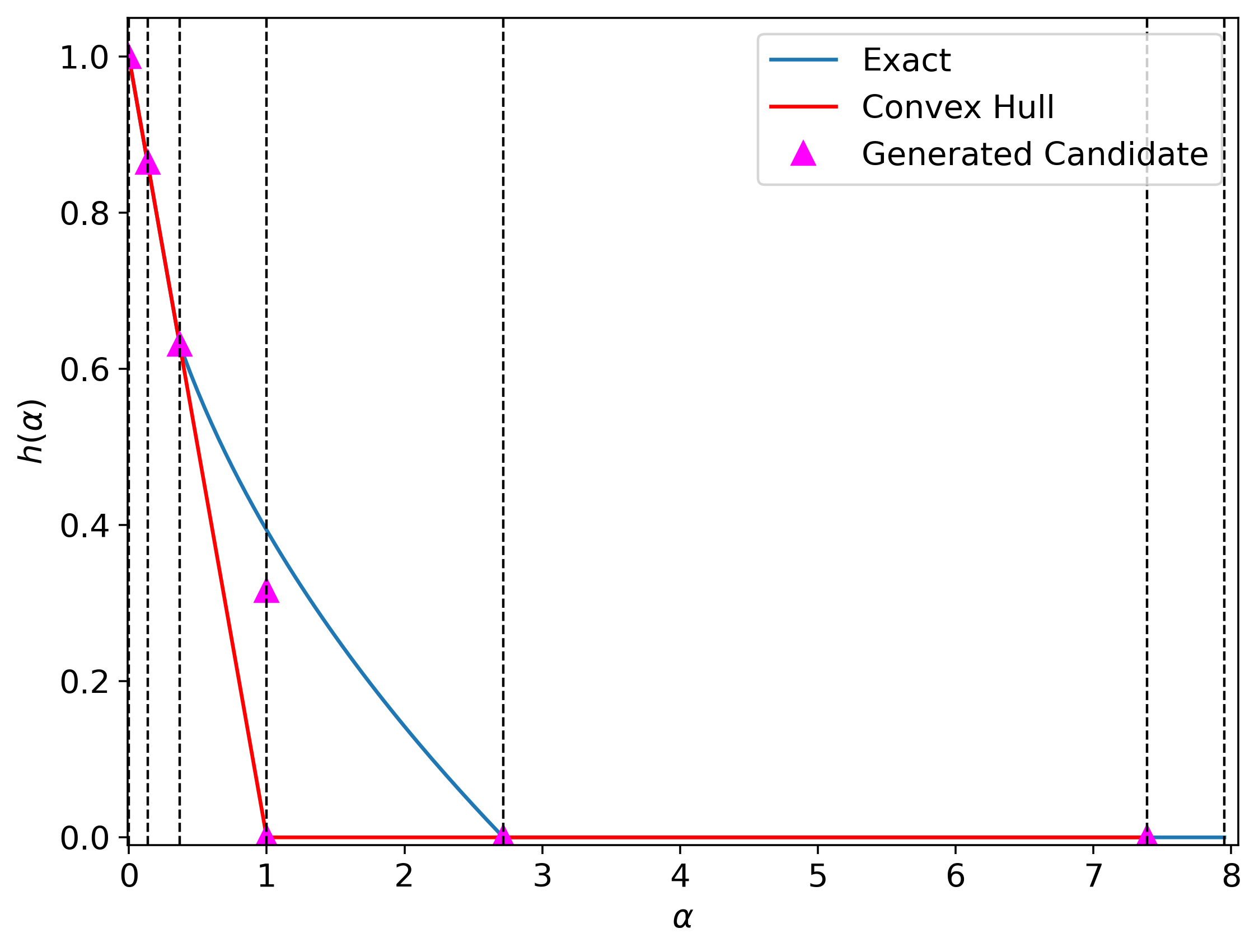}
		\caption{All Candidate Points and Its Convex Hull}
		\label{fig:opt-convex-hull}
	\end{subfigure}
	\caption{Illustrations of our optimistic PLD construction algorithm (\Cref{alg:optimistic}) for Laplace mechanism with noise multiplier 1. \Cref{fig:opt-candidate-forward} demonstrates one step of how the candidate points are generated when $i < i^*$. Specifically, a line tangent to the hockey-stick curve is drawn at each point $(\alpha_i, h(\alpha_i))$; the intersections with the vertical line at $\alpha_{i+1}$ give the candidate points $(\alpha_{i+1}, f^{\rightarrow}_{i+1})$. Similarly, \Cref{fig:opt-candidate-backward} shows such a step for $i > i^*$; in this case, the same line is drawn and its intersection with the vertical line at $\alpha_{i-1}$ give the candidate points $(\alpha_{i-1}, f^{\leftarrow}_{i-1})$. \Cref{fig:opt-convex-hull} shows all the candidate points generated together with its convex hull, which we use as our optimistic estimate.}
	\label{fig:opt-step-by-step}
\end{figure*}

\section{Evaluation}\label{sec:exp}

We compare our algorithm with the Privacy Buckets algorithm~\cite{meiser2018tight} as implemented in the Google DP library\footnote{Even though there are several other papers~\cite{koskela2020computing,koskela2021tight,koskela2021computing} that build on PB, all of them still use the same PB-based approximation, with the differences being how the truncation is computed for FFT. We use the implementation in the Google DP library \href{https://github.com/google/differential-privacy/tree/main/python}{github.com/google/differential-privacy/tree/main/python}}, and the algorithm of Gopi et al.~\cite{gopi2021numerical} implemented in Microsoft PRV Accountant.\footnote{Implementation from \href{https://github.com/microsoft/prv_accountant}{github.com/microsoft/prv\_accountant}}
Gopi et al.'s algorithm does not fit into the pessimistic/optimistic framework as described in~\Cref{sec:accounting-framework}. Instead, their algorithm uses an approximation of PLD that is neither optimistic nor pessimistic and uses a concentration bound to derive pessimistic and optimistic estimates. Indeed, their approximate distribution maintains the same expectation as the true PLD, which is the main ingredient in their improvement over previous work.

\begin{figure}[t]
	\centering
	\includegraphics[width=0.45\textwidth]{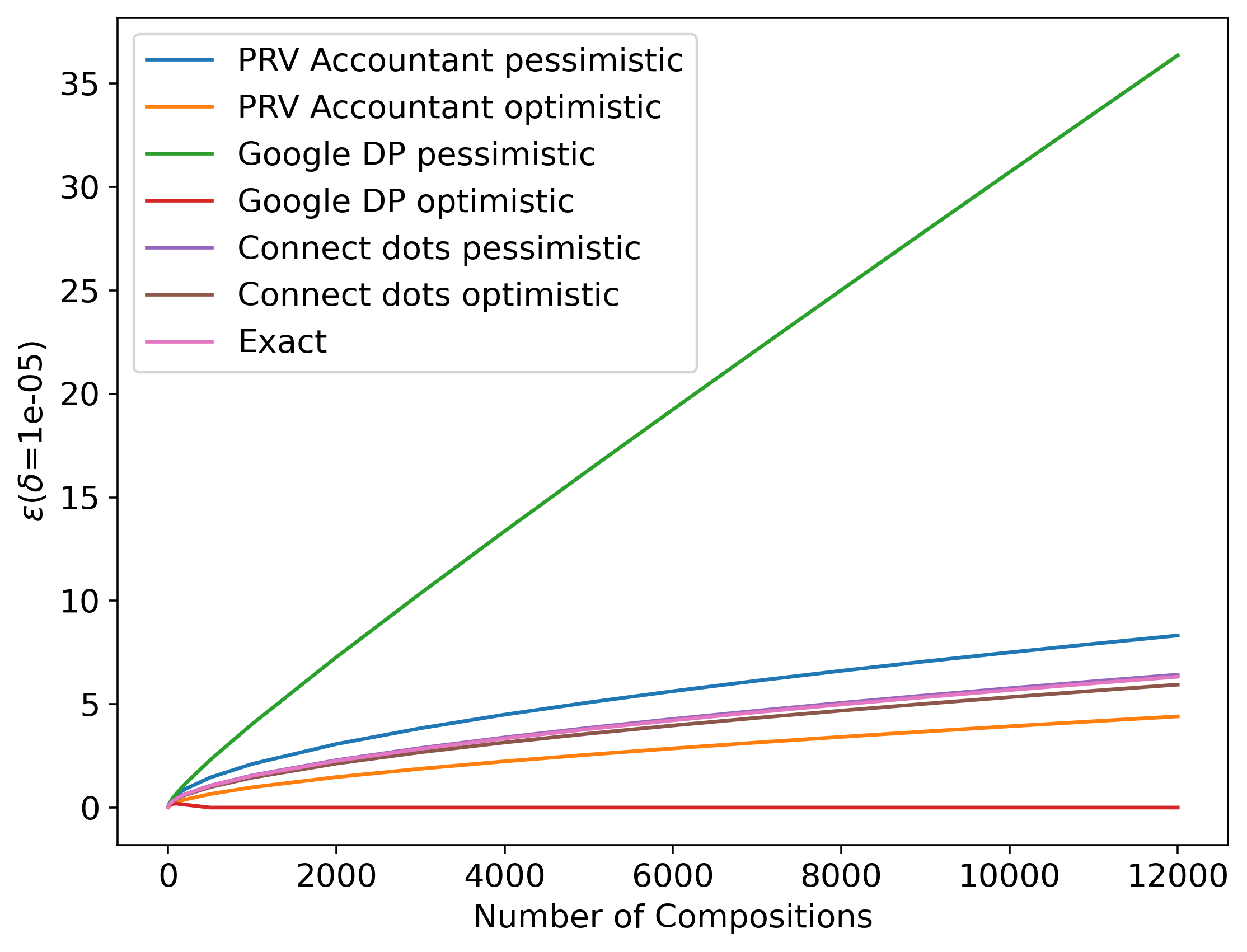}
	\caption{$\eps$ vs Number of Compositions of the Gaussian mechanism with noise scale $80$. All methods evaluated with the same discretization interval of $0.005$.}
	\label{fig:gaussian-same-disc-epsilon}
\end{figure}

As a first cut, we evaluate pessimistic and optimistic estimates on the privacy parameter $\eps$, for a fixed value of $\delta=10^{-5}$, for varying number of compositions of the Gaussian mechanism with noise scale $80$, while comparing our approach to the two other implementations mentioned above. We use the same discretization interval to evaluate each algorithm. The reason for choosing the Gaussian mechanism is that the exact value of $\eps$ can be computed explicitly. We find that the estimates given by our approach are the tightest.

\begin{remark}
For any specified discretization interval, each algorithm has a different choice of how many discretization points are included in $\calE$. Our implementation uses the same set of discretization points as used by the Google DP implementation. The number of discretization points increases with the number of self compositions (we use the Google DP implementation to perform self composition\footnote{We found that the Google DP implementation has a significantly worse running time when computing optimistic estimates, due to lack of truncation. We modify the self-composition method in the Google DP library to incorporate truncation when computing optimistic estimates, and evaluate both ours and Google DP implementation with this minor modification. These do not change the estimates significantly, but drastically reduce the running time.}). On the other hand, Microsoft PRV Accountant chooses a number of discretization points, depending on the number of compositions desired, and this number does not change after self composition. In all the evaluation experiments mentioned in this paper, we find that the number of discretization points in our approach are lower than the number of discretization points in the PRV Accountant (even after composition). 
\end{remark}

Our main evaluation involves computing pessimistic and optimistic estimates on the privacy parameter $\eps$, for a fixed value of $\delta$, for varying number of compositions of the Poisson sub-sampled Gaussian mechanism and comparing our approach to each of the two other implementations.  Note that this particular mechanism is quite popular in that it captures the privacy analysis of DP-SGD where the number of compositions is equal to the number of iterations of the training algorithm, and the subsampling rate is equal to the fraction of the batch size divided by the total number of training examples~\cite{abadi2016deep}. In particular, we consider the Gaussian mechanism with noise scale $1$, Poisson-subsampled with probability $0.01$. We compare against each competing algorithm twice, once where both algorithms use the same discretization interval, and once where our approach uses a larger discretization interval than the competing algorithm. We additionally plot the running time required for this computation for each number of compositions; we ran the evaluation for each number of compositions $20 \times$ and plot the mean running time along with a shaded region indicating 25th--75th percentiles of running time.

\paragraph*{Comparison with Google DP.} The comparison with Google DP is presented in \Cref{fig:subsampgauss-pb}.
Figures \ref{fig:subsampgauss-pb-same-disc-epsilon} and \ref{fig:subsampgauss-pb-same-disc-runtime} compare the $\eps$'s and runtimes for both methods using the same discretization interval, and finds that our method gives a significantly tighter estimate for a mildly larger running time.
Figures \ref{fig:subsampgauss-pb-diff-disc-epsilon} and \ref{fig:subsampgauss-pb-diff-disc-runtime} compare the $\eps$'s and runtimes for both methods with different discretization intervals, and using a discretization interval that is $66.66 \times$ larger, our method gives comparable estimates, with a drastic speed-up ($\sim 300 \times$).

\begin{figure*}
	\centering
	\begin{subfigure}[b]{0.48\textwidth}
		\centering
		\includegraphics[height=0.7\textwidth]{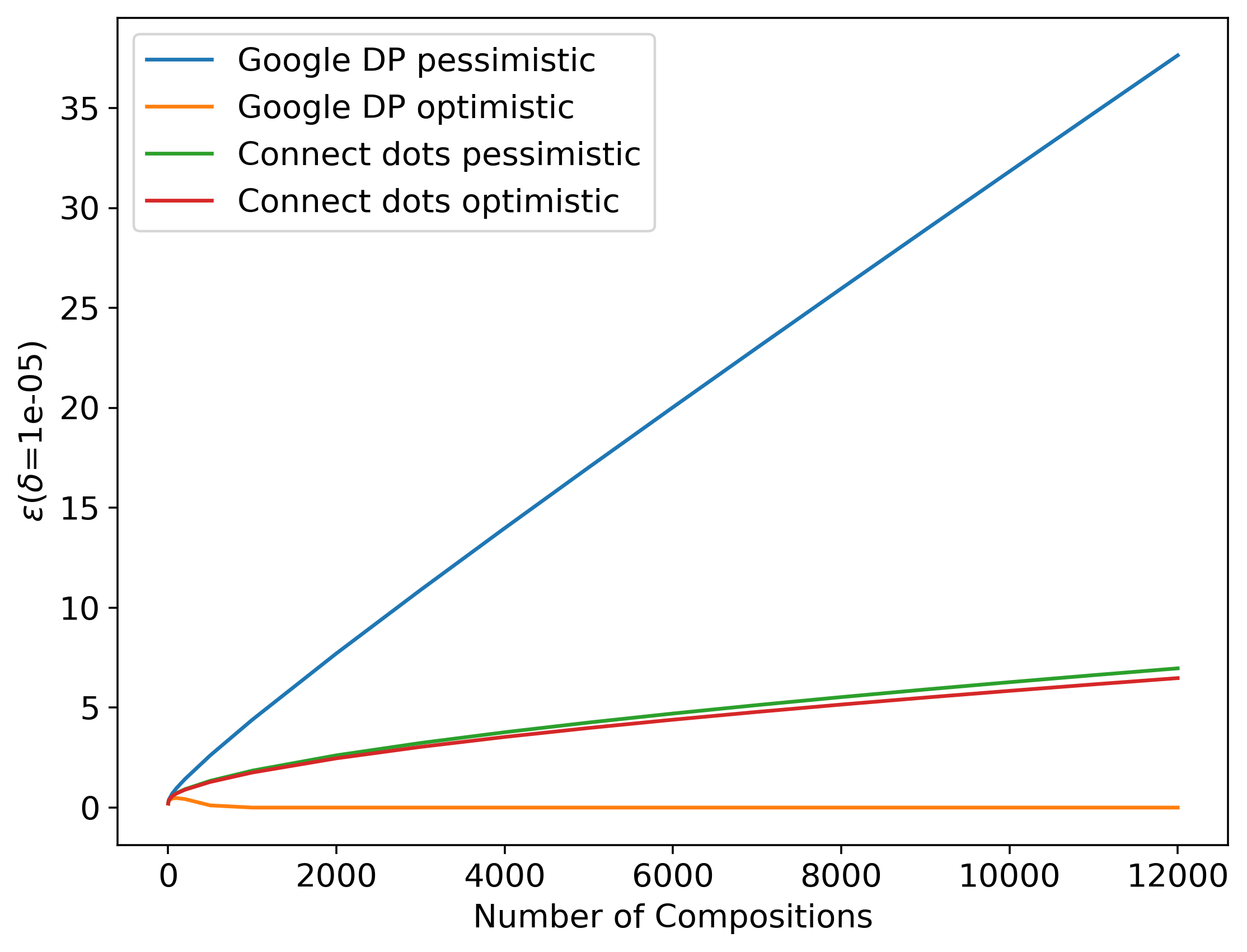}
		\caption{$\eps$ comparison with discretization interval 0.005 for both this work and Google DP}
		\label{fig:subsampgauss-pb-same-disc-epsilon}
	\end{subfigure}
	\hfill
	\begin{subfigure}[b]{0.48\textwidth}
		\centering
		\includegraphics[height=0.7\textwidth]{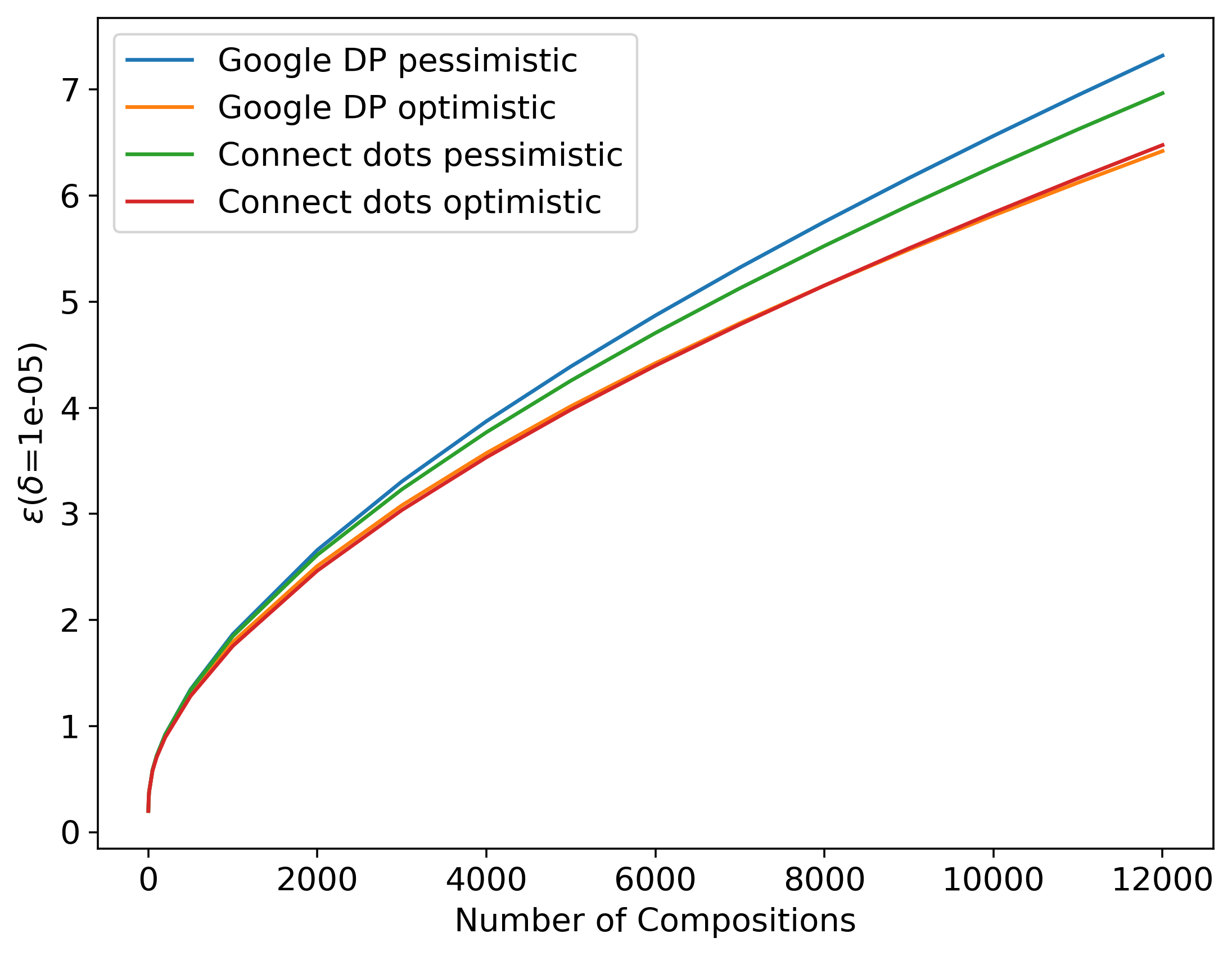}
		\caption{$\eps$ comparison with discretization interval 0.005 for this work, 0.000075 for Google DP}
		\label{fig:subsampgauss-pb-diff-disc-epsilon}
	\end{subfigure}
	\newline
	\begin{subfigure}[b]{0.48\textwidth}
		\centering
		\includegraphics[height=0.7\textwidth]{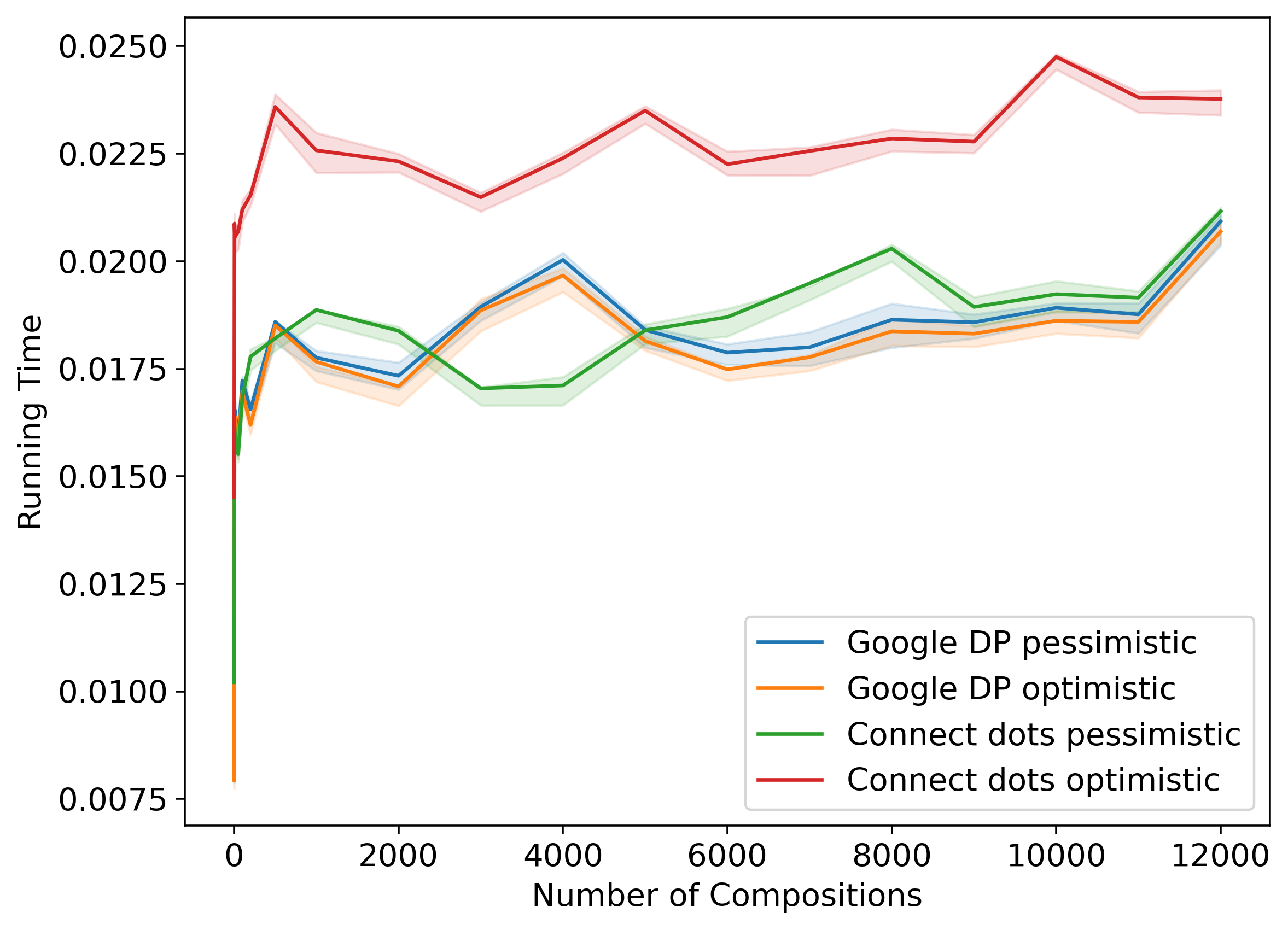}
		\caption{Running time comparison with discretization interval 0.005 for both this work and Google DP (shaded region indicates 25-75 percentile range across 20 independent runs)}
		\label{fig:subsampgauss-pb-same-disc-runtime}
	\end{subfigure}
	\hfill
	\begin{subfigure}[b]{0.48\textwidth}
		\centering
		\includegraphics[height=0.7\textwidth]{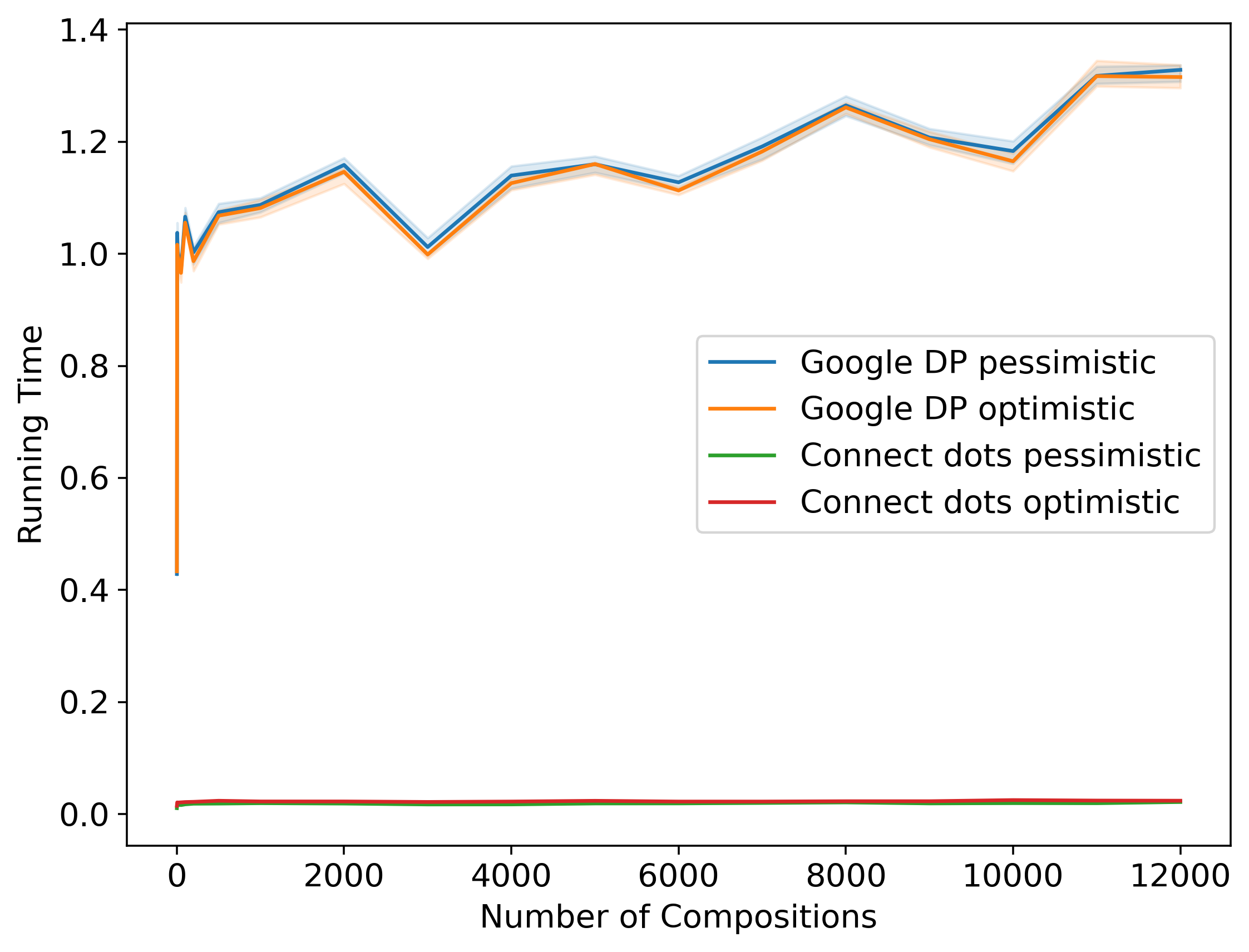}
		\caption{Running time comparison with discretization interval 0.005 for this work, 0.000075 for Google DP (shaded region indicates 25-75 percentile range across 20 independent runs)}
		\label{fig:subsampgauss-pb-diff-disc-runtime}
	\end{subfigure}
	\caption{Computation of pessimistic/optimistic estimates of the privacy parameter $\eps$ (for fixed parameter $\delta = 10^{-5}$) for self-composition of the Gaussian mechanism with noise scale $1$, Poisson-subsampled with probability $0.01$, using the Google DP implementation of the PB approach~\cite{meiser2018tight} vs. our approach. %
	Figures \ref{fig:subsampgauss-pb-same-disc-epsilon} and \ref{fig:subsampgauss-pb-same-disc-runtime} compare the $\eps$'s and runtimes for both methods using the same discretization interval. 
	Figures \ref{fig:subsampgauss-pb-diff-disc-epsilon} and \ref{fig:subsampgauss-pb-diff-disc-runtime} compare the $\eps$'s and runtimes for both methods with different discretization intervals.
	}
	\label{fig:subsampgauss-pb}
\end{figure*}

\paragraph*{Comparison with Microsoft PRV Accountant.} The comparison with Microsoft PRV Accountant is presented in \Cref{fig:subsampgauss-gopi}.
Figures \ref{fig:subsampgauss-gopi-same-disc-epsilon} and \ref{fig:subsampgauss-gopi-same-disc-runtime} compare the $\eps$'s and runtimes for both methods using the same discretization interval, and finds that our method gives a significantly tighter estimate with already shorter running time.
Figures \ref{fig:subsampgauss-gopi-diff-disc-epsilon} and \ref{fig:subsampgauss-gopi-diff-disc-runtime} compare the $\eps$'s and runtimes for both methods with different discretization intervals, and using a discretization interval that is $6.66 \times$ larger, our method gives comparable estimates, with an even larger speed-up.

\begin{figure*}
	\centering
	\begin{subfigure}[b]{0.48\textwidth}
		\centering
		\includegraphics[height=0.7\textwidth]{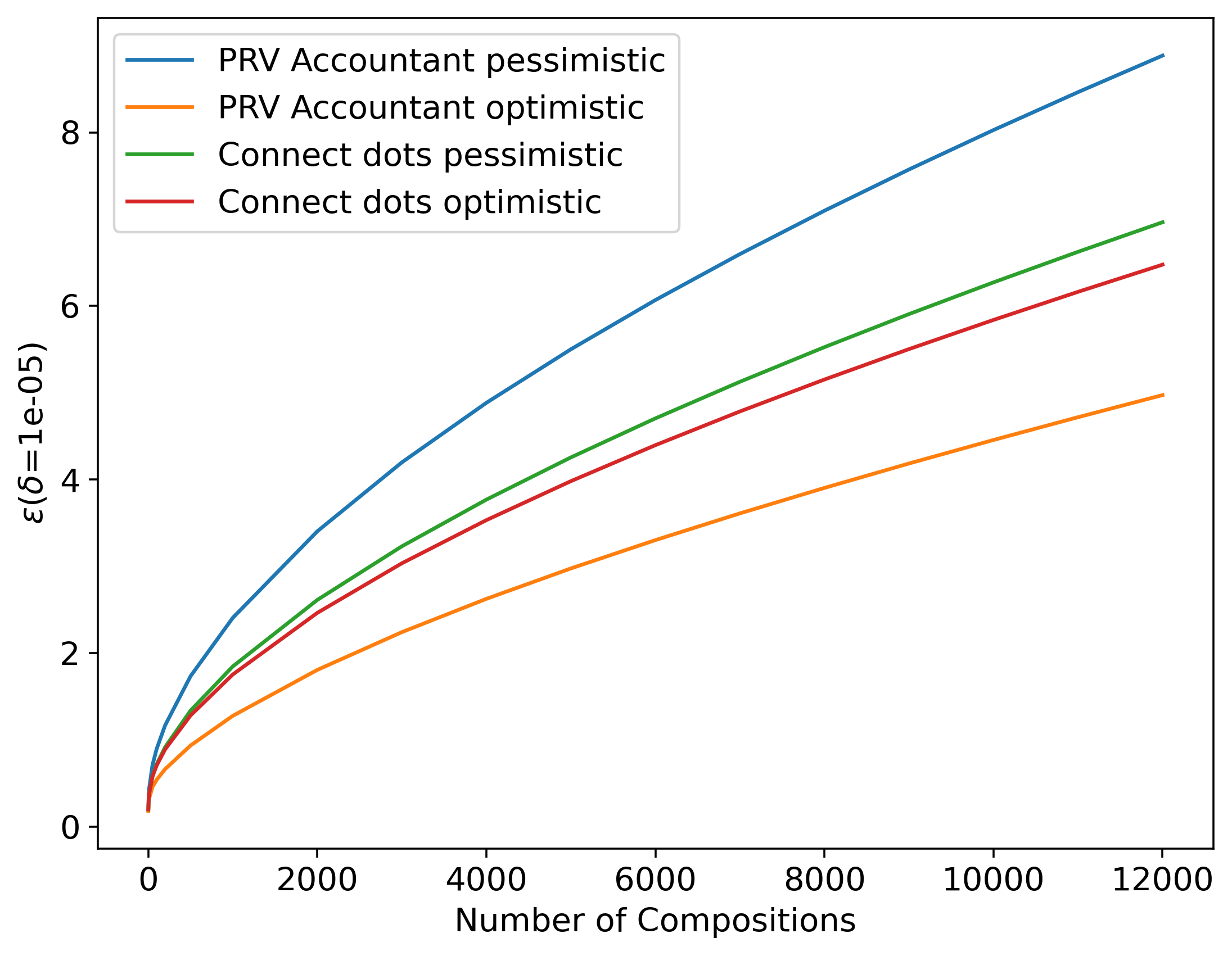}
		\caption{$\eps$ comparison with discretization interval 0.005 for both this work and PRV Accountant}
		\label{fig:subsampgauss-gopi-same-disc-epsilon}
	\end{subfigure}
	\hfill
	\begin{subfigure}[b]{0.48\textwidth}
		\centering
		\includegraphics[height=0.7\textwidth]{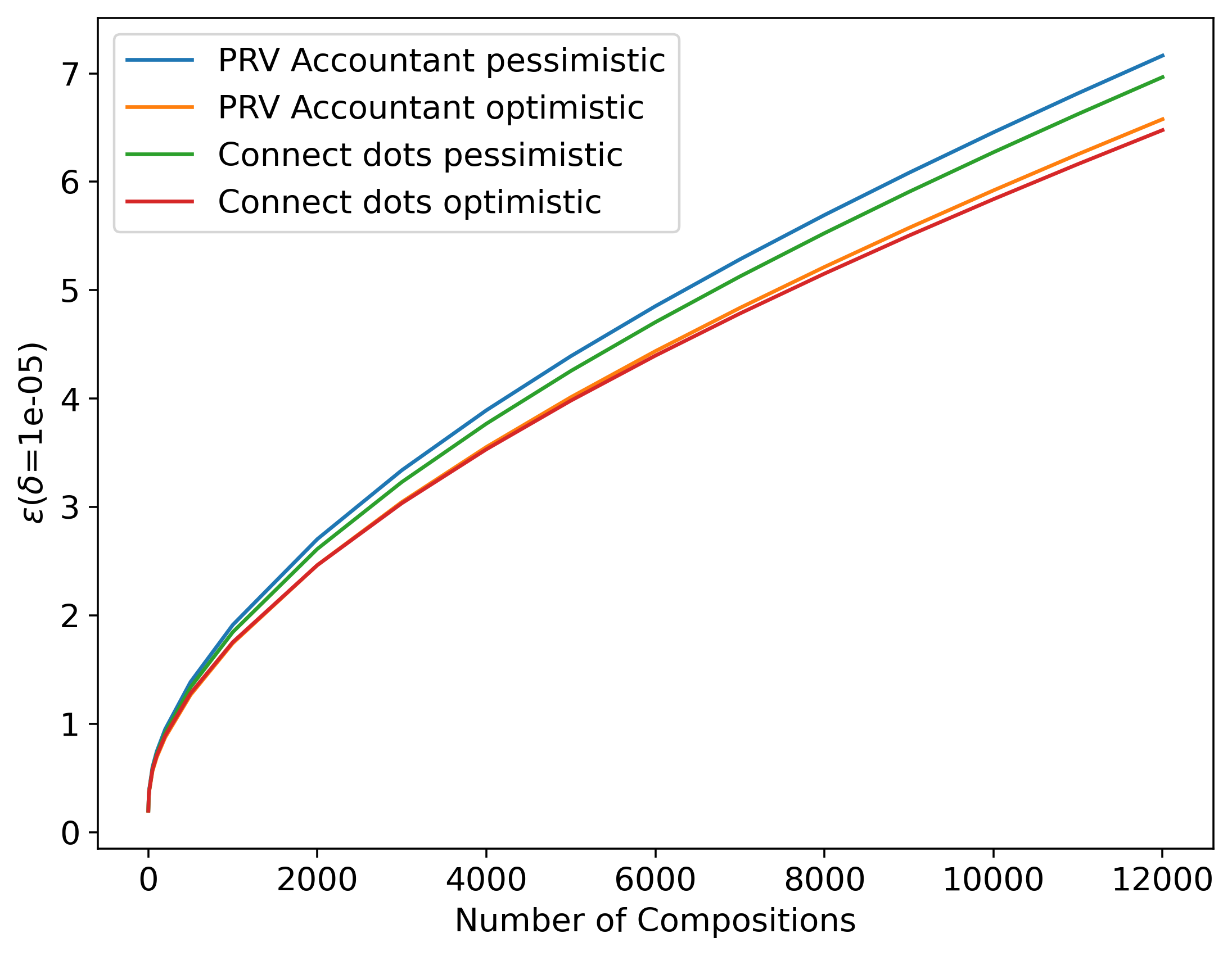}
		\caption{$\eps$ comparison with discretization interval 0.005 for this work, 0.00075 for PRV Accountant (shaded region indicates 25-75 percentile range across 20 independent runs)}
		\label{fig:subsampgauss-gopi-diff-disc-epsilon}
	\end{subfigure}
	\newline
	\begin{subfigure}[b]{0.48\textwidth}
		\centering
		\includegraphics[height=0.7\textwidth]{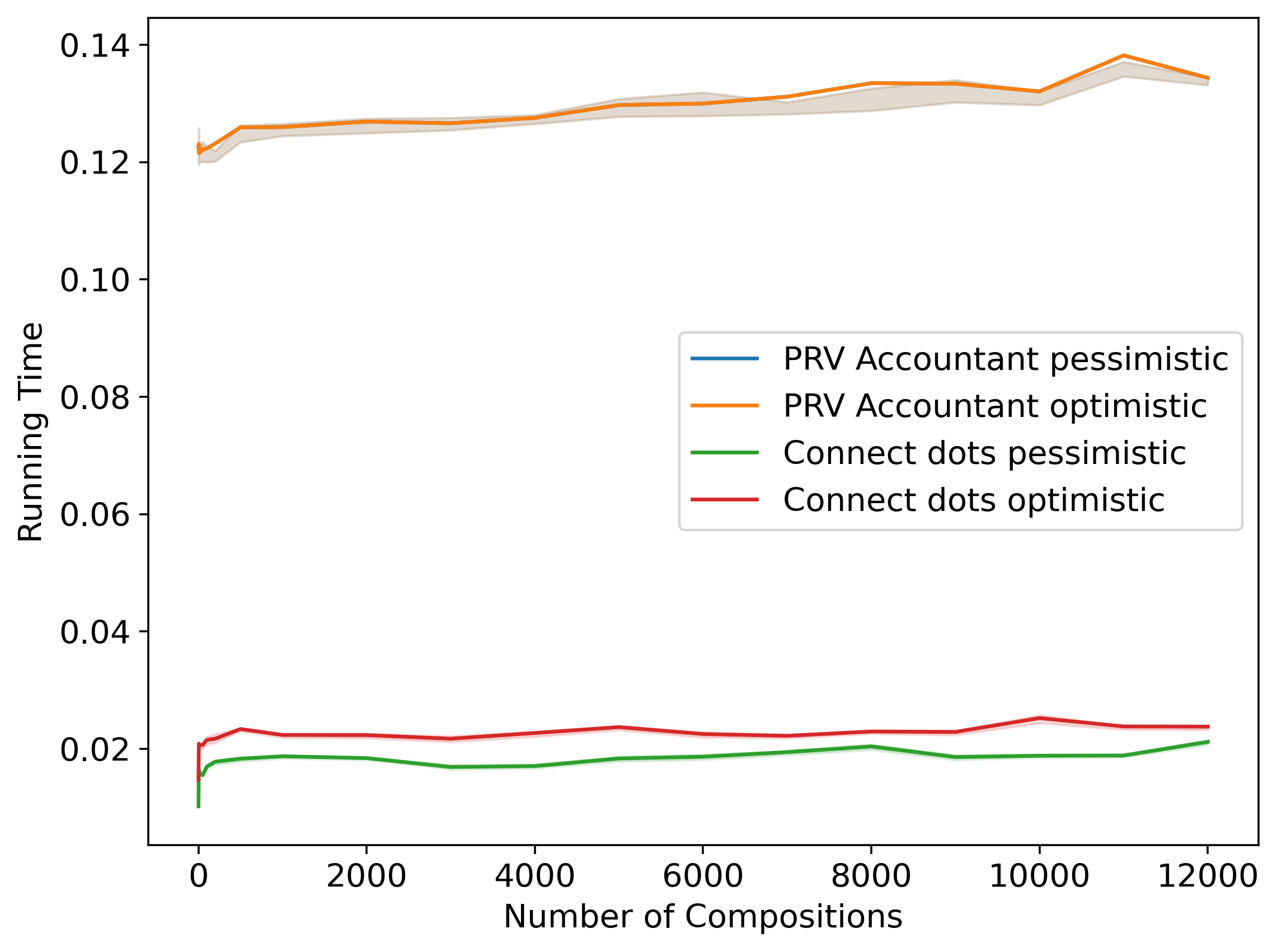}
		\caption{Running time comparison with discretization interval 0.005 for both this work and PRV Accountant (shaded region indicates 25-75 percentile range across 20 independent runs)}
		\label{fig:subsampgauss-gopi-same-disc-runtime}
	\end{subfigure}
	\hfill
	\begin{subfigure}[b]{0.48\textwidth}
		\centering
		\includegraphics[height=0.7\textwidth]{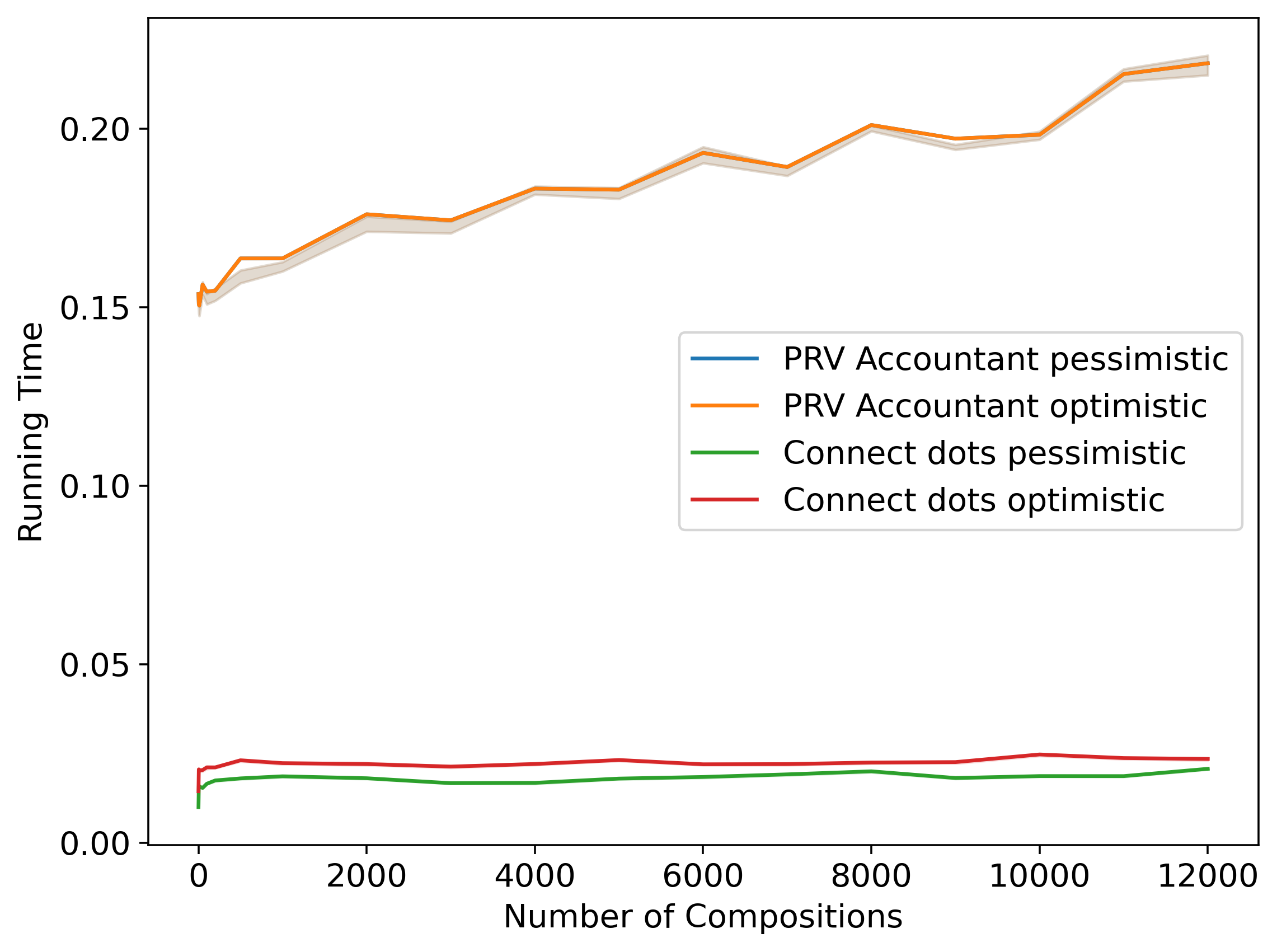}
		\caption{Running time comparison with discretization interval 0.005 for this work, 0.00075 for PRV Accountant}
		\label{fig:subsampgauss-gopi-diff-disc-runtime}
	\end{subfigure}
	\caption{Computation of pessimistic/optimistic estimates of the privacy parameter $\eps$ (for fixed parameter $\delta = 10^{-5}$) for self-composition of the Gaussian mechanism with noise scale $1$, Poisson-subsampled with probability $0.01$, using the Microsoft PRV Accountant~\cite{gopi2021numerical} vs. our approach. We note that the pessimistic and optimistic curves of the Microsoft PRV Accountant~\cite{gopi2021numerical} are identical in \Cref{fig:subsampgauss-gopi-same-disc-runtime} and~\Cref{fig:subsampgauss-gopi-diff-disc-runtime}. %
	}
	\label{fig:subsampgauss-gopi}
\end{figure*}

In \Cref{apx:poisson-subsampled-laplace}, we perform a similar evaluation for the Poisson-subsampled Laplace mechanism.

\section{Discussion \& Open Problems}
\label{sec:open}

In this work, we have proposed a novel approach to pessimistic and optimistic estimates for PLDs, which outperforms previous approaches under similar discretization intervals, and allows for a more compact representation of PLDs while retaining similar error guarantees. There are still several interesting future directions that one could consider.

As we have proved in \Cref{lem:opt-not-unique}, there is no unique ``best'' way to pick an optimistic estimate, and we proposed a greedy algorithm (\Cref{alg:optimistic}) for this task. However, it is difficult to determine how good this greedy algorithm is in general. Instead, it might also be interesting to find $(P^{\downarrow}, Q^{\downarrow}) \preceq (A, B)$ that minimizes a certain objective involving $h_{(P^{\downarrow}, Q^{\downarrow})}$ and $h_{(A, B)}$. For example, one could consider the area between the two curves, or the Fr\'{e}chet distance between them. An intriguing direction here is to determine (i) which objective captures the notion of ``good approximation'' better in terms of composition, and (ii) for a given objective, whether there is an efficient algorithm to compute such $(P^{\downarrow}, Q^{\downarrow})$. We remark that for some objectives, such as the area between the two curves, it is possible to discretize the candidate values for each $f(\alpha_i)$ and use dynamic programming in an increasing order of $i$ (with the state being $f(\alpha_{i - 1}), f(\alpha_i)$). Even for these objectives, it remains interesting to determine whether such discretization is necessary and whether more efficient algorithms exist.

Also related is the question of how to theoretically explain our experimental findings (\Cref{sec:exp}). Although we see significant numerical improvements, it is intriguing to understand theoretically where these improvements come from and which properties of PLDs govern how big such improvements are. More broadly, given an estimate of a PLD, how can we quantify how ``good'' it is? Previous work (e.g.,~\cite{koskela2020computing,gopi2021numerical}) has obtained certain theoretical bounds on the errors; it would be interesting to investigate whether these bounds can help answer the aforementioned question.

Furthermore, the entire line of work on PLD-based accounting~\cite{meiser2018tight,koskela2020computing,koskela2021tight,koskela2021computing}, including this paper, has so far considered only \emph{non-interactive} compositions, meaning that the mechanisms that are run in subsequent steps cannot be changed based on the outputs from the previous steps. This is not a coincidence: interactive composition is highly non-trivial and in fact it is known that the advanced composition theorem (which is even a more specific form of PLD) does not hold in this regime~\cite{RogersVRU16}. Several solutions have been proposed here, such as modified formulae for advanced compositions~\cite{RogersVRU16,whitehouseimproved} and Renyi DP~\cite{renyifilter1,renyifilter2}. However, as discussed earlier, these methods may be loose even in the non-interactive setting, which is the original motivation for PLD-based accounting. Therefore, it would be interesting to understand whether there is a tighter method similar to PLDs that also works in the interactive setting.

\subsection*{Acknowledgments}

This research received no specific grant from any funding agency in the public, commercial, or not-for-profit sectors.

\newpage
\bibliographystyle{plain}
\bibliography{main.bbl}

\newpage

\appendix

\section{Proofs}\label{apx:proofs}

\lemPldHockeyStick*
\begin{proof}
We have from the definition of hockey-stick divergence that
\begin{align*}
D_{e^\eps}(P || Q) &~=~ \sum_{\omega}\ [P(\omega) - e^\eps \cdot Q(\omega)]_{+}\\
&~=~ \sum_{\omega}\ [1 - e^{\eps - \log(P(\omega)/Q(\omega))} ]_{+} \cdot P(\omega)\\
&~=~ \sum_{\eps' \in \supp(\PLD_{(P, Q)})} [1 - e^{\eps - \eps'} ]_{+} \cdot \PLD_{(P, Q)}(\eps')
\end{align*}
where the last line follows from the fact that
\[
\PLD_{(P, Q)}(\eps') ~:=~ \sum_{\omega : \log(P(\omega)/Q(\omega)) = \eps'} P(\omega)\,.
\qedhere
\]
\end{proof}

\section{Evaluation of Poisson-Subsampled Laplace Mechanism}\label{apx:poisson-subsampled-laplace}

Similar to \Cref{sec:exp}, we compute pessimistic and optimistic estimates on the privacy parameter $\eps$, for a fixed value of $\delta$, for varying number of compositions of the Poisson sub-sampled Laplace mechanism and comparing our approach to the Google DP implementation.\footnote{we were unable to compare against Microsoft PRV Accountant, since their implementation does not have support for the Laplace mechanism yet.} In particular, we consider the Laplace mechanism with noise scale $5$, Poisson-subsampled with probability $0.01$. We compare against Google DP implementation twice, once where both algorithms use the same discretization interval, and once where our approach uses a larger discretization interval than the competing algorithm. We additionally plot the running time required for this computation for each number of compositions; we ran the evaluation for each number of compositions $20 \times$ and plot the mean running time along with a shaded region indicating 25th--75th percentiles of running time.

The comparison with Google DP is presented in \Cref{fig:subsamplaplace-pb}.
Figures \ref{fig:subsampgauss-pb-same-disc-epsilon} and \ref{fig:subsamplaplace-pb-same-disc-runtime} compare the $\eps$'s and runtimes for both methods using the same discretization interval, and finds that our method gives a significantly tighter estimate for a mildly larger running time.
Figures \ref{fig:subsamplaplace-pb-diff-disc-epsilon} and \ref{fig:subsamplaplace-pb-diff-disc-runtime} compare the $\eps$'s and runtimes for both methods with different discretization intervals, and using a discretization interval that is $100 \times$ larger, our method gives comparable estimates, with a significant running time speed-up ($\sim 75\times$).

\begin{figure*}
	\centering
	\begin{subfigure}[b]{0.48\textwidth}
		\centering
		\includegraphics[height=0.7\textwidth]{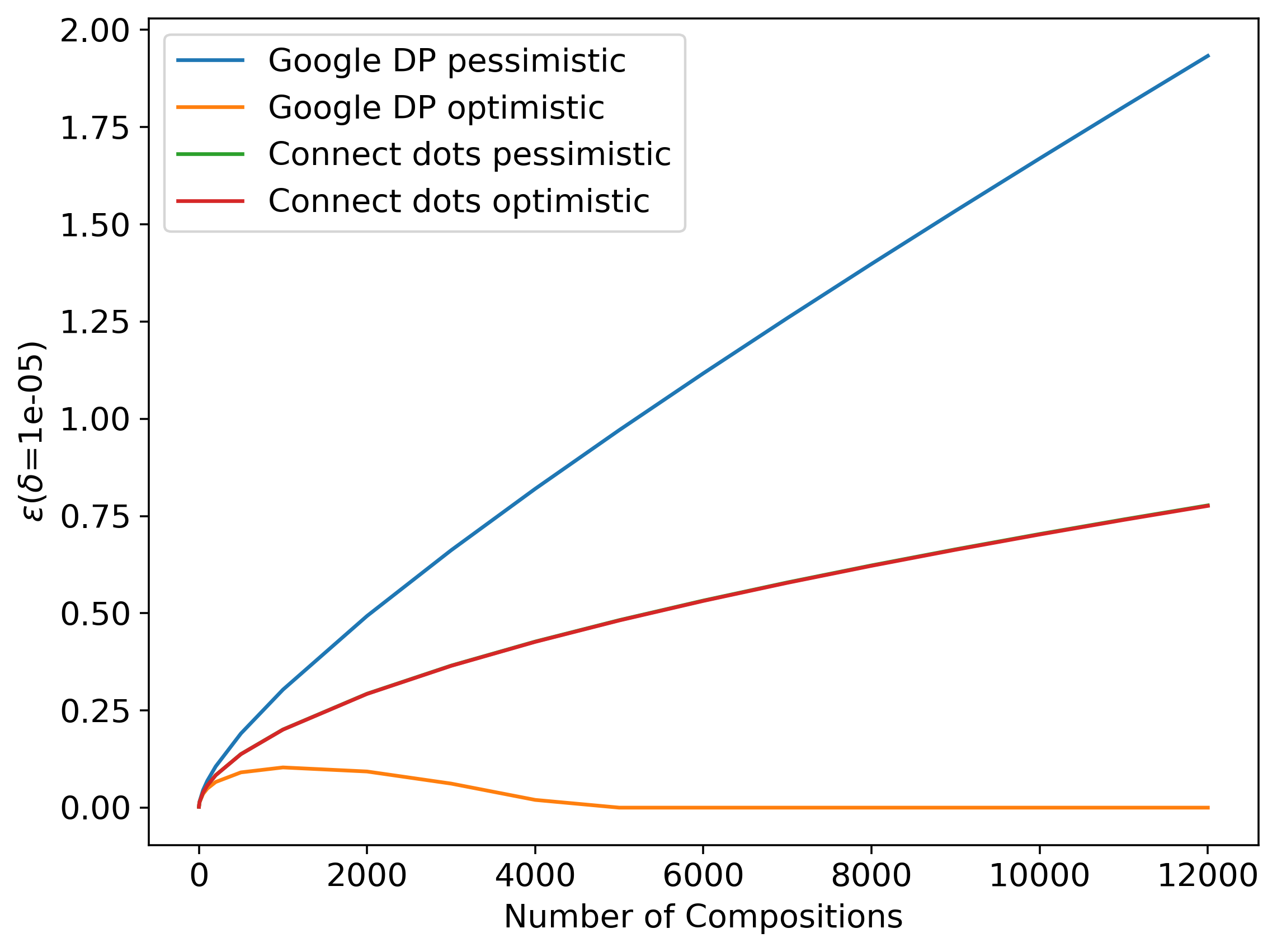}
		\caption{$\eps$ comparison with discretization interval 0.0002 for both this work and Google DP}
		\label{fig:subsamplaplace-pb-same-disc-epsilon}
	\end{subfigure}
	\hfill
	\begin{subfigure}[b]{0.48\textwidth}
		\centering
		\includegraphics[height=0.7\textwidth]{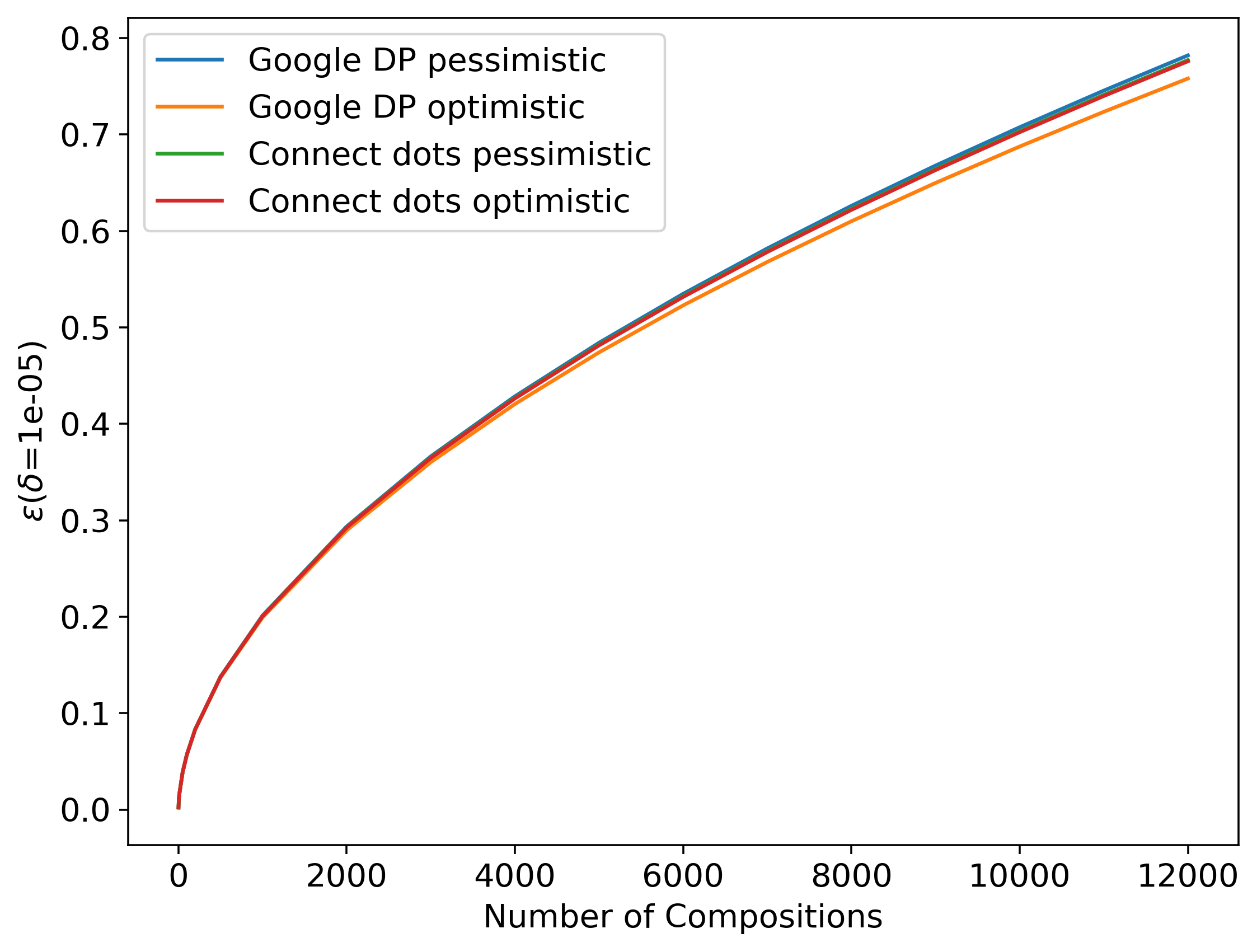}
		\caption{$\eps$ comparison with discretization interval 0.0002 for this work, 0.000002 for Google DP (shaded region indicates 25-75 percentile range across 20 independent runs)}
		\label{fig:subsamplaplace-pb-diff-disc-epsilon}
	\end{subfigure}
	\newline
	\begin{subfigure}[b]{0.48\textwidth}
		\centering
		\includegraphics[height=0.7\textwidth]{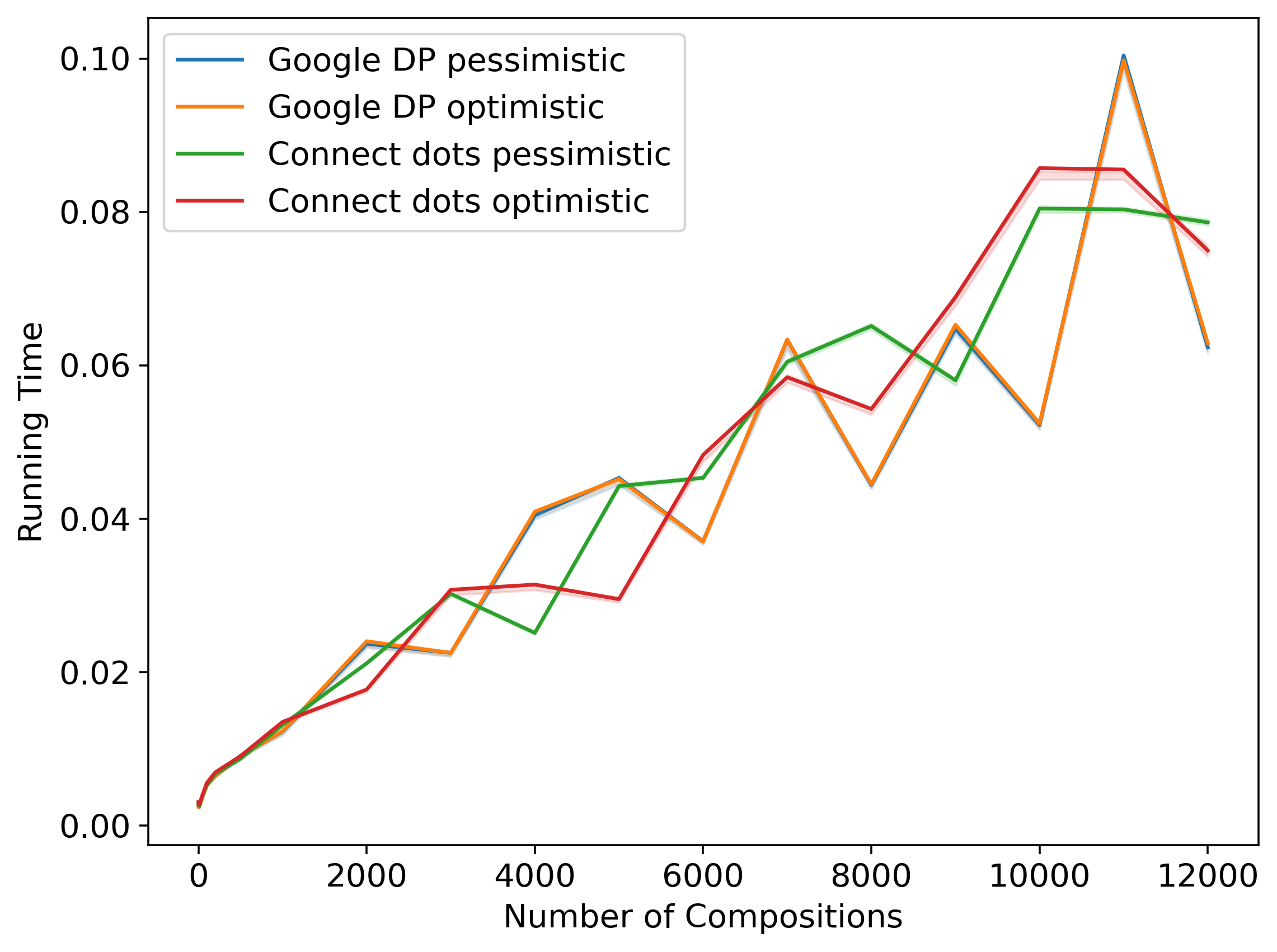}
		\caption{Running time comparison with discretization interval 0.0002 for both this work and Google DP (shaded region indicates 25-75 percentile range across 20 independent runs)}
		\label{fig:subsamplaplace-pb-same-disc-runtime}
	\end{subfigure}
	\hfill
	\begin{subfigure}[b]{0.48\textwidth}
		\centering
		\includegraphics[height=0.7\textwidth]{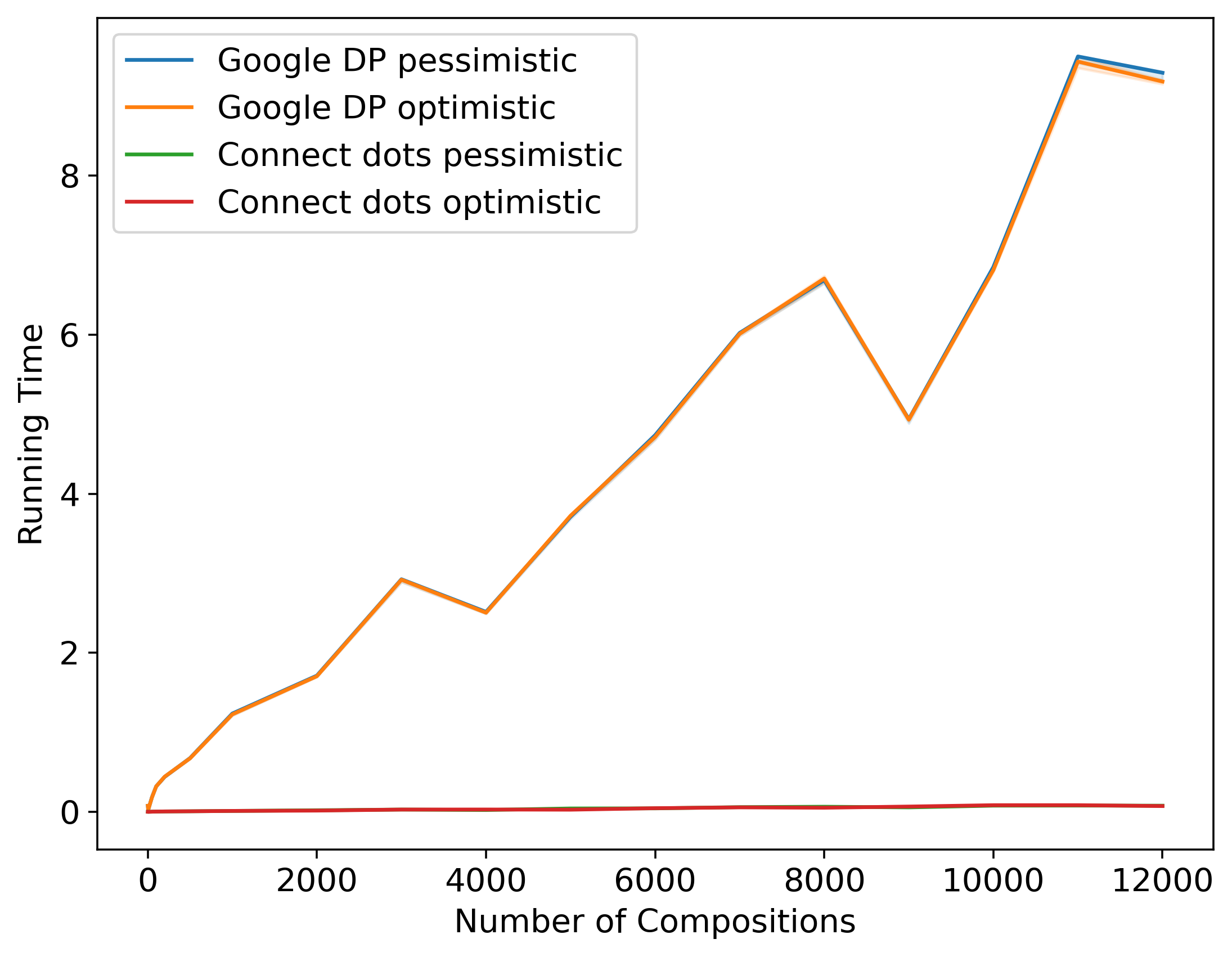}
		\caption{Running time comparison with discretization interval 0.0002 for this work, 0.000002 for Google DP}
		\label{fig:subsamplaplace-pb-diff-disc-runtime}
	\end{subfigure}
	\caption{Computation of pessimistic/optimistic estimates of the privacy parameter $\eps$ (for fixed parameter $\delta = 10^{-5}$) for self-composition of the Laplace mechanism with noise scale $1$, Poisson-subsampled with probability $0.01$, using the Google DP implementation of the PB approach~\cite{meiser2018tight} vs. our approach.%
		Figures \ref{fig:subsamplaplace-pb-same-disc-epsilon} and \ref{fig:subsamplaplace-pb-same-disc-runtime} compare the $\eps$'s and runtimes for both methods using the same discretization interval.
		Figures \ref{fig:subsamplaplace-pb-diff-disc-epsilon} and \ref{fig:subsamplaplace-pb-diff-disc-runtime} compare the $\eps$'s and runtimes for both methods with different discretization intervals.
	}
	\label{fig:subsamplaplace-pb}
\end{figure*}

\section{Inaccuracies from Floating-Point Arithmetic}

We briefly discuss the errors due to floating-point arithmetic. In our implementation, we use the default float datatype in python, which conforms to IEEE-754 ``double precision''. Roughly speaking, this means that the resolution of each floating-point number of $2^{-53} \approx 1.1 \cdot 10^{-16}$. The number of operations performed in our algorithms scales linearly with the support size of the (discretized) PLD, which is less than $10^4$ in all of our experiments. Therefore, a rough heuristic suggests that the numerical error for $\delta$ here would be less than $10^{-11}$. We stress however that this is just a heuristic and is not a formal guarantee: achieving a formal guarantee is much more complicated, e.g., our optimistic algorithm requires computing a convex hull and one would have to formalize how the numerical error from convex hull computation affects the final $\delta$.

Finally, we also remark that, while Gopi et al.~\cite[Appendix A]{gopi2021numerical} note that they experience numerical issues around $\delta \approx 10^{-9}$, we do not experience the same issues in our algorithm even for similar setting of parameters even for $\delta$ as small as $10^{-12}$.

\end{document}